\title{Local Access to Random Walks}
\titlerunning{Local Access to Random Walks}
\author{Amartya Shankha Biswas}{CSAIL, MIT, Cambridge MA, USA \and asbiswas@mit.edu }{}{}{Big George Ventures Fund, MIT-IBM Watson AI Lab and Research Collaboration Agreement No. W1771646, NSF awards CCF-1733808 and IIS-1741137}
\author{Edward Pyne}{Harvard University, Cambridge MA, USA \and epyne@college.harvard.edu }{}{}{}
\author{Ronitt Rubinfeld}{CSAIL, MIT, Cambridge MA, USA \and ronitt@mit.edu }{}{}{NSF awards CCF-2006664, CCF-1740751, IIS-1741137, Fintech@CSAIL. Part of this work was done while the author was participating in the program
on Probability, Geometry, and Computation in High Dimensions at the Simons Institute for the Theory of Computing.}
\authorrunning{A.\,S. Biswas, E. Pyne and R. Rubinfeld}
\keywords{sublinear time algorithms, random generation, local computation}
\newcommand{\N}{\mathbb{N}}
\newcommand{\R}{\mathbb{R}}
\newcommand{\Z}{\mathbb{Z}}
\newcommand{\E}{\mathbb{E}}
\newcommand{\I}{\mathbb{I}}
\newcommand{\ep}{\epsilon}
\newcommand{\ra}{\rightarrow}
\newcommand{\la}{\leftarrow}
\newcommand{\F}{F_{G}}
\newcommand{\A}{\mathcal{A}}
\newcommand{\D}{D_{G,\A,Q}}
\newcommand{\Q}{Q}
\DeclareMathOperator{\Cay}{Cay}
\newcommand{\U}[1][ ]{U_{G}^{#1}}
\DeclareMathOperator{\poly}{poly}
\DeclareMathOperator{\polylog}{polylog}
\DeclareMathOperator*{\argmin}{arg\,min}
\DeclareMathOperator{\MNom}{MNom}
\DeclareMathOperator{\MHGeom}{MHGeom}
\DeclareMathOperator{\BNom}{BNom}
\DeclareMathOperator{\HGeom}{HGeom}
\DeclareMathOperator{\PL}{PL}
\newcommand{\dist}{\textsc{d}}
\renewcommand{\sp}{\textsc{SP}}
\newcommand{\Gd}{\mathbf{G}(n,d)}
\newcommand{\Eg}{\E_{G\la \Gd}}
\newcommand{\rn}{\textsc{rand\_neighbor}}
\newcommand{\rv}{\textsc{rand\_vertex}}
\newcommand{\rp}{\textsc{rand\_path}}
\newcommand{\pt}{\textsc{position}}
\newtheorem{notation}[theorem]{Notation}
\begin{document}
\maketitle
\begin{abstract}
For a graph $G$ on $n$ vertices, naively sampling the position of a random walk of at time $t$ requires work $\Omega(t)$.
We desire \emph{local access} algorithms supporting $\pt(G,s,t)$ queries, which return the position of a random walk from some start vertex $s$ at time $t$,
where the joint distribution of returned positions is $1/\poly(n)$ close to the uniform distribution over such walks in $\ell_1$ distance.

We first give an algorithm for local access to walks on undirected regular graphs with $\widetilde{O}(\frac{1}{1-\lambda}\sqrt{n})$ runtime per query,
where $\lambda$ is the second-largest eigenvalue in absolute value.  Since random $d$-regular graphs are expanders with high probability,
this gives an $\widetilde{O}(\sqrt{n})$ algorithm for $G(n,d)$, which improves on the naive method for small numbers of queries.

We then prove that no that algorithm with subconstant error given probe access to random $d$-regular graphs can have runtime better than $\Omega(\sqrt{n}/\log(n))$ per query in expectation, obtaining a nearly matching lower bound. We further show an $\Omega(n^{1/4})$ runtime per query lower bound even with an oblivious adversary (i.e. when the query sequence is fixed in advance).

We then show that for families of graphs with additional group theoretic structure, dramatically better results can be achieved.
We give local access to walks on small-degree abelian Cayley graphs, including cycles and hypercubes, with runtime $\text{polylog}(n)$ per query. This also allows for efficient local access to walks on $\polylog$ degree expanders. We extend our results to graphs constructed using the tensor product (giving local access to walks on degree $n^\ep$ graphs for any $\ep \in (0,1]$) and Cartesian product.
\end{abstract}

\section{Introduction}
Given some huge random object that an algorithm would like to query, is it always necessary to generate the entire object up front?
For sublinear time algorithms, generating such a large object would dominate the runtime.
Recent works~\cite{BRY, GGN, NN, ELMR} demonstrated this was not always necessary, giving incremental query access to random objects such as random graphs, Dyck paths and graph colorings. These {\em local access} algorithms answer queries in a manner consistent with an instance of the random object sampled from the true distribution (or close to it).

In this work we explore the question of implementing local access to random walks.
Given a graph $G$ on $n$ vertices, taking a random walk of length $t$ requires time $\Omega(t)$.
Random walks are a critical primitive in many algorithms~\cite{MST,KM,FGRV}, including sublinear ones~\cite{GR,RV,AKP}.
But since $t$ can be large, one may want to generate only the required segments of the walk that are needed at the present time,
while ensuring the joint distribution of the returned segments is close to the true distribution of random walks.

As is common in the setting of sublinear and local algorithms,
we assume that we are given access to a graph $G$ on $n$ vertices through query oracles.
This allows us to work with graphs that are too large to fit in main memory,
and also results in running times that are not dominated by the size of the input.
Our goal is to implement {$\pt(G,s,t)$} queries, which return the position of a random walk starting from vertex $s$ at time $t$,
such that, given a sequence of queries, the joint distribution of returned positions is $1/\poly(n)$-close
to the true uniform distribution of those positions over random walks (in $\ell_1$ distance).
We desire per query runtime that is sublinear in $n$ and $t$, and preferably polylogarithmic in both.
In that case, locally generating \emph{all} vertices in a walk of length $t$ (in an arbitrary order) has total work within a polylog factor of the naive runtime.

Obtaining efficient random access for arbitrary graphs without knowing the entire structure seems to be a very difficult problem,
and therefore in this paper we restrict our attention to regular graphs. However, regular graphs include widely studied families such as random regular and Cayley graphs, both of which which we analyze.

\subsection{Our Results and Techniques}
\label{sec:our_results_and_techniques}

We begin by presenting a $\tilde{\mathcal O}(\frac{1}{1-\lambda}\sqrt{n})$ algorithm
that provides local access to undirected $d$ regular graphs with spectral expansion $\lambda$.
This algorithm maintains a collection of \emph{revealed} time values and the associated positions of the random walk,
where the revealed positions are a superset of the queried positions,
specifically the positions that were either queried directly, or were \say{determined} by the local access algorithm in order to answer a query.
The key idea in this algorithm is to handle queries in three different ways, based on the queried time relative to all other revealed times. The algorithm maintains the invariant that all determined times are either directly adjacent or separated by at least twice the mixing time.
Given query time $t$, if the new query time is further than twice the mixing time away from any other revealed time,
the algorithm simply returns a random vertex as the corresponding position and appends $t$ to the list of revealed locations.
Alternately, if $t$ is close to exactly one revealed time (either smaller or greater) and far from the other one,
then the algorithm simply simulates the entire walk between $t$ and the closer revealed time,
which takes $\tilde{\mathcal O}(\frac{1}{1-\lambda})$ steps.
Finally, the most interesting case is when there are two revealed positions on either side that are close to $t$, but not too close.
In this setting, the algorithm samples $\tilde{\mathcal O}(\sqrt{n})$ random walks from both revealed locations,
each of length half the interval, until a collision is found. Since walks of this length are well mixed, a collision occurs with high probability. 
The two colliding walks are then stitched together to interpolate the walk between the nearest revealed locations,
and then the corresponding position at time $t$ can be returned.

Moving forwards, we demonstrate that such a runtime is optimal in general. Specifically, our lower bound holds for the case of \emph{random} $d$-regular graphs, which provides some evidence that obtaining fast query algorithms for \say{large} classes is challenging.
Our lower bounds present adaptively chosen query sequences, and demonstrate that for the vast majority of these random graphs,
any algorithm making $\tilde{\mathcal O}(\sqrt{n}/\log n)$ \emph{random-neighbor} and \emph{random-vertex} probes to the underlying graph $G$
will fail to answer the queries in a consistent manner.
The main structural result used here is Lemma~\ref{lem:randstruct} which states that as long as the algorithm makes fewer than $\Theta(\sqrt{n})$ probes,
the revealed edges and vertices of the graph will form a forest, and additionally, no trees will ever be merged, with probability at least $0.995$.
This allows us to define a distance metric $d(\cdot, \cdot)$ where $d(u, v)$ is the distance between vertices $u$ and $v$
\emph{using only the revealed edges}, and is defined to be $\infty$ if no such path has been revealed.
The high level strategy in the lower bound is to first query the positions $v_0, v_e$ of the walk at time $t=0$ and $t=\sqrt{n}$ respectively,
and then adaptively query $\mathcal O(\log n)$ intermediate positions (where the query times may depend on the internal state of the algorithm),
until an inconsistency is found.
The hypothesis at this point is that the algorithm does not actually know of a path of the correct length between the two returned vertices.
Specifically, we show that either the revealed edges fail to connect the vertices in the limited number of available probes,
or the known path between them is shorter than $\sqrt{n}/20$.
In the first case, we can perform binary search for a location such that we end up with two reported positions which are adjacent in time,
but do not have an edge between them, thus yielding the inconsistency.
The latter case is more complicated, and requires some case analysis, but we are able to query adaptively
and always find two positions $v_i$ and $v_j$ (revealed at times $t_i$ and $t_j$), such that one of the following two outcomes hold:
either the distance is too large $d(v_i, v_j) > |t_i - t_j|$ or the distance is too small $d(v_i, v_j) < |t_i - t_j|/2$.
In the first outcome, if the distance is greater, we can again perform binary search to find adjacent positions in the walk that are not connected by an edge.
For the second outcome, we again perform binary search to find a short segment with unusually short distance, and then query all intermediate locations
to find a segment of the walk $\sigma_1, \sigma_2,\cdots, \sigma_l$ of length $\Theta(\log n)$, such that $d(\sigma_1, \sigma_l) < l/2$.
Note that we are then able to query all the locations in this segment because its length was reduced to $\mathcal O(\log n)$.
The fact that this segment has much smaller distance than the time interval implies that there is a significant amount of backtracking,
and we demonstrate that the probability of significant backtracking over a truly random walk is $o(1)$.

We also prove an oblivious lower bound of $\Omega(n^{1/4})$, for the case when the queries do not depend on the internal state of our algorithm.
In this case, we present the sequence of query times $(n^{1/4}, 2, 3, \cdots, n^{1/4}-1)$.
If the algorithm makes $\mathcal O(n^{1/4})$ graph probes, then the total number of probes is bounded above by $\Theta(\sqrt{n})$,
and therefore we use the same structural Lemma~\ref{lem:randstruct} mentioned above in order to derive a contradiction.

Finally, motivated by the lack of efficient local access to walks on general classes of graphs,
we turn to algorithms for local access on families of graphs with additional algebraic structure.
We give \emph{efficient} local access to walks on small-degree abelian Cayley graphs (for instance, cycles and hypercubes).
This also allows for efficient local access to walks on a class of $\polylog$ degree expanders.
We extend our results to graphs constructed using the tensor product (giving local access to walks on degree $n^\ep$ graphs for any $\ep \in (0,1]$) and Cartesian product.

\subsection{Related Work}
\label{sec:related_work}
The problem of providing local access to huge random objects was first proposed in \cite{GGN, GGNf}.
Subsequent work in \cite{NN} presented algorithms that provide access to sparse Erdos-Renyi $G(n, p)$ graphs through \textsc{All-Neighbors} queries,
as long as the number of queries is small and $p = \mathcal O(\poly(\log n))$.
Many of the results in these earlier works only guarantee that the generated random objects \emph{appear} to look random,
as long as the number of queries are bounded, usually by $\mathcal O(\poly(\log n))$.
More recently, in \cite{ELMR}, an implementation of random recursive trees and BA preferential attachment graphs are presented. Further, local access is given for the \textsc{Next-Neighbor} query that returns the neighbors of a vertex in lexicographic order,
which is useful for accessing graphs where the degree is not bounded.
Subsequently, \cite{BRY} presented implementations for random $G(n, p)$ graphs for any value of $n$,
while supporting \textsc{Next-Neighbor} as well as the newly introduced \textsc{Random-Neighbor} queries.
In~\cite{BRY}, algorithms are provided for accessing random walks on the line, random Dyck paths, and random colorings of a graph.
Implementing access to random walks on the line graph was motivated by the implementation of interval summable functions in \cite{GGN, gil}.

\subsection{Organization}
In Section~\ref{sec:preliminaries} we introduce notation and basic sampling tools. In Section~\ref{sec:expansion} give a local access algorithm for undirected regular graphs with runtime in terms of expansion.
In Section~\ref{sec:regular}, we first apply the previous algorithm to random $d$-regular graphs. We then prove a nearly matching lower bound with respect to an adaptive or non-oblivious adversary
(one who has access to the internal state of our algorithm),
and a weaker bound with respect to an oblivious adversary.
In Section~\ref{sec:abelian} we give local access algorithms for small degree abelian Cayley graphs, such as hypercubes and cycles.
In Appendix~\ref{app:algebraic}, we give local access algorithms for the tensor and Cartesian graph products.

\section{Preliminaries}\label{sec:preliminaries}
We first define terminology and introduce basic tools for sampling. 
We characterize the closeness of query responses to true random walks via $\ell_1$ distance, and use $\ell_2$ distance for spectral arguments.
\begin{notation}~
\begin{itemize}
    \item Given distributions $A,B$ over a set $[S]$, the $\ell_1$ distance between $A$ and $B$ is defined as $||A-B||_1 = \sum_{i=1}^S |A_i-B_i|$. The $\ell_2$ distance is defined as $||A-B||_2 =\sqrt{\sum_{i=1}^S(A_i-B_i)^2}$.
    \item For some set $S$, let $U_{S}$ denote the uniform distribution over $S$. Let $s \la U_S$ be an element drawn from this distribution.
\end{itemize}
\end{notation}

Next, we define notation for the distribution of random walks on fixed graphs.
\begin{notation}
    Given regular $G=(V,E)$ where $V=[n]$, $v_1,v_2 \in V$ and $t \in \N$:
    \begin{itemize}
        \item Let $\lambda(G)=\max_{x \in \R^n:x\perp 1}||Wx||_2/||x||_2$ where $W$ is the random walk matrix of $G$.
        \item Let $D_C(G,v_1,v_2,t)$ be the distribution over random walks of length $t$ from $v_1$ that end at $v_2$. As $G$ is regular, this is the uniform distribution over all satisfying walks.
        \item Let $\U[\ell]$ be the unconditional distribution of random walks from vertex $1$ of length $\ell$.
        \item For any finite set of times $S \in \N^k$, let $P_S(\U)$ be the distribution of the positions at times $S$ of random walks from vertex 1. We will measure the accuracy of an algorithm given time queries $S$ by bounding the $\ell_1$ distance of its responses to $P_S(\U)$. For notational convenience, let $P_i=P_{\{i\}}$.
    \end{itemize}
\end{notation}

We can then define the class of algorithms we consider. 
\begin{definition}
A \textbf{local access algorithm} $\A$ for a graph $G$ is an algorithm that,
given $\ep>0$ and $B \in \N$ at initialization and a sequence of queries $T=t_1,\ldots,t_r$ for $r\leq B$,
returns vertices $(v_{t_1},\ldots,v_{t_r})\la D$ such that $||D-P_T(\U)||_1\leq \ep$.
\end{definition}
We believe the useful regime to be setting $B=n^c$ and $\ep=n^{-c-c'}$ for desired constants $c,c'$, giving a polynomial approximation in $\ell_1$ distance. Moreover, one can implicitly restrict all query times to be below some polynomial threshold for the remainder of the paper. 
\begin{definition}
    We say a local access algorithm $\A$ is \textbf{efficient} if for $\ep=1/\poly(n),$ $B=\poly(n)$ and $t_i\leq \poly(n)$ for all $i$, the algorithm runs in time $\polylog(n)$ per query. 
\end{definition}
Our definition of efficiency is motivated by the fact that taking a random walk of length $t$ requires time $\Omega(t)$, so an efficient algorithm allows one to incrementally construct a random walk in an arbitrary query order with total runtime within a $\polylog$ factor of the naive algorithm.

\begin{definition}
A sequence of queries $T$ to a \textbf{local access algorithm} is considered to be \textbf{adaptive} with respect to a local access algorithm $\A$,
if the $i^{th}$ query is allowed to depend on the \textit{internal} state of the algorithm after query $i-1$.
Additionally, we call a local access algorithm \textbf{robust} if it is able to answer adaptive queries according to the correct distribution.
\end{definition}
For the remainder of the paper all presented algorithms will be \emph{robust} (as they will succeed with high probability over any sequence of queries), and we only consider the weaker notion in the context of lower bounds. 

There are a few subtleties with the definition. The first is that even in the non-adaptive case, future queries may depend on the vertices returned by the algorithm. The $i$th query of $T$ is thus a function of $v_{t_1},\ldots,v_{t_{i-1}}$ (in the non-adaptive case) and the state of $\A$ after query $i-1$ (in the adaptive case). Therefore $D$ is defined recursively by conditioning on the first $i$ query responses. 
However, at the end of any query sequence with queried times $T$, the projection $P_T$ is clearly independent of the order elements in $T$ were queried, so the distribution $P_T(\U)$ is still well defined.

Finally, we state a basic result on partial sampling.
\begin{proposition}\label{prop:errsadd}
Let $G$ be a graph and $T$ an ordered list of determined times in a walk on $G$. Let $V_T$ be the associated set of determined positions. Suppose $V_T$ has been sampled to within $\ep$ of the true distribution in $\ell_1$ distance. For any new query $t$, let $t_-<t<t_+$ be the closest low and high previously determined times. These are denoted the \textbf{bracketing queries}. Then:
\begin{enumerate}
    \item The distribution of $v_t$ conditioned on $v_{t_-},v_{t_+}$ is equal to the distribution conditioned on all previously determined vertices.
    \item If $v_{t}$ is sampled from a distribution $D$ where $||D-P_{t-t_-}(D_C(v_{t_-},v_{t_+},t_+-t_-))||_1\leq \delta$, then $(V_T,v_{t})$ is $\ep+\delta$ close to the true distribution. Furthermore, if the true distribution of $v_{t}$ is some deterministic function of $k$ distributions, an equivalent result holds for sampling each distribution to within $\delta/k$ and returning the deterministic function applied to these samples. 
\end{enumerate}
\end{proposition}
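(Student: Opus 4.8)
The plan is to obtain part~1 directly from the Markov property of the walk and part~2 from a short two-term hybrid argument; neither part is deep, so the real work is in stating the conditional-independence structure cleanly and keeping track of which distribution is which.

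For part~1 I would view the random walk from vertex $1$ as the Markov chain $X_0=1,X_1,X_2,\ldots$ with transition matrix $W$, so that $v_\tau=X_\tau$. Iterating the one-step Markov property shows that, conditioned on $X_{t_-}$ and $X_{t_+}$, the segment $(X_{t_-},X_{t_-+1},\ldots,X_{t_+})$ is distributed exactly as a walk of length $t_+-t_-$ from $X_{t_-}$ conditioned to end at $X_{t_+}$ — that is, as $D_C(v_{t_-},v_{t_+},t_+-t_-)$ — and is conditionally independent of $(X_0,\ldots,X_{t_-})$ and of $(X_{t_+},X_{t_++1},\ldots)$. Every determined time in $T$ other than $t_-$ and $t_+$ lies outside the open interval $(t_-,t_+)$, hence additionally conditioning on those positions leaves the law of $X_t$ unchanged; so the conditional law of $v_t$ given $\{v_\tau:\tau\in T\}$ equals its law given only $v_{t_-},v_{t_+}$, and this common law is the time-$(t-t_-)$ marginal $P_{t-t_-}(D_C(v_{t_-},v_{t_+},t_+-t_-))$. (When no larger determined time exists the identical argument with one-sided conditioning applies, yielding the unconditional walk law; a smaller determined time always exists since the walk starts at the fixed vertex $1$.)

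For part~2, let $\pi=P_T(\U)$ be the true law of $V_T$, let $\mu$ be the law actually produced by $\A$, so $\|\mu-\pi\|_1\le\ep$; for a realization $\vec v$ of $V_T$ let $\nu_{\vec v}:=P_{t-t_-}(D_C(v_{t_-},v_{t_+},t_+-t_-))$ be the true conditional law of $v_t$, which by part~1 depends on $\vec v$ only through its bracketing coordinates, and let $D_{\vec v}$ be the law $\A$ samples $v_t$ from, with $\|D_{\vec v}-\nu_{\vec v}\|_1\le\delta$. Part~1 also gives that the true joint law of $(V_T,v_t)$ factors as $\pi(\vec v)\,\nu_{\vec v}(w)$, which is precisely $P_{T\cup\{t\}}(\U)$. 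Interpolating through the hybrid $\eta(\vec v,w)=\mu(\vec v)\,\nu_{\vec v}(w)$ and using the triangle inequality,
\[
\bigl\|\mu'-\pi'\bigr\|_1 \le \bigl\|\mu'-\eta\bigr\|_1 + \bigl\|\eta-\pi'\bigr\|_1 = \sum_{\vec v}\mu(\vec v)\,\|D_{\vec v}-\nu_{\vec v}\|_1 \;+\; \sum_{\vec v}\bigl|\mu(\vec v)-\pi(\vec v)\bigr| \le \delta+\ep,
\]
where $\mu'(\vec v,w)=\mu(\vec v)D_{\vec v}(w)$, $\pi'(\vec v,w)=\pi(\vec v)\nu_{\vec v}(w)$, and the second sum uses $\sum_w\nu_{\vec v}(w)=1$.

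For the last sentence of part~2, if the true law $\nu_{\vec v}$ arises as $f(\rho_1,\ldots,\rho_k)$ in the sense that $v_t=f(y_1,\ldots,y_k)$ for independent $y_i\sim\rho_i$, I would replace each $\rho_i$ by a $(\delta/k)$-close $\widetilde{\rho_i}$: by subadditivity of $\ell_1$ distance under product distributions (a hybrid over the $k$ coordinates) the joint input law moves by at most $\sum_i\|\widetilde{\rho_i}-\rho_i\|_1\le\delta$, and since applying the deterministic map $f$ cannot increase $\ell_1$ distance, the resulting law of $v_t$ is within $\delta$ of $\nu_{\vec v}$; plugging this into the bound above finishes it. I expect the only genuine subtlety — and the one point to treat carefully — to be the adaptive setting: ``the true distribution of $v_t$'' must be read as the conditional law given the already-revealed positions, and part~1 is exactly what certifies that this conditional law is a function of the two bracketing positions alone, so the per-query errors add to $\ep+\delta$ rather than compounding multiplicatively.
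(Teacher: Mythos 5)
The paper states this proposition without proof, treating it as a basic fact about partial sampling, so there is no official argument to compare against; your write-up supplies exactly the intended justification and is correct. Part 1 via the Markov property (conditional independence of the interior segment $(X_{t_-},\ldots,X_{t_+})$ from the positions outside the bracket, given the two bracketing positions) and part 2 via the two-term hybrid $\eta(\vec v,w)=\mu(\vec v)\,\nu_{\vec v}(w)$ with the triangle inequality are both sound, as is the subadditivity-plus-data-processing step for the deterministic-function clause, and your closing remark about why errors add rather than compound in the adaptive setting is precisely how the proposition is used throughout the paper. One small caveat: you prove the last clause for \emph{independent} component distributions, whereas the paper later invokes it (e.g.\ in the Cartesian-product algorithm of Lemma~\ref{alg:cartesian}) with sequentially sampled components where later distributions depend on earlier samples; the identical hybrid over the $k$ coordinates, applied to conditional laws, covers that case with no change to the $\delta$ bound, so it is worth phrasing the step that way.
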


In effect, this gives us the ability to only focus on bracketing queries while analyzing the closeness of a local access algorithm to uniform. 

\section{Local Access Via Spectral Expansion}\label{sec:expansion}
We first give an algorithm for undirected regular graphs with $\widetilde{O}(\frac{1}{1-\lambda}\sqrt{n})$ work per query. This is sublinear for small numbers of queries on graphs with good expansion, but is far from $\polylog$ work per query.
\begin{restatable}{theorem}{Regular}\label{alg:regular}
Fix $\ep>0$ and $B \in \N$ and $\lambda\geq 0$. Given $\rn$ and $\rv$ probe access to an undirected $d$-regular graph $G$ on $n$ vertices with $\lambda(G)\leq \lambda$, there is a deterministic local access algorithm which uses $O(\log(nt)\frac{1}{1-\lambda}\log(nB/\ep))$ additional space and $O(\sqrt{n}\cdot \log(nt)\frac{1}{1-\lambda}\log^2(nB/\ep))$ time and working space per query.
\end{restatable}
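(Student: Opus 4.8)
The plan is to implement the three-case strategy already sketched in the introduction, maintaining a sorted list $T$ of determined times together with the associated positions $V_T$, subject to the invariant that any two consecutive determined times are either adjacent (differ by $1$) or separated by at least $2\tau$, where $\tau = \Theta(\frac{1}{1-\lambda}\log(nB/\ep))$ is chosen so that a walk of length $\tau$ is within $\ell_1$ distance $\poly(\ep/(nB))$ of uniform; this follows from the standard spectral mixing bound $\|W^\tau x - \pi\|_2 \le \lambda^\tau$ together with the conversion from $\ell_2$ to $\ell_1$ on $n$ vertices. Given a new query $t$ with bracketing times $t_- < t < t_+$ (allowing $t_\pm = \pm\infty$ if none exists), the algorithm branches: (i) if $t$ is at distance $> 2\tau$ from both $t_-$ and $t_+$, return a uniformly random vertex via $\rv$ and insert; (ii) if $t$ is within $2\tau$ of exactly one bracket, say $t_-$, simulate the walk step by step from $v_{t_-}$ for $t - t_-$ steps using $\rn$, insert all $\le 2\tau$ intermediate positions; (iii) if $t$ is within $2\tau$ of both $t_-$ and $t_+$, run the collision subroutine: sample $\Theta(\sqrt{n})$ independent walks of length $\lceil (t_+ - t_-)/2 \rceil$ from each of $v_{t_-}$ and $v_{t_+}$, find a collision of a forward walk with a (reversed) backward walk at the midpoint, stitch them, fill in all intermediate positions, and insert. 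In every case, the returned $v_t$ together with the newly inserted points restores the invariant.

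The correctness argument proceeds query-by-query using Proposition~\ref{prop:errsadd}. By part~1 of that proposition, conditioning on the bracketing pair is the same as conditioning on the full revealed history, so it suffices to show that in each case the joint distribution of the newly inserted positions is close, in $\ell_1$, to $D_C(G, v_{t_-}, v_{t_+}, t_+ - t_-)$ projected onto the inserted times. Case (i) is immediate: when both gaps exceed $2\tau$, the true conditional distribution of $v_t$ is $\poly(\ep/(nB))$-close to uniform by mixing (from whichever side is finite, or unconditionally if both are infinite), and we return exactly uniform. Case (ii) is exact conditioned on the near bracket, and the far bracket (if present) is irrelevant up to the same mixing error since $t$ is $>2\tau$ from it; so simulating the walk verbatim from $v_{t_-}$ samples exactly the right conditional distribution up to that negligible error. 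Case (iii) is the crux: I need to show that the collision-and-stitch procedure samples a uniformly random length-$(t_+-t_-)$ walk from $v_{t_-}$ to $v_{t_+}$, up to small error. The key observations are that (a) reversibility of the walk on a regular graph means a backward walk from $v_{t_+}$ is distributed as a forward walk; (b) conditioned on the midpoint value $m$, the forward half and backward half are independent uniform walks; (c) the distribution of the midpoint $m$ of a uniform $v_{t_-}\to v_{t_+}$ walk is proportional to $p(v_{t_-}, m)\cdot p(m, v_{t_+})$ where $p$ is the half-length transition probability, and since each half has length $\ge \tau$, both factors are $(1\pm\poly(\ep/(nB)))\cdot \frac1n$, so $m$ is near-uniform; (d) therefore sampling $\Theta(\sqrt n)$ forward endpoints and $\Theta(\sqrt n)$ backward endpoints, each near-uniform and near-independent, yields a collision with probability $1 - \poly(\ep/(nB))$ by a birthday argument, and conditioned on a collision at $m$ the stitched walk is (near-)uniformly distributed among $v_{t_-}\to v_{t_+}$ walks through $m$, which combined with the near-uniform law of $m$ gives the result. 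One must also handle the case $t_+ - t_- = 1$ (the brackets are adjacent) separately: then no interpolation is needed and the invariant is trivially maintained. Summing the per-query errors via Proposition~\ref{prop:errsadd} over at most $B$ queries, and choosing the hidden constants in $\tau$ and in the $\Theta(\sqrt n)$ collision bound large enough, gives total error $\le \ep$.

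The resource bounds are then a bookkeeping matter: each query inserts $O(\tau) = O(\frac{1}{1-\lambda}\log(nB/\ep))$ new times, each storing a vertex label of $O(\log n)$ bits (and $O(\log(nt))$ for the time value), so the dominant time cost is case (iii), which performs $\Theta(\sqrt n)$ walks of length $O(\tau)$, i.e. $O(\sqrt n \cdot \tau)$ invocations of $\rn$ plus an $O(\sqrt n)$ hash-table lookup to detect collisions, for $O(\sqrt n \cdot \log(nt)\frac{1}{1-\lambda}\log^2(nB/\ep))$ time once one accounts for the word size $O(\log(nt))$; the working space per query is $O(\sqrt n)$ stored endpoints, again matching the claimed bound up to the stated logarithmic factors. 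The algorithm is deterministic in the sense required (it uses only the supplied probe oracles and no internal randomness beyond what the oracles provide), and it is robust/adaptive because the high-probability collision guarantee holds pointwise for every bracket pair regardless of how the query times were chosen.

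The step I expect to be the main obstacle is case (iii): making precise that the stitched walk is genuinely uniform over $v_{t_-}\to v_{t_+}$ walks and not merely "looks random." The subtlety is that the collision is found by scanning, so the particular midpoint $m$ at which the first collision occurs is a biased draw among the sampled midpoints; one has to argue that because every forward and backward endpoint is individually $\poly(\ep/(nB))$-close to uniform and the whole batch is mutually near-independent, the law of the first-collision midpoint is still near the true midpoint law $\propto p(v_{t_-},m)p(m,v_{t_+})$, and that conditioned on it the two stitched halves are near-uniform and near-independent. A clean way to do this is to first argue the idealized version (endpoints exactly uniform and independent, transition probabilities exactly $1/n$) gives exactly the right distribution by a symmetry/counting argument, then bound the total-variation cost of replacing the idealized batch by the real one using part~2 of Proposition~\ref{prop:errsadd} with $k = \Theta(\sqrt n)$ and per-sample error $\poly(\ep/(nB))$ chosen so that $k$ times the per-sample error is still negligible.
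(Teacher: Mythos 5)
Your plan is essentially the paper's own proof: the same invariant (consecutive determined times either adjacent or at least $2k$ apart with $k=\Theta(\frac{1}{1-\lambda}\log(nB/\ep))$), the same three cases, the same collision-and-stitch using reversibility on regular graphs, and the same midpoint-first correctness analysis reduced to bracketing queries via Proposition~\ref{prop:errsadd}, with error accumulated as $\ep/B$ per query. One quantitative correction: $\Theta(\sqrt n)$ walks per side only give constant collision probability, so to drive the failure probability down to $O(\ep/B)$ you need roughly $\sqrt n\log(nB/\ep)$ attempts (the paper runs $2\sqrt n\log(B/\ep)$ iterations, charging a failed search to the $\ell_1$ error); this, rather than word size alone, is where the second $\log(nB/\ep)$ factor in the stated per-query runtime comes from.
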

\begin{proof}
Let $k=O(\frac{1}{1-\lambda}\log(B n/\ep))$ be the smallest integer such that $\lambda(G^{k})\leq \ep/n^2B$.\\

The algorithm maintains a sorted list of previously determined positions (a superset of previous queries) and associated vertices $T=t_1<\dots <t_r$, $V_T=v_{t_1},\ldots,v_{t_r}$. Between queries, we maintain a constraint that for all $t_i,t_{i+1}$ it is either the case that $t_{i+1}-t_i=1$ (so the queries are direct neighbors) or $t_{i+1}-t_i\geq 2k$.\\

For a new query $t$, let $t_-\leq t<t_{+}$ be the bracketing queries, where $t_{+}=\infty$ if the constraint is unidirectional. For notational convenience, let $\rp(G,v,d)$ be a sequence of vertices obtain from making $d$ successive $\rn$ calls starting at vertex $v$. Furthermore let $l=t-t_-$ and $r=t_+-t$.
\begin{enumerate}
    \item\label{item:case1} If $l > 2k$ and $r>2k$, set $v_t=\rv(G)$.
    \item If $l \leq 2k$ and $r> 2k$, determine the vertices at $[t_{-}+1,t]$ as $\rp(G,v_{t_-},l)$. If $l> 2k$ and $r\leq 2k$, determine the vertices at $\{t_{+}-1,t_{+}-2,\ldots,t\}$ as $\rp(G,v_{t_+},r)$. 
    \item If neither condition is satisfied, we have $2k<|t_{+}-t_-|\leq 4k$. Let $d=\lfloor (t_{+}-t_-)/2\rfloor$ and let $L,R$ be empty sets of walks of length $d$ from $v_{t_-}$ and $v_{t_{+}}$ respectively. Let COL be the event a path from $L$ and $R$ share an endpoint.
    \begin{enumerate}
        \item Let $L \la L\cup \rp(G,d,v_{t_-})$.
        \item Let $R \la R\cup \rp(G,d,v_{t_{+}})$.
        \item If $COL$, go to Phase II. 
        \item After $2\sqrt{n}\log(B/\ep)$ iterations determine the vertices at $[t_-,t_{+}]$ as an arbitrary path, else repeat.
    \end{enumerate}
    In Phase II we have paths $p_l,p_r$ sharing an endpoint. If there are multiple colliding paths, choose the first to occur. Let the determined vertices $[t_-,t_{+}]$ be $p_l\bar{p_r}$ where $\bar{p}$ is the reverse of path $p$.
\end{enumerate}
Since the algorithm determines at most $2k$ timesteps per query (in Case III), the incremental persistent storage is at most $(\log(t)+\log(n))2k=O(\log(nt)\frac{1}{1-\lambda}\log(B n/\ep))$. The runtime is immediate from the description.

We now show the algorithm is $\ep$-close to uniform. Slightly abusing notation, let $T=t_1,\ldots,t_s$ be the ordered list of determined times \textit{after query $s$}. Let the $s+1$st query be denoted $t$ and write the bracketing vertices of $t$ as $t_-<t<t_+$. As before, let $l=t-t_-$ and $r=t_+-t$. By Proposition~\ref{prop:errsadd}, showing the distribution of the vertices decided at query $s+1$ are $\ep/B$-close to the true conditional distribution given the bracketing vertices $\{v_{t_-},v_{t_+}\}=\Phi$ suffices to show the algorithm is $\ep$ close via a union bound over the $B$ queries.
\begin{enumerate}
    \item If $t$ was decided in Case~\ref{item:case1}, we have $\Pr(v_t=v|\Phi)=1/n$. Let $W$ be the random walk matrix of $G$. Then
    \begin{align*}
        \Pr(P_{t}(D_C(G, \Phi)) = v) &= \frac{W^l_{v_{t_-},v}W^r_{v,v_{t_+}}}{\sum_{w\in V}W^l_{v_{t_-},w}W^r_{w,v_{t_+}}}
        \leq \frac{(1/n+\ep/B n)^2}{n(1/n-\ep/B n)^2}
        \leq \frac{1}{n}(1+O(\ep/B)).
    \end{align*}
    With a nearly identical lower bound. Taking a union bound over all $n$ potential vertices and adjusting $\ep$ by a constant factor completes the proof.
    \item If $t$ was decided in Case 2, we first decide the position at the \textit{end} of the random walk (abusing notation assume $t$ was this), and then assign the connecting walk. This is because fixing the endpoint $v_t$, the distribution of the decided path $v_{t_-} \ra v_t$ is clearly equal to the true distribution, since it was sampled via an unconstrained random walk. In the case where the walk was sampled $v_{t_+}\ra v_t$, since $G$ is undirected and regular the probability of a walk is equal to that of the reversed walk, so the decided path remains truly uniform. Then the analysis of deciding $v_t$ is nearly identical to Case I.
    
    \item If $t$ was decided in Case 3, since the distribution of left and right endpoints are $1/n^2$ close to uniform in $\ell_2$ distance, by a simple collision probability argument $\A$ fails to find a collision with probability $O(\ep/B)$ and loses an equal amount in $\ell_1$ distance. Otherwise, we first decide the position at the \textit{midpoint} of the random walk (abusing notation assume $t$ was this), and then assign the connecting walks. This is because fixing a collision (and thus endpoint) at $v_t$, the distribution of the decided paths $v_{t_-}\ra v_t$ and $v_t\ra v_{t_+}$ are equal to the true distribution, since they were sampled via unconstrained random walks.
    Then the analysis of deciding $v_t$ is nearly identical to Case I.
    \qedhere
\end{enumerate}
\end{proof}

\section{Random Regular Graphs}\label{sec:regular}
Next, we study the question of implementing access to \emph{random regular graphs}, which have the property that for all $d\geq 3$, the probability a random $d$-regular graph is an expander tends to 1. This implies that Theorem~\ref{alg:regular} composed with the set of random regular graphs achieves runtime $\widetilde{O}(\sqrt{n})$ per query. 
In fact, this is nearly the best possible runtime, as we prove no local access algorithm given probe access to random regular graphs making $o(\sqrt{n}/\log(n))$ probes per query achieves achieves subconstant error on \emph{adaptive} query sequences.
Furthermore, no local access algorithm making $o(n^{1/4})$ probes per query achieves subconstant error on \emph{non-adaptive} (in fact fixed in advance) query sequences. We first introduce notation for random $d$-regular graphs.
\begin{definition}
    Let $\Gd$ be the uniform distribution over $d$-regular graphs on $n$ vertices.
    \begin{itemize}
        \item For $d$ odd, we implicitly restrict to even $n$ when taking limits.
        \item For a set of edges $S=\{(v_1,w_1),\ldots,(v_k,w_k)\}$, let $\Gd\cap S$ be the uniform distribution over $d$-regular graphs on $n$ vertices containing all edges in $S$.
        Note that for certain $S$ (for instance, any containing a self-loop), this set is empty.
    \end{itemize} 
\end{definition}
For the remainder of the section we treat $d$ as a constant while $n$ trends to infinity, so $O$ notation sometimes hides factors dependent on $d$. Furthermore we assume $d\geq 3$, since the other two cases are degenerate. We now state informal versions of the main results. First, a sublinear algorithm for $\Gd$ obtained as a consequence of Theorem~\ref{alg:regular}.

\begin{restatable}{corollary}{RandomUB}\label{cor:random-ub}
There exists a deterministic local access algorithm $\A$ with time per query $O(\sqrt{n}\log^3(n))$ that, given $\rn$ and $\rv$ probe access to $\Gd$, satisfies for any adaptive query sequence $Q$ with $|Q|\leq \poly(n),$
$$\E_{G\la \Gd}||\D-P_{Q}(\U)||_1=o_n(1).$$
where $\D$ is the distribution of $\A$'s responses given probe access to $G$ over sequence $\Q$.
\end{restatable}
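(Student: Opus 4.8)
The plan is to obtain this as a direct corollary of Theorem~\ref{alg:regular}, using the fact that a random $d$-regular graph is a good spectral expander with probability $1-o_n(1)$. First I would fix a constant $\lambda_0 = \tfrac{2\sqrt{d-1}}{d} + \delta < 1$ for a small constant $\delta>0$; by Friedman's theorem on the second eigenvalue of random regular graphs (any quantitative bound showing $\lambda(G)$ is bounded away from $1$ with probability $1-o_n(1)$ suffices), $\Pr_{G\la\Gd}[\lambda(G)\leq \lambda_0] = 1-o_n(1)$ for $d\geq 3$. The algorithm $\A$ is then simply the algorithm of Theorem~\ref{alg:regular} instantiated with spectral bound $\lambda_0$, error parameter $\ep = 1/n$, and query budget $B$ equal to the polynomial upper bound on $|Q|$; note that $\rn$ and $\rv$ probe access to $\Gd$ is precisely the access model Theorem~\ref{alg:regular} requires, and the algorithm it provides is already deterministic.

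For the running time, since $\lambda_0$ is a constant bounded away from $1$ we have $\tfrac{1}{1-\lambda_0} = O(1)$, and since all query times $t$, the budget $B$, and $1/\ep$ are $\poly(n)$, the per-query bound $O(\sqrt n\,\log(nt)\,\tfrac{1}{1-\lambda_0}\,\log^2(nB/\ep))$ from Theorem~\ref{alg:regular} collapses to $O(\sqrt n \log^3 n)$, matching the claim; the additional space is similarly $\polylog(n)$ per query. Crucially, this running time is worst-case over every graph — the expansion hypothesis is used only for correctness, not for cost — so $\A$ need not (and does not) attempt to verify that its input is a good expander.

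For correctness I would decompose the expectation over $G\la\Gd$ according to the event $\{\lambda(G)\leq\lambda_0\}$. On this event, which has probability $1-o_n(1)$, Theorem~\ref{alg:regular} guarantees $\|\D - P_{Q}(\U)\|_1 \leq \ep = 1/n$; moreover this guarantee applies to adaptive query sequences, because the correctness proof of Theorem~\ref{alg:regular} is a per-query argument (bounding the conditional distribution of each newly determined vertex given the bracketing vertices, via Proposition~\ref{prop:errsadd}) followed by a union bound over the $B$ queries, and makes no assumption about how query times are selected. On the complementary event, which has probability $o_n(1)$, I bound $\|\D - P_{Q}(\U)\|_1 \leq 2$ trivially. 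Combining, $\E_{G\la\Gd}\|\D - P_{Q}(\U)\|_1 \leq (1-o_n(1))\cdot\tfrac1n + o_n(1)\cdot 2 = o_n(1)$.

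There is no serious obstacle here — this is genuinely a corollary — but the one point I would be careful to state explicitly is exactly this last decomposition: because $\A$ is handed the bound $\lambda_0$ at initialization and runs accordingly on whatever graph it is given, its accuracy is only guaranteed on the expander event, and the $o_n(1)$ failure probability of Friedman's bound must be (and can be) absorbed directly into the target $o_n(1)$ error. Beyond this dichotomy, no further case analysis on the structure of $G$ is needed.
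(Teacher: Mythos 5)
Your proposal is correct and follows essentially the same route as the paper: instantiate the algorithm of Theorem~\ref{alg:regular} with a constant spectral bound promised by Friedman's theorem (the paper uses $\lambda\leq .95$ via Lemma~\ref{lem:exp}), note the runtime bound holds unconditionally while correctness holds on the expander event, and absorb the $o_n(1)$ probability of poor expansion into the expected $\ell_1$ error. Your explicit trivial bound of $2$ on the bad event and the remark that the per-query correctness argument of Theorem~\ref{alg:regular} is oblivious to how query times are chosen are just slightly more careful spellings of what the paper asserts.
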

Next, an $\widetilde{\Omega}(\sqrt{n})$ lower bound against robust local access algorithms (i.e. those that face adaptive sequences).
\begin{restatable}[Informal Statement of Theorem~\ref{thm:lb-formal}]{theorem}{RandomLB}\label{thm:random-lb}
There is a constant $n_0$ and an adaptive sequence $\Q$ such that any robust local access algorithm $\A$ given $\rn$ and $\rv$ probe access to random $d$-regular graphs for $n\geq n_0$ with parameters $(\ep,B)=(.99, O(\log(n))$ makes $\Omega(\sqrt{n}/\log(n))$ graph probes per time query of $\Q$ in expectation.
\end{restatable}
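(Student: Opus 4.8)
The plan is to exhibit an adaptive adversary that forces any would-be fast local access algorithm into a provable inconsistency, using the structural Lemma~\ref{lem:randstruct} as the foundation. Recall that lemma guarantees: as long as the algorithm makes fewer than $\Theta(\sqrt{n})$ total probes to $G \la \Gd$, the revealed subgraph (vertices and edges exposed by $\rn$/$\rv$ probes) is a forest and no two trees are ever merged, with probability at least $0.995$. We therefore set the probe budget so that an algorithm running in $o(\sqrt{n}/\log n)$ time per query, answering $O(\log n)$ queries, makes $o(\sqrt n)$ probes total, and we condition on the good event of that lemma throughout. On this event we get a well-defined metric $d(\cdot,\cdot)$ given by shortest paths in the revealed forest ($\infty$ if none exists), and crucially this metric can only grow as more is revealed, and the claimed length of any revealed path is exactly its graph distance.

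First I would have the adversary query $v_0 \la \pt(G,1,0)$ and $v_e \la \pt(G,1,\sqrt n)$. The key observation is that a truly random walk of length $\sqrt n$ on a random $d$-regular graph, with probability $1 - o(1)$, lands at a vertex $v_e$ whose actual distance from $v_0$ is at least $\sqrt n / 20$ (a backtracking/anticoncentration estimate: the walk does not loop back too much), while the algorithm, having made only $o(\sqrt n)$ probes, can have revealed at most $o(\sqrt n)$ edges along any path. So either $d(v_0,v_e)=\infty$ or $d(v_0,v_e)$ is far from the ``correct'' value $\sqrt n$. Now run a binary-search phase: repeatedly query the midpoint of a time interval $[t_i,t_j]$ whose endpoints have already been revealed, maintaining the invariant that the revealed positions at the two ends are ``inconsistent'' in the sense that $d(v_i,v_j)>|t_i-t_j|$ or $d(v_i,v_j)<|t_i-t_j|/2$. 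Each midpoint query either preserves the invariant on a halved interval or produces a shorter witness, and since the initial interval has length $\sqrt n$, after $O(\log n)$ queries we reach an interval of length $\Theta(\log n)$. At that point the adversary queries every intermediate position (affordable because the segment is only $O(\log n)$ long), obtaining $\sigma_1,\ldots,\sigma_\ell$ with either two time-adjacent positions joined by no revealed edge — an immediate inconsistency, since in a true walk consecutive positions must be adjacent in $G$ — or a segment with $d(\sigma_1,\sigma_\ell) < \ell/2$, i.e. ``significant backtracking.''

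To close the argument I would show that the second outcome is also a.s. inconsistent: over a genuine random walk of length $\ell = \Theta(\log n)$ on a random $d$-regular graph, the probability that the endpoints are within distance $\ell/2$ is $o(1)$ (again an anticoncentration / non-backtracking-walk estimate on the local tree neighborhood — locally the graph looks like the $d$-regular tree with probability $1-o(1)$, on which a walk of length $\ell$ rarely returns close to its origin). Thus the algorithm, to answer $\Q$ with error $\ep = 0.99$, would have to report a joint distribution that places $\Omega(1)$ mass on configurations that the true distribution $P_\Q(\U)$ assigns mass $o(1)$, contradicting $\|\D - P_\Q(\U)\|_1 \le 0.99$ for $n$ large. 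Taking a union bound over the $O(1)$ ``bad'' events (failure of Lemma~\ref{lem:randstruct}, failure of the two backtracking estimates) and noting each has probability bounded below $1$, the expected number of probes per query must be $\Omega(\sqrt n/\log n)$. The main obstacle I anticipate is making the case analysis in the binary-search phase airtight: one must carefully track how the inconsistency invariant $d(v_i,v_j)>|t_i-t_j|$ versus $d(v_i,v_j)<|t_i-t_j|/2$ transforms under bisection (the midpoint's revealed position could satisfy neither sub-inequality individually yet the triangle inequality forces one of them on a subinterval), and one must ensure the adversary's adaptive choices never blow the probe budget — this is where the ``no trees merge'' half of Lemma~\ref{lem:randstruct} is essential, since it prevents the algorithm from cheaply discovering a short path by accidental collisions, and it keeps $d(\cdot,\cdot)$ monotone so that an inconsistency, once created, cannot be ``repaired'' by later reveals.
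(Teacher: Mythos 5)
Your overall plan matches the paper's (condition on the forest/no-merge event of Lemma~\ref{lem:randstruct}, force either a time-adjacent pair with no connecting edge or a $\Theta(\log n)$-segment with excessive backtracking, and compare against a distinguishing statistic that true walks rarely satisfy), but the two places where the real work happens are not correct as you state them. First, the bisection step for the ``known path too short'' case does not go through by the triangle inequality: if $d(v_x,v_y)<(y-x)/2$ and you query the midpoint $m$, it is entirely possible that $d(v_x,v_m)\geq(m-x)/2$ \emph{and} $d(v_m,v_y)\geq(y-m)/2$, because in the revealed forest $v_m$ may hang off the simple path $\sp(v_x,v_y)$ by a long branch, inflating both half-distances while the direct path stays short (the triangle inequality only lower-bounds $d(v_x,v_m)+d(v_m,v_y)$, which is the wrong direction). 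The paper handles exactly this with a relaxed invariant $d(v_x,v_y)<(y-x)/10+20\log n+2$ and a separate ``spike'' argument (Lemma~\ref{rec:spike}): when the branch length $r_m\geq 20\log n$, one instead forces the algorithm to place the branch vertex $w$ at two times at least $40\log n$ apart, giving a segment with path length $0$. Without this case, your invariant cannot be maintained and the recursion stalls. Second, two returned positions at consecutive times with no \emph{revealed} edge between them is not ``an immediate inconsistency'': the algorithm may have guessed a pair that happens to be an edge of $G$ it never probed. You need the conditional-probability bound of Lemma~\ref{lem:guessprob} (given the $\leq\sqrt n$ revealed edges, any unprobed pair is an edge with probability $O(1/n)$) to conclude the guess fails with probability $1-o(1)$ over $\mathbf{G}(n,d)\cap S$.

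There is also a problem with your true-walk estimates. You claim a walk of length $\sqrt n$ ends at graph distance $\geq\sqrt n/20$ from its start, and that a $\Theta(\log n)$-segment ends within graph distance $\ell/2$ of its start with probability $o(1)$; both are false for graph distance, since a random $d$-regular graph has diameter $O(\log n)$, so \emph{every} vertex is within distance $20\log n$ of every other. The correct statistic is the path length $\PL$ through the edges traversed by the walk itself (equivalently, distance in the trajectory-induced subgraph), and the paper proves the needed estimate (Lemma~\ref{lem:rand}) by showing the trajectory is a tree w.h.p.\ and applying a Chernoff bound to the $(d-1)/d$-biased away/backtrack steps. Relatedly, your choice of first query time $\sqrt n$ breaks the trajectory-tree property (a length-$\sqrt n$ walk revisits a vertex with constant probability); the paper takes $e=\sqrt n/\log n$ precisely so that Corollary~\ref{cor:walktree} and Lemma~\ref{lem:rand} apply, and so that the ``known path is shorter than $e/20$'' dichotomy follows purely from the per-query probe budget rather than from any property of true walks.
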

Finally, an $\Omega(n^{1/4})$ lower bound that does not rely on adaptive query sequences.
\begin{restatable}[Informal Statement of Theorem~\ref{thm:oblivious-lb-formal}]{theorem}{ObliviousLB}\label{thm:oblivious-lb}
There is a constant $n_0$ and a fixed query sequence $Q$ such that any local access algorithm $\A$ given $\rn$ and $\rv$ probe access to random $d$-regular graphs for $n\geq n_0$ with parameters $(\ep,B)=(.99, n^{1/4})$ makes $\Omega(n^{1/4})$ graph probes per time query of $\Q$ in expectation.
\end{restatable}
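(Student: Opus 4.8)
The plan is to fix in advance the query sequence $Q = (n^{1/4}, 2, 3, \ldots, n^{1/4}-1)$ — i.e. first ask for the position at time $n^{1/4}$, then for every intermediate time $2,3,\ldots,n^{1/4}-1$ in order — and derive a contradiction from the assumption that $\A$ makes only $o(n^{1/4})$ graph probes per query in expectation. Since there are $n^{1/4}-1$ queries total, the total number of $\rn$/$\rv$ probes made over the whole sequence is, in expectation, $o(\sqrt{n})$, and by Markov's inequality it is below $c\sqrt{n}$ for a suitable small constant $c$ with probability at least, say, $0.97$. I would invoke the structural Lemma~\ref{lem:randstruct}: conditioned on fewer than $\Theta(\sqrt n)$ probes, with probability $\geq 0.995$ over $G \la \Gd$ and the probe answers, the revealed subgraph (the vertices and edges the algorithm has seen, together with those it outputs as walk positions — note each answered query must be consistent with a revealed edge to the previously answered neighboring time, or the algorithm has already failed) is a forest in which no two trees are ever merged. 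Call this the "forest event''; by a union bound it holds together with the probe-count bound with probability at least $\approx 0.96$.

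Next I would argue that on the forest event the algorithm must fail to be $0.99$-close to $P_Q(\U)$. The point is the following: after answering the first query, $\A$ has committed to vertices $v_0 = v_1$ (time $1$, the implicit start) and $v_{n^{1/4}}$; then as it answers times $2,3,\ldots$ it is forced to reveal, step by step, a walk of length $n^{1/4}-1$ from $v_1$ to $v_{n^{1/4}}$. (If at any adjacent pair of answered times the two committed vertices are not joined by a revealed edge, the output is inconsistent with \emph{every} graph in the support of $\Gd \cap (\text{revealed edges})$, so $\A$ is trivially far from uniform; this is the "binary-search'' style inconsistency, but here it is immediate since we literally query every consecutive time.) Hence on the forest event $\A$ ends up having revealed, inside a forest, a path of length exactly $n^{1/4}-1$ between $v_1$ and $v_{n^{1/4}}$ — so these two vertices are at distance exactly $n^{1/4}-1$ in the revealed forest. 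The contradiction will come from comparing this to the true distribution: for a genuine random walk of length $n^{1/4}-1$ on $\Gd$, the endpoint $v_{n^{1/4}}$ is (with probability $1-o(1)$ over $G$ and the walk) within distance, say, $(n^{1/4}-1)/2$ of $v_1$ in $G$ — more precisely, in the true process the walk backtracks, so the \emph{graph} distance between its endpoints is noticeably less than its length with constant probability. Equivalently and more robustly: in the true process, with constant probability the length-$(n^{1/4}-1)$ walk revisits a vertex, so the revealed subgraph is \emph{not} a forest; but we just showed $\A$'s revealed subgraph \emph{is} a forest (a simple path) on the forest event. Since the forest event has probability $\geq 0.96$ under $\A$'s process but the "reveals a cycle / backtracks" event has constant probability $\geq c' > 0.04$ under the true process, the two distributions on the revealed configuration differ by more than $0.99$ is impossible — so they differ by at least $c' - 0.04 > 0$ in $\ell_1$, contradicting $\ep = 0.99$ once constants are tuned. (One tunes $c$ in the probe bound and the length constants so that the true-process backtracking probability comfortably exceeds the failure slack of the forest event.)

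The main obstacle I expect is the quantitative backtracking estimate: showing that a random walk of length $L = n^{1/4}-1$ on a typical $d$-regular graph has, with constant probability bounded away from $0$, its two endpoints at graph-distance strictly less than $L$ (ideally by a constant factor). On a $d$-regular graph, since $d \geq 3$, a random walk is not forced to backtrack, so I cannot argue distance drop deterministically; instead I would use that in $\Gd$ the local neighborhood of a vertex looks like a $d$-regular tree up to radius $\Theta(\log_{d-1} n) \gg L$ with high probability, reduce to analyzing a walk on the infinite $d$-regular tree, and there compute that the walk returns to distance $\leq L - 2$ from its origin (e.g. makes at least one "back" step) with probability $\Omega(1)$ — in fact the distance from the origin on the tree is a biased random walk on $\N$ with a constant positive probability of decrease at each step, so $\Pr[\text{dist} \leq L-2]$ is easily a constant. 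The rest — the Markov/union-bound bookkeeping over the $n^{1/4}-1$ queries, and checking that every consecutive-time inconsistency really forces $\ell_1$ distance close to $1$ — is routine. I would also double-check the edge case that the algorithm could "refuse'' to reveal a connecting edge and instead output an endpoint inconsistent with its earlier answers, but that only makes $D$ even farther from $P_Q(\U)$, so it does not help the algorithm.
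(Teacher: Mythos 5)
Your setup (the fixed sequence $(n^{1/4},2,3,\ldots,n^{1/4}-1)$, the Markov argument to bound total probes by $O(\sqrt n)$, Lemma~\ref{lem:randstruct} to get the non-merging-forest event, and the observation that consecutive answered times with no revealed connecting edge are an immediate inconsistency) matches the paper. But the core of your argument has a genuine gap: you claim that on the forest event the algorithm is forced to reveal a \emph{simple} path, so that $v_0$ and $v_{n^{1/4}}$ end up at distance exactly $n^{1/4}-1$ in the revealed forest. Nothing forces this. The algorithm is only required to make consecutive outputs adjacent via revealed edges; it is perfectly free to output a walk that backtracks over a short revealed path, so the endpoint distance can be far smaller than the walk length. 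Your proposed distinguishers then fail on the true-walk side as well: a genuine random walk of length $\ell\le\sqrt n/\log n$ on $\Gd$ also has traversed edge set forming a tree with probability $1-o(1)$ (Corollary~\ref{cor:walktree}), so ``reveals a cycle / revisits a vertex'' does not separate the two processes (backtracking revisits vertices without creating cycles), and ``endpoint distance strictly less than the length'' does not separate either, since the algorithm can mimic that cheaply. In fact your planned ``main obstacle'' estimate points in the wrong direction: what the argument needs is that true walks backtrack \emph{little}, not that they backtrack at all.

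The paper's proof is the reverse of yours at exactly this point. Since the first query is at time $e=n^{1/4}$ and the per-query probe budget is $n^{1/4}/k_d$, the revealed distance $\dist(v_0,v_e)$ is either $\infty$ or less than $n^{1/4}/2$ at the time of the first answer, and on the forest event it can never be increased afterwards (no tree merges). So at the end of the sequence either some consecutive pair $v_t,v_{t+1}$ has no revealed edge, in which case $(v_t,v_{t+1})$ is a non-edge of $G$ with probability $1-o_n(1)$ over $\Gd\cap S$ by Lemma~\ref{lem:guessprob}; or the whole output walk traverses only forest edges, hence contains the simple path, giving $\PL(v_0,\ldots,v_e)=\dist(v_0,v_e)<n^{1/4}/2$, i.e.\ a segment whose path length is less than half its time length. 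The distinguishing function $\F$ is triggered in both cases, while for true walks $\E_G\F(P_Q(\U))=o_n(1)$ by Lemma~\ref{lem:rand} (the tree-distance of a walk is a biased walk with upward drift $(d-1)/d$, so by a Chernoff bound long segments have path length at least half their length w.h.p.). To repair your proposal you would need to replace your forcing claim and your distinguisher with this ``known path is too short'' statistic and prove the corresponding concentration for true walks.
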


It is impossible to prove lower bounds for \textit{all} subfamilies in $\Gd$ (in fact we give efficient local access algorithms for some later),
but any possible algorithm being $\Omega(1)$ from uniform on at least $99\%$ of random regular graphs effectively rules out a unified approach.

We begin by proving the $\widetilde{O}(\sqrt{n})$ upper bound using the algorithm from Section~\ref{sec:expansion}.
To do so, we recall the famous result that almost all random regular graphs are good expanders.
\begin{restatable}[\cite{Fri}]{lemma}{Exp}\label{lem:exp}
For all $d\geq 3$, $\Pr(\lambda(\Gd)\leq .95)=1-o_n(1).$
\end{restatable}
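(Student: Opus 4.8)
The plan is to reduce the statement to Friedman's resolution of Alon's conjecture \cite{Fri} together with an elementary arithmetic check. Recall Friedman's theorem: for every $d\geq 3$ and every $\epsilon>0$, a graph $G$ drawn from $\Gd$ satisfies, with probability $1-o_n(1)$, that every eigenvalue $\mu$ of its adjacency matrix $A$ other than the trivial eigenvalue $d$ obeys $|\mu|\leq 2\sqrt{d-1}+\epsilon$. Since $G$ is $d$-regular, its random walk matrix is $W=A/d$, so the non-trivial eigenvalues of $W$ are exactly $\mu/d$ as $\mu$ ranges over non-trivial adjacency eigenvalues; as $W$ is symmetric, $\lambda(G)=\max_{i\geq 2}|\mu_i|/d$. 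Hence Friedman's bound gives $\lambda(\Gd)\leq (2\sqrt{d-1}+\epsilon)/d$ with probability $1-o_n(1)$.

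It then remains to choose $\epsilon$ so that the right-hand side is at most $0.95$ for all $d\geq 3$. Consider $f(d)=2\sqrt{d-1}/d$; a one-line derivative computation shows $f'(d)<0$ for $d>2$, so over integers $d\geq 3$ it is maximized at $d=3$, where $f(3)=2\sqrt{2}/3=0.9428\ldots<0.95$. Equivalently, the gap $0.95\,d-2\sqrt{d-1}$ is increasing in $d$ for $d\geq 3$ and equals $2.85-2\sqrt{2}>0.02$ at $d=3$, so taking, say, $\epsilon=0.02$ yields $(2\sqrt{d-1}+\epsilon)/d\leq 0.95$ for every $d\geq 3$. Combining this with the previous paragraph completes the proof.

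The only subtlety is the choice of random-graph model: Friedman's original theorem is proved in the permutation model, whereas $\Gd$ is the uniform distribution over simple $d$-regular graphs. This is handled by the standard contiguity between these models (and near-optimal second eigenvalue bounds are by now also known directly for the uniform simple model), so the transfer costs nothing beyond bookkeeping. I expect that bookkeeping --- in particular, checking that the $o_n(1)$ failure probability survives the model conversion --- to be the only step needing care; everything else is the cited theorem plus the computation above. (A self-contained trace-method argument would not obviously suffice here: for $d=3$ one genuinely needs a second-eigenvalue bound with a near-optimal constant, since weaker moment bounds of the form $C\sqrt{d-1}/d$ need not beat $0.95$ at $d=3$.)
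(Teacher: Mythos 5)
Your proposal is correct and matches the paper's treatment: the paper offers no separate proof, simply citing Friedman's theorem, and your derivation (dividing the adjacency bound $2\sqrt{d-1}+\epsilon$ by $d$, checking $2\sqrt{2}/3<0.95$ at the worst case $d=3$, and invoking contiguity to transfer from Friedman's models to the uniform simple model) is exactly the routine bookkeeping the citation implicitly relies on. No gaps.
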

Then the proof follows directly.
\begin{proof}[Proof of Corollary~\ref{cor:random-ub}]
Choose $B=\poly(n)$, $\ep=1/\poly(n)$ and compose the algorithm of Theorem~\ref{alg:regular} with $\Gd$, where we promise that $\lambda\leq .95$. In the case of poorly expanding graphs this will result in walks that are arbitrarily far from truly random, but the runtime per query will still be as claimed.

For $G$ such that $\lambda(G)\leq .95$ we obtain that for any (potentially adaptive) query sequence $Q$, $||\D-P_Q(\U)||_1\leq 1/\poly(n)$. Then taking the expectation over $\Gd$ we obtain
\[\E_{G\la \Gd}||\D-P_{Q}(\U)||_1\leq 1/\poly(n)+\Pr[\lambda(\Gd)>.95]=o_n(1). \qedhere\]
\end{proof}

\subsection{Structure of Random Regular Graphs}
To prove the lower bounds, we first give three structural results which establish any algorithm must succeed even when the first $\Omega(\sqrt{n})$ graph probes define disjoint forests, and give tests for closeness of walks to the uniform distribution supported on only a few queries.

Our first goal is to show no algorithm making $\rn$ probes to $G\la \Gd$ can efficiently find cycles. This is essential, as the entire lower bound rests on the probes made by the algorithm defining a tree with $\Omega(1)$ probability. To do so, we first show conditioning on a small number of edges (e.g. those already known by the algorithm) does not increase the conditional probabilities of non-revealed edges by more than a constant factor.
\begin{restatable}{lemma}{Guessprob}\label{lem:guessprob}
For all $d \in \N$ there is a constant $c_d$ depending only on $d$ such that for an arbitrary set of edges $S$ with $|S|\leq \sqrt{n}$ and $v,w \in V$ arbitrary vertices where $(v,w) \notin S$, we have $\Pr_{G\la \Gd \cap S}[(v,w)\in G] \leq c_d/n$.
\end{restatable}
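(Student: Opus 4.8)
The plan is to reduce the conditional probability $\Pr_{G \la \Gd \cap S}[(v,w) \in G]$ to a ratio of counts of $d$-regular graphs via a switching argument. Let $N$ denote the number of $d$-regular graphs on $[n]$ containing all edges of $S$ but not the edge $(v,w)$, and let $N^+$ denote the number containing all edges of $S$ and also $(v,w)$. Then the probability in question is $N^+/(N + N^+) \leq N^+/N$, so it suffices to show $N^+/N = O(1/n)$. To bound this ratio I would exhibit a "switching" operation that, given a graph $G$ counted by $N^+$, removes the edge $(v,w)$ and rewires: pick an edge $(x,y)$ of $G$ disjoint from $\{v,w\}$ and from the endpoints of $S$ (and from each other), delete $(x,y)$, and add $(v,x)$ and $(w,y)$, provided the result is still simple, still contains $S$, and still omits any forbidden configuration. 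This produces a graph counted by $N$ (its degree sequence is preserved: $v$ and $w$ each lose one edge and gain one, $x$ and $y$ each lose one and gain one). Counting forward, each graph in $N^+$ admits at least $dn - O(|S|) - O(d) = \Omega(dn)$ valid choices of $(x,y)$ (there are $dn/2$ edges total, and we forbid only $O(|S| + d) = O(\sqrt n)$ of them, which is $o(n)$). Counting backward, each graph in $N$ is produced by at most $O(d^2) = O(1)$ such switchings, since a reverse switching is determined by choosing the two new edges $(v,x)$ and $(w,y)$ at $v$ and $w$, and $v,w$ have degree exactly $d$. Hence $N^+ \cdot \Omega(dn) \leq N \cdot O(d^2)$, giving $N^+/N = O(d/n) = O(1/n)$ with the constant $c_d$ absorbing the dependence on $d$.

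The key steps in order: (i) rewrite the conditional probability as $N^+/(N+N^+)$ and reduce to bounding $N^+/N$; (ii) define the switching operation precisely, being careful about which edges are "blocked" so that the image is always a valid simple $d$-regular graph still containing $S$ and missing $(v,w)$ — the blocked set has size $O(|S| + d)$; (iii) lower-bound the out-degree of the switching (number of graphs in $N$ reachable from a fixed graph in $N^+$) by $\Omega(dn - o(n)) = \Omega(n)$, using $|S| \leq \sqrt n$; (iv) upper-bound the in-degree (number of preimages of a fixed graph in $N$) by $O(d^2)$ since the new edges must be incident to $v$ and $w$ which have bounded degree; (v) combine to get $N^+/N = O(1/n)$ and conclude $\Pr_{G \la \Gd \cap S}[(v,w)\in G] \leq N^+/N \leq c_d/n$.

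I expect the main obstacle to be bookkeeping in step (ii): one must verify that after deleting $(x,y)$ and adding $(v,x), (w,y)$ the graph remains simple (no multi-edges — so $(v,x)$ and $(w,y)$ must not already be present, which is why they must be excluded from the choice of $(x,y)$) and that it does not accidentally create or destroy an edge of $S$ (so $x,y$ must avoid the vertices touched by $S$, or at least the relevant edges must be protected). Since $|S| \leq \sqrt n = o(n)$, all of these exclusions only remove $o(n)$ candidate edges $(x,y)$ from the $\Theta(n)$ available, so the lower bound on the forward count survives; making this precise is routine but is where all the care is needed. A secondary subtlety is the edge case where $N = 0$ (e.g. $\Gd \cap S$ is empty, or every graph containing $S$ is forced to contain $(v,w)$); in the former case the conditional probability is undefined and the statement is vacuous, and in the latter case one checks directly this cannot happen for $|S| \leq \sqrt n$ and $d \geq 1$ since there is always room to reroute, so $N > 0$ and the ratio is well-defined.
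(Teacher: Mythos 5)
Your proposal is correct, but it takes a genuinely different route from the paper. The paper works in the configuration model: it removes the half-edges used by $S$, places a random matching on the residual degree sequence, shows that conditioned on the matching inducing a simple graph disjoint from $S$ the union with $S$ is uniform on $\Gd\cap S$, and then transfers the unconditional edge probability $\le 2d/n$ in the pairing to the conditional distribution at the cost of a factor $1/\rho_d$, where $\rho_d$ is a lower bound on the conditioning probability obtained from the McKay--Wormald simplicity estimate for near-regular degree sequences (\cite{MW}, Lemma 5.1). You instead bound the ratio $N^+/N$ directly by an edge-switching double count: forward degree $\Omega(dn)$ (after excluding the $O(\sqrt n + d^2)$ blocked choices), backward degree at most $d^2$ since a preimage is determined by one edge at $v$ and one at $w$, hence $N^+/N = O(d/n)$. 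Your argument is self-contained (no external simplicity lemma), and incidentally yields a polynomially small constant $c_d = O(d)$ rather than the paper's $c_d = O(d\,e^{d(d+2)})$; its price is the bookkeeping you flag in step (ii) and the degenerate cases ($N=0$, small $n$), which you handle the same way the paper does, by inflating $c_d$ so the bound is vacuous for $n$ below a threshold depending on $d$ (note your claim that one can ``always reroute'' really needs $n$ large, so the small-$n$ absorption is doing real work). The paper's configuration-model route, for its part, yields the reusable structural fact that $\Gd\cap S$ is realized as a conditioned random pairing, which is convenient if one wants to bound probabilities of richer events than a single extra edge; one minor slip in your writeup is that the blocked set is $O(|S|+d^2)$ rather than $O(|S|+d)$, which is immaterial since $d$ is constant.
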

We defer the proof to Appendix~\ref{app:configmdl}. We use the configuration model of Bollobas~\cite{Bol} and a strengthening to handle degree sequences with small amounts of variation by~\cite{MW}.

Furthermore, probe access to $\Gd$ is equivalent to successively generating edges uniformly at random over the set of regular graphs satisfying the existing constraint - in effect, we can only determine edges when required, and this is the perspective we will use for the proof.
\begin{restatable}{lemma}{Marginalunif}\label{lem:marginalunif}
Let $\A$ be an algorithm having made $k$ arbitrary $\rn$ probes to $\Gd$ and let the returned edges be $E$. Then the conditional distribution over graphs given the probe responses is uniform over $\Gd\cap E$.
\end{restatable}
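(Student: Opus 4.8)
The plan is a straightforward induction on the number of probes, with Bayes' rule applied at each step. The one idea that makes it work is that a $\rn$ probe returns a uniformly random neighbor, so the likelihood of any particular response is the \emph{same} constant $1/d$ across every $d$-regular graph that contains the corresponding edge, and $0$ for every graph that does not; multiplying a uniform prior by such a likelihood yields a uniform posterior, now supported on exactly the graphs that contain the newly revealed edge.

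I would fix $\A$ together with its internal randomness (which is independent of $G$, since $\A$ accesses $G$ only through the oracle). A run of $\A$ produces a transcript $\tau_k=((v_1,w_1),\ldots,(v_k,w_k))$, where $v_i$ is the vertex probed at step $i$ and $w_i$ the neighbor returned; let $E_j=\{(v_i,w_i):i\le j\}$ be the set of revealed edges after $j$ probes. The claim to prove by induction on $k$ is: conditioned on $\tau_k$ (equivalently, on the probe responses together with $\A$'s coins), the posterior law of $G\la\Gd$ is uniform on $\Gd\cap E_k$. The base case $k=0$ is immediate, since $E_0=\emptyset$ and $\Gd\cap\emptyset=\Gd$.

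For the inductive step I would argue as follows. Conditioned on $\tau_{k-1}$ and $\A$'s coins, the identity of the next probed vertex $v_k$ is fixed --- it is a function of the transcript so far and the coins --- and in particular independent of $G$. By the inductive hypothesis the conditional law of $G$ given $\tau_{k-1}$ is uniform on $\Gd\cap E_{k-1}$, while since $G$ is a simple $d$-regular graph,
\[\Pr[\rn(v_k)=w_k \mid G,\ \tau_{k-1}]=\tfrac1d\cdot\mathbf{1}[(v_k,w_k)\in E(G)].\]
Bayes' rule then gives, for every $G\in\Gd\cap E_{k-1}$,
\[\Pr[G\mid \tau_k]\ \propto\ \Pr[\rn(v_k)=w_k\mid G]\cdot\Pr[G\mid\tau_{k-1}]\ \propto\ \mathbf{1}[(v_k,w_k)\in E(G)],\]
which is the uniform distribution on $\{G\in\Gd\cap E_{k-1}:(v_k,w_k)\in E(G)\}=\Gd\cap E_k$. (If the probe re-reveals an already known edge then $E_k=E_{k-1}$, the indicator is identically $1$ on the support, and the posterior is unchanged, consistent with the claim.) This closes the induction, and the lemma follows with $E=E_k$; the conditioning event has positive probability since the transcript actually occurred, which incidentally forces $\Gd\cap E_k\neq\emptyset$ and makes the conditional distribution well defined.

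I do not expect a serious obstacle; the argument is essentially Bayesian bookkeeping. The point I would state carefully is that, conditioned on the first $k-1$ responses, the next probed vertex $v_k$ is independent of $G$: this is exactly what lets Bayes' rule factor as above, and it is why adaptivity of the probe sequence causes no trouble. (If one also allows $\rv$ probes, as in the lower bounds, these return a vertex from $U_{[n]}$ independently of $G$ and reveal no edge, so they leave the posterior unchanged and the same induction goes through verbatim.)
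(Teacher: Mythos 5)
Your proof is correct and rests on exactly the same observation as the paper's: every $\rn$ response has likelihood $1/d$ on any $d$-regular graph containing the revealed edge and $0$ otherwise, so the transcript has probability $1/d^k$ on all of $\Gd\cap E$ and zero elsewhere, making the posterior uniform. The paper states this in one shot while you organize it as an induction with Bayes' rule per probe (making the adaptivity bookkeeping explicit), but the argument is essentially identical.
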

\begin{proof}
Let $v_1,\ldots,v_k$ the origin vertices for the $\rn$ probes and $w_1,\ldots,w_k$ the returned vertices. For $H \in \Gd\cap E$ we have $\Pr[\forall i, \rn_H(v_i)=w_i]=1/d^k$ whereas for $H \in \Gd\setminus E$ the equivalent probability is zero.
\end{proof}

Given these lemmas, we can now show the first $\Omega(\sqrt{n})$ probes made by any local access algorithm will fail to find cycles or merge forests with constant probability.
\begin{restatable}{lemma}{Randstruct}\label{lem:randstruct}
Let $\A$ be an algorithm, where at each step $\A$ makes a $\rn$ or $\rv$ probe to $\Gd$ or marks any vertex. Each vertex touched by a probe is marked. Then there is a constant $k_d$ depending only on $d$ such that for any $\A$ with at most $\sqrt{n}/k_d$ steps, with probability at least $.995$,
\begin{itemize}
    \item the $\rn$ probes will define a forest,
    \item no $\rn$ probe will ever merge two marked trees.
\end{itemize}
\end{restatable}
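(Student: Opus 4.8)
The plan is to track, over the course of $\A$'s execution, the evolving graph $H$ consisting of all edges revealed so far by $\rn$ probes (together with the set of marked vertices), and to show that with probability at least $.995$, $H$ remains a forest and no $\rn$ probe ever connects two distinct marked components. By Lemma~\ref{lem:marginalunif}, after any sequence of probes the conditional distribution over the underlying graph is uniform over $\Gd \cap E$ where $E$ is the set of revealed edges, so each new $\rn$ probe from a vertex $v$ returns a neighbor drawn according to the true marginal of $\Gd \cap E$; since $\A$ takes at most $\sqrt n / k_d \le \sqrt n$ steps, we always have $|E| \le \sqrt n$ and Lemma~\ref{lem:guessprob} applies throughout.

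First I would fix the two ``bad events'': (i) some $\rn$ probe creates a cycle, i.e. a probe from $v$ returns a vertex $w$ that is already in the same revealed component as $v$; and (ii) some $\rn$ probe merges two distinct marked trees, i.e. a probe from $v$ returns a vertex $w$ lying in a different marked component. (A $\rn$ probe from $v$ to an unmarked vertex $w$ grows a tree by a leaf and is harmless; a $\rv$ probe only marks a fresh uniformly random vertex, which with high probability is isolated and never itself causes a problem — I would fold the event that two $\rv$ probes collide, or a $\rv$ probe hits an already-marked vertex, into the union bound as well, using that at most $\sqrt n/k_d$ vertices are ever marked and each $\rv$ output is uniform on $[n]$.) The key quantitative step: condition on all probe responses so far, so the set $S$ of revealed edges is fixed with $|S|\le\sqrt n$, and consider the next $\rn$ probe from $v$. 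Let $w$ range over the at most $|S|+1 \le \sqrt n + 1$ vertices that are currently marked and in $v$'s component (for event (i)) or in some other marked component (for event (ii)); in either case the number of such ``dangerous'' targets $w$ is $O(\sqrt n)$ since the marked set has size $O(\sqrt n)$. By Lemma~\ref{lem:guessprob}, each dangerous $w$ with $(v,w)\notin S$ satisfies $\Pr_{G \la \Gd \cap S}[(v,w)\in G] \le c_d/n$, and conditioned on $(v,w)\in G$ the probe returns $w$ with probability $\le 1$ (more precisely the probe is uniform over the $d$ incident edges, but $\le c_d/n$ suffices); targets $w$ with $(v,w)\in S$ are already revealed so the probe cannot ``discover'' them as new. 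Hence the probability that a single $\rn$ probe triggers a bad event is at most $O(\sqrt n) \cdot c_d/n = O(c_d/\sqrt n)$.

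Then I would take a union bound over the at most $\sqrt n / k_d$ steps: the total failure probability is at most $(\sqrt n / k_d)\cdot O(c_d/\sqrt n) = O(c_d/k_d)$, which is below $.005$ once $k_d$ is chosen to be a sufficiently large constant multiple of $c_d$ (absorbing the $O(\cdot)$ constant and the contribution from $\rv$-collision events, which is $O((\sqrt n/k_d)^2/n) = O(1/k_d^2)$). I expect the main subtlety — as opposed to outright obstacle — to be bookkeeping the conditioning correctly: each probe response changes the conditioning set $S$, so the clean way is to argue step by step, conditioning on the entire history, checking $|S|\le\sqrt n$ still holds, applying Lemma~\ref{lem:guessprob} to bound the one-step bad probability uniformly over histories by $O(c_d/\sqrt n)$, and only then summing. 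One also has to be slightly careful that a $\rn$ probe returns one of $v$'s $d$ neighbors and we want the chance it lands on a specific dangerous vertex; bounding this by $\Pr[(v,w)\in G]\le c_d/n$ is the cleanest route and loses nothing asymptotically. Nothing here is deep once Lemmas~\ref{lem:guessprob} and~\ref{lem:marginalunif} are in hand; the content is entirely in those two lemmas plus a first-moment union bound.
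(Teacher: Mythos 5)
Your proposal is correct and follows essentially the same route as the paper: condition on the revealed edges via Lemma~\ref{lem:marginalunif}, bound each dangerous edge by $c_d/n$ via Lemma~\ref{lem:guessprob}, and union bound over marked targets and the at most $\sqrt{n}/k_d$ steps. The only difference is bookkeeping --- the paper bounds the marked set at step $i$ by $2i$ and sums to get $k_d=20\sqrt{c_d}$, whereas your cruder $O(\sqrt{n})$ bound just forces a slightly larger constant $k_d$ --- which is immaterial to the statement.
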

\begin{proof}
Let $V_{<i}$ be the set of vertices that are marked after probe $i-1$, and $E_{<i}$ the known edges. It is clear that the worst case is $\A$ making entirely $\rn$ queries. Let $v_{j}$ be the vertex queried at probe $j$. We have $|V_{<j}|\leq 2|E_{<j}|\leq 2j$. Define $k_d=20\sqrt{c_d}$ where $c_d$ is as in Lemma~\ref{lem:guessprob} and let $q=\sqrt{n}/k_d$. We obtain
\begin{align*}
    \Pr(\text{fail}) &\leq \sum_{i=1}^q \Pr_{\Gd \cap E_{<i}}(\rn(v_i) \in V_{<i})\\
    &\leq \sum_{i=1}^q |V_{<i}|\frac{c_d}{n}\\
    &= \frac{2c_d}{n}\frac{q(q+1)}{2}\\
    &\leq 1/200.
\end{align*}
Where the first line follows from Lemma~\ref{lem:marginalunif} and the second follows from Lemma~\ref{lem:guessprob}.
\end{proof}
\begin{restatable}{corollary}{Walktree}\label{cor:walktree}
For any $\ell\leq \sqrt{n}/\log(n)$, we have that $\Pr_{G\la \Gd}\Pr_{\sigma \la \U[\ell]}(\sigma \text{ defines a tree})\geq 1-O(1/\log^2(n)).$
\end{restatable}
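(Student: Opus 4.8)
The plan is to run the same union bound that appears in the proof of Lemma~\ref{lem:randstruct}, but with the walk length $\ell$ in the role of the probe budget $q$; since $\ell\leq \sqrt{n}/\log(n)$ this upgrades the constant $.995$ to $1-O(1/\log^2(n))$. First I would realize the pair $(G,\sigma)$ lazily: sample $G\la\Gd$ and $\sigma\la\U[\ell]$ jointly by starting at vertex $1$, marking it, and for $i=1,\dots,\ell$ issuing an $\rn$ probe from $\sigma_{i-1}$ that reveals the traversed edge and sets $\sigma_i$. By Lemma~\ref{lem:marginalunif} the conditional law of $G$ given these $\ell$ probes is uniform over $\Gd$ intersected with the revealed edges, so this correctly samples $(G,\sigma)$; note that the only probes made are these $\ell$ $\rn$ probes (no $\rv$ probes), so the revealed edge set after step $i-1$ is exactly the set of distinct edges traversed so far, of size at most $i-1\leq \ell\leq\sqrt{n}$.

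Next I would identify the failure event precisely. The traversed edges of $\sigma$ always form a \emph{connected} subgraph of $G$, so $\sigma$ fails to define a tree exactly when at some step $i$ the probe $\rn(\sigma_{i-1})$ returns a vertex $w$ such that the edge $(\sigma_{i-1},w)$ has not been traversed before yet $w\in V_{<i}:=\{\sigma_0,\dots,\sigma_{i-1}\}$ has already been visited. Indeed, re-traversing an already-revealed edge changes nothing, a new edge to a fresh vertex only extends the tree, and a new edge between two vertices of the connected traversed subgraph closes a cycle. The point to be careful about is that an edge becomes revealed only by being traversed, so an untraversed edge $(\sigma_{i-1},w)$ is not in $E_{<i}$, and $|E_{<i}|\leq i-1\leq \sqrt{n}$, which is exactly the hypothesis needed to invoke Lemma~\ref{lem:guessprob} on it.

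Then I would finish with the same estimate as in Lemma~\ref{lem:randstruct}. For each candidate $w\in V_{<i}$ with $(\sigma_{i-1},w)\notin E_{<i}$, Lemma~\ref{lem:guessprob} gives $\Pr_{\Gd\cap E_{<i}}[\,\rn(\sigma_{i-1})=w\,]\leq \Pr_{\Gd\cap E_{<i}}[(\sigma_{i-1},w)\in G]\leq c_d/n$, so the failure probability at step $i$ is at most $|V_{<i}|\,c_d/n\leq i\,c_d/n$. Summing over $i\leq\ell$ and using $\ell\leq \sqrt{n}/\log(n)$ (and treating $d$, hence $c_d$, as constant) yields $\Pr[\sigma\text{ is not a tree}]\leq \frac{c_d}{n}\cdot\frac{\ell(\ell+1)}{2}\leq \frac{c_d\,\ell^2}{n}\leq \frac{c_d}{\log^2(n)}=O\!\left(1/\log^2(n)\right)$, which is the claim. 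I do not expect a real obstacle: this corollary is essentially the computation of Lemma~\ref{lem:randstruct} re-read with a smaller step budget, and the only genuine care needed is (i) justifying the lazy joint sampling of $(G,\sigma)$ via Lemma~\ref{lem:marginalunif}, and (ii) being precise that ``$\sigma$ defines a tree'' refers to the set of \emph{distinct} traversed edges, so backtracking is harmless and only a newly revealed edge into an already-visited vertex counts as a failure.
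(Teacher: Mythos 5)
Your proposal is correct and is essentially the paper's own argument: the paper proves the corollary by observing that a random walk is just a sequence of $\rn$ probes at the current head and rerunning the union bound of Lemma~\ref{lem:randstruct} with budget $q=\sqrt{n}/\log(n)$, giving failure probability at most $c_d\ell^2/n=O(1/\log^2(n))$. Your additional care about lazy joint sampling via Lemma~\ref{lem:marginalunif} and about the precise failure event (a newly traversed edge into an already-visited vertex) simply fills in details the paper leaves implicit.
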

\begin{proof}
This directly follows from setting $q=\sqrt{n}/\log(n)$ in the above proof, as a random walk is simply a sequence where at each step we probe $\rn$ at the current head.
\end{proof}
We now show random walks of length $\sqrt{n}/\log(n)$ over random regular graphs exhibit a distinguishing feature that can be checked on small segments. Intuitively, with high probability there will be no segment of length $r=\Omega(\log(n))$ where the simple path over the edges \textit{traversed in the walk} between the endpoints of the segment is shorter than $r/2$. Since the edges traversed by the walk will define a tree with high probability, an unusually short induced simple path implies the biased random walk corresponding to the tree metric in that segment is much shorter than its expectation, which is vanishingly unlikely. To show this, we formally define the \textbf{path length} of the induced simple path. 
\begin{definition}
For a partially determined vertex sequence $s=(s_1,\ldots,s_\ell) \in ([n],*)^\ell$,
let path length $\PL(s)$ be the distance between $s_1$ and $s_\ell$ in the induced (undirected, unweighted) graph $G'=([n],E')$,
where $(u,v) \in E'$ if and only if there exists $i$ such that $s_i=u,s_{i+1}=v$.
\end{definition}

We obtain that an unusually short simple path is vanishingly unlikely in any segment of a random walk.
\begin{restatable}{lemma}{Rand}\label{lem:rand}
Let $\sigma \in [n]^\ell$ be a walk of length $\ell\leq \sqrt{n}/\log(n)$. 
Let $F(\sigma)$ be the event any segment $s=(\sigma_i,\ldots,\sigma_j)$ of length $|s| \geq 40\log(n)$ has $\PL(s)<|s|/2$. Then 
$$\Pr_{G\la \Gd}\Pr_{\sigma\la \U[\ell]} [F(\sigma)]=o_n(1).$$
\end{restatable}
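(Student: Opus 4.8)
The plan is to reduce the lemma to a concentration bound on the amount of backtracking inside a single fixed segment --- a bound that holds deterministically over the choice of $G$ --- and then take a union bound over all segments, discarding at the outset the rare event that the walk fails to be a tree.

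First I would observe that the event $F(\sigma)$ is contained in the union of $\{\sigma$ does not define a tree$\}$ with $\bigcup_{i<j,\ j-i\geq 40\log n}\{$the segment $s=(\sigma_i,\dots,\sigma_j)$ induces a subtree and $\PL(s)<(j-i)/2\}$: the edge (multi)set of any segment is contained in that of the whole walk, so if a segment induces a cycle then $\sigma$ itself is not a tree. By Corollary~\ref{cor:walktree} (which is where Lemmas~\ref{lem:guessprob} and~\ref{lem:randstruct} feed in), since $\ell\leq\sqrt n/\log n$ the first event has probability $O(1/\log^2 n)$ over $G\la\Gd$ and $\sigma\la\U[\ell]$. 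There are at most $\ell^2\leq n/\log^2 n$ segments in the union, so it remains to bound, for each fixed $r:=j-i\geq 40\log n$, the probability that a length‑$r$ segment induces a subtree yet has $\PL(s)<r/2$.

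By the Markov property, conditioned on $\sigma_i$ the segment is an unconditioned length‑$r$ walk from $\sigma_i$; I will bound the probability above uniformly over the start vertex and over $G$, so its law is irrelevant. When the segment induces a subtree $T'$, root $T'$ at $\sigma_i$ and let $h(k)$ be the distance in $T'$ from $\sigma_i$ to $\sigma_{i+k}$. Because a walk between two vertices of a tree traverses every edge of their (unique) geodesic, the $\sigma_i$–$\sigma_{i+k}$ geodesic uses only edges already traversed by step $k$, so $h$ is well defined and $\PL(s)=h(r)$; and since every step moves along a $T'$‑edge, $h(k{+}1)-h(k)=\pm1$, equal to $-1$ exactly when the step goes to the parent of the current vertex. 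Hence $h(r)=r-2B$, where $B$ counts the to‑parent steps, so $\PL(s)<r/2\iff B>r/4$. The crucial point is that conditioning on the first $k{+}1$ vertices of the segment determines whether they induce a tree and, if so, the parent of $\sigma_{i+k}$; the next vertex is a uniform neighbour of $\sigma_{i+k}$, and the parent is one of its $d$ neighbours, so this step is to‑parent with probability exactly $1/d$ — whatever $G$ is, and with no conditioning on the future. Therefore $B$ is stochastically dominated by $\mathrm{Bin}(r,1/d)$, and a Chernoff bound gives $\Pr[B>r/4]\leq\Pr[\mathrm{Bin}(r,1/d)>r/4]\leq e^{-\Omega(r)}=n^{-\Omega(1)}$. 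Combining, $\Pr_{G\la\Gd}\Pr_{\sigma\la\U[\ell]}[F(\sigma)]\leq O(1/\log^2 n)+\ell^2\cdot n^{-\Omega(1)}=o_n(1)$.

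I expect the only delicate bookkeeping to be the constant: for the per‑segment bound $n^{-\Omega(1)}$ to survive the union over $\ell^2\leq n$ segments, the Chernoff exponent (which measures how far $1/4$ lies above $1/d$) must exceed $\log(n)/r\leq 1/40$; so the constant $40$ and the threshold $r/2$ interact with $d$, and this is a routine but fussy check (consistently with the paper's convention that $O(\cdot)$ may hide $d$-dependence, for the smallest degrees one reads these constants as $d$-dependent). The more conceptual trap, which the argument above avoids, is that the natural approach --- coupling the conditioned walk with simple random walk on the infinite $d$-regular tree --- loses $\Theta(r\ell/n)$ per segment in total variation, far too much to sum over $\Theta(\ell^2)$ segments; replacing the coupling by the exact identity $\Pr[\text{to‑parent step}]=1/d$ is what makes the union bound affordable.
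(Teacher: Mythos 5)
Your proposal is correct and takes essentially the same route as the paper: discard the event that the walk fails to induce a tree via Corollary~\ref{cor:walktree}, observe that each step backtracks toward the root with probability exactly $1/d$ independently of the history (avoiding any coupling argument), and finish with a Chernoff bound per segment plus a union bound over the at most $\ell^2$ segments. The only cosmetic difference is that you root each segment's induced tree at the segment start and use stochastic domination of the to-parent count by $\mathrm{Bin}(r,1/d)$, whereas the paper labels steps globally via the first-traversed edge at each vertex; your closing caveat that the constants ($40\log n$, threshold $|s|/2$) only work with $d$-dependent adjustments for small $d$ is a fair one, but it applies equally to the paper's own proof, whose claim $\mu\ge 2|s|/3$ already presumes larger $d$.
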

\begin{proof}
    Let $\Psi(\sigma)$ be the event $\sigma$ defines a tree. Then $\Pr_{G\la\Gd}\Pr_{\sigma\la\U[\ell]}(\Psi(\sigma)) \geq 1-O(1/\log^2(n))$ by Corollary~\ref{cor:walktree}.
    
    We now fix $G$ and sequentially generate a random walk $\sigma$. For each vertex in the random walk, there is some edge that was the first traversed by $\sigma$ (where we pick some edge for the first vertex arbitrarily). For each step of $\sigma$, at the current vertex $v$, label this first traversed edge a $-$ edge and all others $+$ edges. Then in a random walk in \textit{any} $d$-regular graph the probability of step $i$ being a $+$ step is exactly $(d-1)/d$ and these events are independent for all $i$. Furthermore, for all $\sigma$ that define a tree the $+$ and $-$ labels exactly correspond to the distance metric on the tree induced by the random walk, with $-$ corresponding to backtracking towards the initial vertex. 
    
    Now let $s$ be any segment of length at least $40\log(n)$. Let $F(s)$ be the event $F(\sigma)$ holds in this segment. Let $h(s)$ be the sum over $+$ and $-$ steps in $s$. Then by the definition of simple path and the correspondence between step labels and the tree metric:
    $$\{h(s) \geq s/2\} \cap \Psi(\sigma) \implies \overline{F(s)}.$$
    But then we can apply a basic Chernoff bound\footnote{Let $X_1,\ldots,X_n$ be independent random variables taking values in $\{0,1\}$. Let $X=\sum_{i=1}^nX_i$ and $\mu=\E[X]$. Then for any $\delta \in [0,1],$ $\Pr[X\leq (1-\delta)\mu] \leq \exp(-\delta^2\mu/2).$} to obtain $\Pr[h(s) < (1-\delta)\mu] \leq \exp(-\delta^2\mu/2)$. Choosing $\delta=1/4$ and using that $\mu =\E[h(s)]\geq 2s/3$ we obtain
    \begin{align*}
        \Pr[h(s) < s/2] &\leq \exp(-(2s/3)/32)\\
        &\leq n^{-1.2}
    \end{align*}
    Then taking a union bound over the at most $\ell^2$ such segments, we obtain 
    \begin{align*}
        \Pr(F(\sigma)) &\leq \Pr(\overline{\Psi(\sigma)})+\sum_{s\subseteq \sigma:|s|\geq 40\log(n)}\Pr(\{h(s) < s/2\})\\
        &\leq O(1/\log^2(n))+\ell^2\cdot n^{-1.2}\\
        &=o_n(1).\qedhere
    \end{align*}
\end{proof}

\subsection{Proof of Adaptive Lower Bound}
We are now prepared to prove the lower bound. For the remainder of the section let $\A$ be a local access algorithm with $\rn$ and $\rv$ probe access to $\Gd$.

We give a sequence of at most $c\log(n)$ time queries. By Lemma~\ref{lem:randstruct}, any algorithm that makes fewer than $\sqrt{n}/k_d c\log(n)$ probes per query sees non-merging trees with probability $.995$ for the duration of the query sequence. Given this occurs, we force the algorithm to return a walk segment that appears with probability $o(1)$ over the true distribution of random walks on $G$. 

We now begin to work with fixed instantiations of $\A$. We use the perspective of $\A$ successively determining the graph by making new $\rn$ probes.
\begin{definition}
    For a fixed instantiation of $\A$ on $\Gd$, let $T(Q)=(V_Q,S)$ be the \textbf{transcript} of the history of the algorithm after a sequence of queries $Q$. $V_Q$ holds the vertices returned at the times in $Q$, and $S$ holds the set of edges revealed by $\rn$ probes. Note the distribution over possible graphs at this time is $\Gd\cap S$.
\end{definition}
An adaptive query sequence is simply a function $f:T(Q)\ra \N$, where the next query is a (in our case deterministic) function of the existing transcript.
A non-adaptive query sequence is a function $g:V_Q\ra \N$, where the next query can only depend on the vertices returned by $\A$,
but not on the internal state of the algorithm.

\begin{notation}~
\begin{itemize}
    \item Given a queried time $t$, denote by $v_{t} \in V_Q$ the vertex returned by $\A$ for this time.
    \item Given a transcript $T(Q)=(V_Q,S)$, for vertices $v,w \in V$, let $\dist(v,w)$ be the length of the simple path between the vertices $v,w$ in the graph induced by the edges in $S$, where $\dist(v,w)=\infty$ if no path exists. Denote the simple path itself (if one exists) as $\sp(v,w)$.
\end{itemize}
\end{notation}

In the case where probes define non-merging trees, for all $v,w \in V$ once $\dist(v,w)<\infty$ it is fixed for the duration of the query sequence, and there are never multiple simple paths between vertices. This is a central component of the proof, as it implies $\A$ cannot \say{extend} paths without guessing.

We first give a family of distinguishing functions that we will use to lower bound $\ell_1$ distance, and show that truly random walks satisfy them with vanishing probability. The function $\F$ checks two conditions - if the \say{walk} traversed edges that do not actually exist, and if the path length of a sufficiently large segment of the walk is too short.
\begin{definition}
For an arbitrary graph $G=(V,E)$ let $\F:\{V,*\}^e\ra \{0,1\}$ be defined as
\[
\F(w_0,\ldots,w_e)
=\mathbb{I}\begin{cases}
\exists i \text{ st. } w_i \neq *, w_{i+1} \neq * \text{ and } (w_i,w_{i+1}) \notin E & \text{OR}\\
\exists i<j-40\log(n) \text{ st. } \PL(w_i,\ldots,w_j) < (j-i)/2
\end{cases}\]
Furthermore $\PL$ is nonincreasing (and thus $F$ is nondecreasing) with respect to revealing new vertices.
\end{definition}
Interestingly, the only reason we require knowing $G$ to define $F_G$ is to rule out edges that are not actually in the graph. 
\begin{remark}\label{lem:gap}
For $\ell \leq \sqrt{n}/\log(n)$ we have $\Eg\F(\U[\ell])=o_n(1)$ as a simple consequence of Lemma~\ref{lem:rand}.
Furthermore, as $\F$ is nondecreasing with regard to additional queries, for any set of timesteps $W \subseteq [\ell]$ and associated projection $P_{W}$ we obtain $\E_{G\la \Gd}\F(P_{W}(\U[\ell]))=o_n(1)$
\end{remark}

For our first lower bound, as we chose the next query time based on the transcript of $\A$ after the previous query,
we obtain that, for all local access algorithms, there \textit{exists} a sequence of bad queries.
Note that the algorithm of Theorem~\ref{alg:regular} succeeds asymptotically almost surely even on such adaptive sequences.
\begin{theorem}\label{thm:lb-formal}
There exist constants $q_d,n_0$ depending only on $d$, a family of distinguishing functions $\{F_G:G\in G(n,d)\}$,
and an adaptive query sequence $\Q$ of at most $O(\log(n))$ queries such that any (possibly randomized) local access algorithm $\A$,
given $\rn$ and $\rv$ probe access to $\Gd$ that makes fewer than $\sqrt{n}/q_d\log(n)$ probes per query satisfies for all $n\geq n_0$:
$$\E_{G\la \Gd}|\F(P_{\Q}(\U)) -  \F(\D)|\geq .99$$
where $\D$ is the distribution of $\A$'s responses given probe access to $G$ over sequence $\Q$.
\end{theorem}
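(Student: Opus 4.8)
The plan is to exhibit one adaptive query sequence $\Q$ of length $O(\log n)$ that, conditioned on the structural event of Lemma~\ref{lem:randstruct}, forces $\A$ to return positions on which $\F$ evaluates to $1$ with probability at least $.99$; since $\F$ is $o_n(1)$ on projections of true walks of length at most $\sqrt n/\log n$ (Remark~\ref{lem:gap}), this yields the stated $\ell_1$ gap. Fix $\ell=\lfloor\sqrt n/\log n\rfloor$, which exceeds $40\log n$ for $n\geq n_0$. The sequence first queries times $0$ and $\ell$, obtaining $v_0,v_\ell$, and all later query times are computed from the current transcript. Since the whole sequence uses at most $C\log n$ queries for an absolute constant $C$, each making at most $\sqrt n/(q_d\log n)$ probes, choosing $q_d=Ck_d$ with $k_d$ from Lemma~\ref{lem:randstruct} keeps the total number of probes below $\sqrt n/k_d$, so the event $\mathcal{G}$ that ``the revealed edges always form a forest and no probe merges two marked trees'' holds with probability at least $.995$ over $G\la\Gd$ and $\A$'s coins. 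Condition on $\mathcal{G}$; the point is that once two returned (hence marked) vertices $v,w$ exist, $\dist(v,w)$ is frozen --- either $\infty$ forever or a fixed finite value --- and $\sp(v,w)$, once it exists, never changes.

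Next I would set up the binary search. After the first two queries at most $2\sqrt n/(q_d\log n)<\ell/4$ edges are revealed, so either $\dist(v_0,v_\ell)=\infty$ or $\dist(v_0,v_\ell)<\ell/4$. The search maintains a sub-interval $[a,b]$ of already-queried times in one of two states, both stable under further queries by the previous paragraph: \textbf{(L)} $\dist(v_a,v_b)>b-a$ (including $\infty$), or \textbf{(S)} $\dist(v_a,v_b)<(b-a)/2$; it starts in (L) if $\dist(v_0,v_\ell)=\infty$ and in (S) otherwise. In state (L) I query the midpoint $m$; since $\dist(v_a,v_m)+\dist(v_m,v_b)\geq\dist(v_a,v_b)>(m-a)+(b-m)$, one half is again in state (L), and I recurse. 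After $O(\log\ell)$ rounds I reach an interval $[t,t+1]$ with $\dist(v_t,v_{t+1})>1$, so under $\mathcal{G}$ the edge $\{v_t,v_{t+1}\}$ was never probed; by Lemma~\ref{lem:guessprob} it lies in $G$ with probability at most $c_d/n$, hence the first clause of $\F$ fires with probability $1-c_d/n$.

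In state (S) I again query the midpoint $m$. If $v_m$ lies on $\sp(v_a,v_b)$ then $\dist(v_a,v_m)+\dist(v_m,v_b)=\dist(v_a,v_b)<(b-a)/2$, so one half is in state (S) and I recurse; if instead one half is pushed into state (L) I switch branches as above. The hard remaining case is that $v_m$ lies in the same tree as $v_a,v_b$ but off $\sp(v_a,v_b)$, with $\dist(v_a,v_m)\in[(m-a)/2,m-a]$ and $\dist(v_m,v_b)\in[(b-m)/2,b-m]$: here the obvious recursion stalls, and I expect a finer case analysis on the position of $v_m$ relative to $\sp(v_a,v_b)$ --- possibly querying one or two auxiliary times --- to always recurse onto a strictly shorter interval in state (L) or (S). Granting this, after $O(\log n)$ rounds I either reach an (L)-endgame, handled above, or an interval $[i,j]$ with $j-i=\Theta(\log n)>40\log n$ in state (S), i.e.\ $\dist(v_i,v_j)<(j-i)/2$. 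I then query every time in $[i,j]$, another $O(\log n)$ queries. If some consecutive pair $\{v_k,v_{k+1}\}$ is a non-edge of $G$, the first clause of $\F$ fires; otherwise the walk-edge graph $H=([n],\{\{v_k,v_{k+1}\}:i\leq k<j\})$ lies in $G$, and either $\PL(v_i,\dots,v_j)=\dist_H(v_i,v_j)<(j-i)/2$, so the second clause fires, or $\dist_H(v_i,v_j)\geq(j-i)/2>\dist(v_i,v_j)$, in which case $H$ and the revealed forest contain two distinct $v_i$--$v_j$ paths and hence a cycle, so under $\mathcal{G}$ some walk-edge $\{v_k,v_{k+1}\}$ was never probed --- an event of probability at most $(j-i)c_d/n=O(\log n/n)$ by Lemma~\ref{lem:guessprob} and a union bound. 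Either way, conditioned on $\mathcal{G}$, $\F$ fires on $\A$'s responses with probability $1-O(\log n/n)$.

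Putting the branches together, $\Pr_{G\la\Gd,\A}[\F(\D)=1]\geq.995(1-O(\log n/n))\geq.99$ for $n\geq n_0$, where $\F$ applied to a distribution denotes the probability it equals $1$. Since every query time lies in $[0,\ell]$ with $\ell\leq\sqrt n/\log n$, Remark~\ref{lem:gap} gives $\E_{G\la\Gd}\F(P_\Q(\U))=o_n(1)$ even after averaging over the transcript-dependent choice of query times, so $\E_{G\la\Gd}|\F(P_\Q(\U))-\F(\D)|\geq\E_{G\la\Gd}\F(\D)-\E_{G\la\Gd}\F(P_\Q(\U))\geq.99-o_n(1)\geq.99$ for $n\geq n_0$, with $q_d=Ck_d$; the sequence has $O(\log n)$ queries and the per-query probe bound is $\sqrt n/q_d\log n$ as required. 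The main obstacle is the off-path case of the state-(S) binary search; everything else is bookkeeping around Lemmas~\ref{lem:randstruct} and~\ref{lem:guessprob}.
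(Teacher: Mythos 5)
Your overall architecture matches the paper's: the structural forest event of Lemma~\ref{lem:randstruct}, the distinguishing function that fires on a non-edge or an anomalously short segment, Remark~\ref{lem:gap} to handle the true-walk side, a binary search when the known path is too long (this is exactly the paper's Lemma~\ref{rec:nopath}), and a binary search plus full-segment query when the known path is too short. However, there is a genuine gap at the step you yourself flag: the state-(S) recursion when the returned midpoint $v_m$ lies off $\sp(v_a,v_b)$. This is not an exceptional case but the generic one --- the adversarial algorithm can return any marked vertex, so you cannot assume $v_m\in\sp(v_a,v_b)$ --- and your strict invariant $\dist(v_a,v_b)<(b-a)/2$ is not maintainable: each halving can degrade the bound by an additive $r_m=\min_v\{\dist(v_m,v):v\in\sp(v_a,v_b)\}$, and ``one or two auxiliary queries'' per level gives no mechanism to either recurse or fire $\F$ when $r_m$ is large. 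Since this is the technical heart of the lower bound, ``granting this'' leaves the proof incomplete.

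The paper resolves it with two ingredients (Lemma~\ref{rec:tooshort} and Lemma~\ref{rec:spike}). First, it splits on a threshold $r_m\geq 20\log(n)$. In the far-off-path case it does not try to recurse at all: because probes never create cycles, both $\sp(v_x,v_m)$ and $\sp(v_m,v_y)$ must pass through the branch vertex $w=\argmin_{v\in\sp(v_x,v_y)}\dist(v_m,v)$, and a sub-binary-search (using only the standing assumption $\dist(v_t,v_{t'})\leq|t-t'|$, whose failure is the (L)-endgame) forces the algorithm to return $w$ at two times $t_1\leq m-20\log(n)$ and $t_2\geq m+20\log(n)$; the segment between them has $\PL=0$ over length at least $40\log(n)$, so $\F=1$ immediately. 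Second, in the near-path case $r_m<20\log(n)$ the additive loss is absorbed by a slackened invariant $\dist(v_x,v_y)<(y-x)/10+20\log(n)+2$ rather than a strict factor $1/2$, and the recursion is seeded with the stronger initial bound $\dist(v_0,v_e)<e/20$ (not $e/4$) so that when the interval shrinks to length about $200\log(n)$ the known path is still shorter than half the interval, at which point querying all intermediate times forces $\PL(v_x,\ldots,v_y)<(y-x)/2$ exactly as in your endgame. Your (L) branch, your use of Lemma~\ref{lem:guessprob} to rule out unprobed edges, and your final accounting are consistent with the paper; what is missing is precisely this spike argument and the slack-plus-stronger-seed bookkeeping that make the ``path too short'' recursion go through.
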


\begin{proof}
Let $q_d=k_d\cdot 203$ where $k_d$ is from Lemma~\ref{lem:randstruct}. Our procedure generates a sequence of at most $203\log(n)$ queries, so by assumption $\A$ makes at most $\sqrt{n}/k_d$ probes. Thus by the lemma there is $n_1$ such that for $n>n_1$ with probability $.995$ the algorithm never finds cycles or merges trees. Note that we treat returned vertices as marked. 
Denote this event by $\Xi$, and for the remainder of the proof we assume it holds (and otherwise we can terminate the sequence).\\

Our first query is at time $e=\sqrt{n}/\log(n)$ (and there is an implicit query at time $0$). We claim either $\dist(v_0,v_e)=\infty$ or $\dist(v_0,v_e)< e/20$. Otherwise we would have 
$$\infty>\dist(v_0,v_e)\geq e/20=\sqrt{n}/20\log(n),$$
so the algorithm made at least $\sqrt{n}/20\log(n)$ probes at the first query, violating our assumption on probe complexity. 

If $\dist(v_0,v_e)< e/20$, we apply Lemma~\ref{rec:tooshort}. Thus we can extend the query sequence $\Q\la (\Q,Q')$ by at most $201\log(n)$ queries such that any returned transcript $T(\Q)$ either satisfies $\F(V_\Q)=1$ for all $G$ (in which case we are done) or contains $v_{t},v_{t'} \in V_\Q$ such that $d(v_{t},v_{t'})>|t'-t|$.

Now we have $v_{t},v_{t'} \in V_\Q$ such that $d(v_{t},v_{t'})>|t'-t|$, so we apply Lemma~\ref{rec:nopath}. Thus we can extend the query sequence $\Q\la (\Q,Q')$ by at most $\log(n)$ queries such that any returned transcript $T(Q)$ contains $v_{t},v_{t+1} \in V_Q$ such that $d(v_{t},v_{t+1})>1$. Then let $S$ be the edges in the transcript at the termination of the query sequence. We have $|S|\leq \sqrt{n}$ and so by Lemma~\ref{lem:guessprob}, and the definition of $F_G$,
$$\Pr_{G\la \Gd \cap S}[\F(V_Q)=1]\geq \Pr_{G\la \Gd \cap S}[(v_{t},v_{t+1}) \notin G]=1-o_n(1).$$ 
Then taking $n_2$ such that this term is at least $.999$, for $n>\max(n_1,n_2)$ we obtain $\Eg\F(\D)\geq .994$. Then by Remark~\ref{lem:gap} there exists $n_3$ such that for any projection $P_Q$, for all $n>n_3$ 
$$\Eg\F(P_{\Q}(\U[\ell]))\leq \Eg\F(\U[\ell])<.004,$$
and by taking $n_0=\max(n_1,n_2,n_3)$ the result follows.
\end{proof}

To complete the proof, we must give short query sequences that when $\Xi$ holds drive almost all distinguishing functions to 1. We first show an algorithm that does not know of a short enough path between returned vertices can be forced to return consecutive vertices \emph{in the walk} that it does not know a connecting edge between.
\begin{restatable}[No Viable Path Known]{lemma}{Nopath}\label{rec:nopath}
Assuming $\Xi$ holds, given a transcript $T(Q)$ suppose there are prior queries $v_x,v_y \in V_Q$ such that $\dist(v_x,v_y)>|y-x|$. Then there exists an adaptive extension of the sequence $Q'\la (Q,q)$ of at most $\log(n)$ queries such that for any returned transcript $T(Q')$ there are $v_t,v_{t+1}\in V_{Q'}$ such that $d(v_{t},v_{t+1})>1.$
\end{restatable}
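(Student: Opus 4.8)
The plan is a binary search that maintains the invariant that we hold two \emph{already queried} times $a<b$ with $\dist(v_a,v_b)>b-a$, and that shrinks $b-a$ by a factor of (roughly) two per additional query until it reaches $1$, at which point $v_a,v_{a+1}$ is the pair we want. Assume $x<y$ and initialize $a=x$, $b=y$; note $b-a\leq \sqrt{n}/\log(n)$ in every application. The single step is: given $a<b$ with $\dist(v_a,v_b)>b-a$ and $b-a>1$, query the midpoint $m=\lfloor (a+b)/2\rfloor$ and look at the returned vertex $v_m$. I claim that in the resulting transcript either $\dist(v_a,v_m)>m-a$ or $\dist(v_m,v_b)>b-m$, and we recurse on whichever half witnesses this. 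This claim is immediate from the triangle inequality for shortest‑path distance (valid in \emph{any} graph, with the convention $\infty+(\cdot)=\infty$): since $\dist(v_a,v_b)\leq \dist(v_a,v_m)+\dist(v_m,v_b)$ and $\dist(v_a,v_b)>b-a=(m-a)+(b-m)$, the two summands cannot both be within their budgets $m-a$ and $b-m$.

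Two monotonicity points make the recursion sound. First, $b-a$ drops from its initial value $\ell\le\sqrt{n}/\log(n)$ to at most $\lceil\ell/2\rceil$ per step, so after at most $\lceil\log_2\ell\rceil\le\log(n)$ additional queries we reach $b-a=1$; then $a$ and $a+1$ are queried times with $\dist(v_a,v_{a+1})>1$, which is exactly the conclusion with $t=a$ (and $\dist(v_t,v_{t+1})>1$ is precisely the statement that $\A$ does not know a connecting edge). Second — and this is the only point that uses $\Xi$ — we need the inequality $\dist(v_a,v_b)>b-a$ to \emph{persist} as later queries reveal more edges: in a general graph new edges can only shrink distances, so without structure the invariant could be destroyed at a deeper level of the recursion. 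Here the endpoints $v_a,v_b$ we carry are always returned (hence marked) vertices, so under $\Xi$ no subsequent $\rn$ probe can merge their trees (that would merge two marked trees) or add an edge inside an existing tree (that would create a cycle); consequently, if $\dist(v_a,v_b)=\infty$ it stays $\infty$, and if it is finite the unique simple path realizing it is frozen, so $\dist(v_a,v_b)>b-a$ is preserved for the rest of the query sequence, and likewise for each pair of endpoints chosen along the way.

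The main (mild) obstacle is exactly this bookkeeping around $\Xi$: verifying that the distance inequality established at one level of the recursion survives the probes made while answering queries at deeper levels, and phrasing the triangle‑inequality step so that the $\infty$ cases (one or both of $v_a,v_m,v_b$ in distinct trees) are handled uniformly. Everything else — the midpoint choice, the accumulation of ceilings giving the $\lceil\log_2\ell\rceil\le\log(n)$ bound on the number of added queries, and the fact that the construction forces the desired pair \emph{for every} returned transcript regardless of $\A$'s internal choices — is routine.
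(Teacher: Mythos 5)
Your proposal is correct and follows essentially the same route as the paper: binary search on the ``gap,'' querying the midpoint and using the triangle inequality (with the $\infty$ convention) to conclude that one half must still violate its budget, then recursing until the interval has length $1$, which takes at most $\log(n)$ additional queries. Your extra care about persistence of the invariant under later probes is a detail the paper handles implicitly (via the observation that, under $\Xi$, distances between returned vertices are frozen since no probe can merge marked trees or create cycles), so it strengthens rather than changes the argument.
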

\begin{proof}
We show this by binary searching on the \say{gap}. WLOG assume $x<y$. At each step:
\begin{enumerate}
    \item Query at time $m=\lfloor(x+y)/2\rfloor$.
    \item We have $\dist(v_x,v_m)+\dist(v_m,v_y)\geq  \dist(v_x,v_y)$ so by non-negativity either $\dist(v_x,v_m)>m-x$ or $\dist(v_m,v_y)>y-m$.
    \item If the first holds, let $y\la m$ and recurse. Otherwise let $x\la m$ and recurse.
\end{enumerate}
Since $y-x<\sqrt{n}$ at the start of the recursion after $\log(n)$ queries we drive $|x-y|$ to $1$, and so obtain $v_t,v_{t+1} \in V_Q$ such that $\dist(v_t,v_{t+1})>1$ as desired.
\end{proof} 
We next show algorithms cannot \say{fake} the existence of longer paths. The key idea is that modifying $\sp(v_0,v_e)$ (or finding a second simple path) after returning $v_e$ is impossible when the algorithm fails to find cycles. We force $\A$ to return vertices that either trigger Lemma~\ref{rec:nopath} or feature excessive backtracking, which drives the distinguishing function to 1.
\begin{restatable}[Known Path Too Short]{lemma}{Tooshort}\label{rec:tooshort}
Assuming $\Xi$ holds, given a transcript $T(Q)$ with $v_e \in V_Q$ suppose $\dist(v_0,v_e)<e/20$. Then there exists an adaptive extension of the sequence $Q'\la (Q,q)$ of at most $201\log(n)$ queries such that any returned transcript $T(Q')$ either contains $v_t,v_{t'} \in V_{Q'}$ where $d(v_{t},v_{t'})>|t-t'|$ or satisfies $\F(V_{Q'})=1$ for all $G$.
\end{restatable}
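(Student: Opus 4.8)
The plan is to adaptively probe a short window of the walk on which $\A$ is forced to commit to something a genuine random walk would almost never produce, exploiting the rigidity of $\dist$ under $\Xi$. Two facts drive the argument. First, under $\Xi$ the revealed edges form a forest, so $\dist(\cdot,\cdot)$ is the induced tree metric, and once a distance is finite it is fixed for the rest of the query sequence. Second, if a \emph{contiguous} block of queried positions $v_a,v_{a+1},\dots,v_b$ has every consecutive pair $(v_i,v_{i+1})$ already a revealed edge, then the block is a walk inside the forest, and a walk in a forest realizes exactly the unique tree path between its endpoints (the walk's edge set is a subtree containing both), so $\PL(v_a,\dots,v_b)=\dist(v_a,v_b)$. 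Conversely, if some consecutive pair $(v_i,v_{i+1})$ is \emph{not} a revealed edge, then (as $d\ge 3$ forbids self-loops) $v_i\neq v_{i+1}$ and $\dist(v_i,v_{i+1})\neq 1$, so $\dist(v_i,v_{i+1})>1=|(i{+}1)-i|$, which is already an output of the required form.

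Granting these, it suffices to produce, using $O(\log n)$ adaptive queries, \emph{either} a pair $v_t,v_{t'}$ with $\dist(v_t,v_{t'})>|t-t'|$, \emph{or} a short interval $[a,b]$ with $b-a=\Theta(\log n)$ and $b-a>40\log n$ and $\dist(v_a,v_b)<(b-a)/2$. In the latter case I then additionally query every position in $[a,b]$, at most $O(\log n)$ more queries; by the first paragraph, either this exposes a bad consecutive pair, or $\PL(v_a,\dots,v_b)=\dist(v_a,v_b)<(b-a)/2$ on a segment of length more than $40\log n$, so the second clause of $\F$ holds for \emph{every} $G$. The $201\log n$ budget comfortably covers this, leaving room for the case analysis below.

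To obtain the short interval I binary search on the \say{gap}. Maintain an interval $[a,b]$ with both endpoints queried and $\dist(v_a,v_b)<(b-a)/2$; this holds initially for $[0,e]$ since by hypothesis $\dist(v_0,v_e)<e/20$. While $b-a>c\log n$ for a suitable constant $c$, query the midpoint $m$. If $\dist(v_a,v_m)>m-a$ or $\dist(v_m,v_b)>b-m$ --- in particular if $v_m$ lies in a different tree, making the distance $\infty$ --- output that pair and stop. Otherwise, if $v_m$ lies on the tree path $\sp(v_a,v_b)$, then $\dist(v_a,v_m)+\dist(v_m,v_b)=\dist(v_a,v_b)<(b-a)/2$, so one of the two halves has endpoint-distance below half of its own length; recurse into it, keeping the invariant and halving the length. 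The remaining case is when $v_m$ \emph{branches off} $\sp(v_a,v_b)$ by some amount $\tau>0$: then this split is only subadditive in the gap --- the gap can drop by $\tau$ --- so a single binary search need not reach a good short interval. Here I instead isolate the \emph{excursion} around time $m$: the maximal sub-walk that stays in the subtree hanging off $\sp(v_a,v_b)$ that contains $v_m$. Its two endpoints are queried at the same foot vertex of $\sp(v_a,v_b)$, so that sub-interval has endpoint-distance $0$; a secondary adaptive search either locates it --- handing it to the query-all step if it is long enough --- or, in the course of searching, exposes a pair violating a distance bound, again giving the first type of output. Iterating, after $O(\log n)$ rounds the interval length is $\Theta(\log n)$ with the invariant intact.

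I expect the branching case to be the main obstacle. The tree distance between the \emph{current} interval's endpoints is only subadditive under subdivision, with defect equal to the distance of the queried midpoint off the current path, so a plain binary search on that quantity can stall; one must instead pin down the off-path excursion and argue it is itself either long and nearly flat (a short walk segment of near-zero path length, i.e.\ heavy backtracking) or already inconsistent with $\dist$. Making this precise, and tuning the constants --- the $e/20$ in the hypothesis, the $40\log n$ threshold inside $\F$, and the $201$ --- so that the forced excursion is guaranteed long enough to trip $\F$, is the technical core. The remaining steps, including the caller's eventual appeal to Lemma~\ref{lem:rand} and Remark~\ref{lem:gap}, are routine.
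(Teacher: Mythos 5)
Your frame matches the paper's at a high level: binary search on the interval while maintaining a ``distance much shorter than elapsed time'' invariant, a final query-all step on a $\Theta(\log n)$-length window to force either a missing consecutive edge or a segment with $\PL(v_a,\ldots,v_b)=\dist(v_a,v_b)<(b-a)/2$, and an excursion argument when the queried midpoint branches off $\sp(v_x,v_y)$. But there is a genuine gap exactly where you flag it: a midpoint that branches off the path by a \emph{small} amount $\tau>0$. Your plan handles every branching event by isolating the off-path excursion, yet when $\tau$ is small the excursion can be as short as two steps; it then neither trips the second clause of $\F$ (which needs a segment of length more than $40\log n$) nor yields a pair with $\dist(v_t,v_{t'})>|t-t'|$, so no amount of constant-tuning makes ``the forced excursion is guaranteed long enough'' true --- nothing forces it to be long. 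Nor can you simply ignore the defect and keep bisecting, because your invariant is the exact inequality $\dist(v_a,v_b)<(b-a)/2$ and the subadditivity defect ($2\tau$ per level) accumulates: by the time the interval has length $\Theta(\log n)$, even $O(\log n)$-sized defects picked up along the way destroy the inequality, and the query-all step no longer certifies $\F=1$.

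The paper resolves this with a two-threshold split that your proposal lacks. If the branch distance $r_m$ is at least $20\log n$, it invokes Lemma~\ref{rec:spike}: two binary searches force $\A$ to return the foot vertex $w$ at times $t_1$ and $t_2$, and the standing assumption $\dist\leq$ time gap forces $t_1\leq m-20\log n$ and $t_2\geq m+20\log n$, so $v_{t_1},\ldots,v_{t_2}$ is a segment of length at least $40\log n$ with $\PL=0$, tripping $\F$ for every $G$; this is the only regime in which the excursion is \emph{guaranteed} long, and it is precisely the large-$r_m$ hypothesis that guarantees it. If instead $r_m<20\log n$, the paper does not chase the excursion at all: it recurses into the half whose endpoint distance is at most $\dist(v_x,v_y)/2+r_m$ and absorbs the defect into a \emph{relaxed} invariant of the form $\dist(v_x,v_y)<(y-x)/c+O(\log n)$, stopping the bisection at interval length $\Theta(\log n)$ chosen large enough (around $200\log n$) that the accumulated additive slack is still below half the interval, which is what lets the final query-all step conclude. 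Without this relaxed-invariant mechanism (or some substitute for small $\tau$), your recursion does not terminate with the property the query-all step needs, so the proof as proposed is incomplete.
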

\begin{proof}
For the remainder of the analysis we implicitly assume that for all queries $t,t'$, $\dist(v_t,v_{t'})\leq |t-t'|$ since otherwise the transcript satisfies the first condition and we are done. We give a recursive construction of $q$ that \say{pushes down} the short path. Let $x\la 0,y\la e$.\\

At each step we maintain the invariants that $\dist(v_x,v_y)<(y-x)/10+20\log(n)+2$ and $200\log(n) \leq y-x$, which are initially satisfied by the lemma statement.
\begin{enumerate}
    \item Query $\A$ at time $m=\lfloor (x+y)/2\rfloor$.
    \item Let $r_m=\min_{v \in V}\{d(v_m,v): v \in \sp(v_x,v_y)\}$ be the length of the simple path from $v_m$ to the simple path from $v_x$ to $v_y$.
    \item If $r_m\geq 20\log(n)$, we apply Lemma~\ref{rec:spike} with $(x,y,m)$ which uses at most $3\log(n)$ additional queries and achieves the condition.
    \item If $r_m<20\log(n)$, we can bound the path length from some endpoint to $v_m$. Either $\dist(v_x,v_m) \leq \dist(v_x,v_y)/2+r_m$ or $\dist(v_m,v_y) \leq \dist(v_x,v_y)/2+r_m$. In the first case,
    \begin{align*}
        \dist(v_x,v_m) &\leq \dist(v_x,v_y)/2 +r_m\\
        &< ((y-x)/10+20\log(n)+2)/2+20\log(n)\\
        &\leq (m-x)/10+20\log(n) + 2
    \end{align*}
    so letting $y\la m$ the requirements of the recursion are satisfied. In the other case we set $x\la m$ and achieve the same.
    \item Then if $y-x< 200\log(n)$, we have $d(v_x,v_y)<(y-x)/20+20\log(n)+2 < (y-x)/2$. In this case, we query $\A$ at times $\{x+1,x+2,\ldots,y-2,y-1\}$.
    Then any set of vertices $\{v_x,\ldots,v_y\} \subset V_{Q'}$ where $d(v_t,v_{t+1})\leq 1$ for all $t$ lies entirely inside $S$, and thus must contain $\sp(v_x,\ldots,v_y)$. Therefore we have a walk segment of length at least $40\log(n)$ where $\PL(v_x,\ldots,v_y)= d(v_x,v_y)<(y-x)/2$ and thus $\F(V_{Q'})=1$ for all $G$ by the definition of $F_G$ as desired.
\end{enumerate}
Then the total number of queries is bounded above by $(1+200)\log(n)$ by inspection and Lemma~\ref{rec:spike}, so we conclude.
\end{proof}

\begin{restatable}{lemma}{Spike}\label{rec:spike}
Assuming $\Xi$ holds, given a transcript $T(Q)$ suppose there are $v_x,v_m,v_y \in V_Q$ where $\min_{v \in V}\{d(v_m,v): v \in \sp(v_x,v_y)\} \geq 20\log(n)$. Then there exists an adaptive extension of the sequence $Q'\la (Q,q)$ of at most $2\log(n)$ queries such that any returned transcript $T({Q'})$ either contains $v_t,v_{t'} \in V_{Q'}$ where $d(v_{t},v_{t'})>|t-t'|$ or satisfies $\F(V_{Q'})=1$ for all $G$.
\end{restatable}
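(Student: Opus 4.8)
The plan is to leverage the rigidity forced by $\Xi$: once $\Xi$ holds the revealed edges form a forest that is never merged, so the instant two returned vertices become connected there is a unique simple path between them and it is frozen for the rest of the run. We may assume the times are ordered $x<m<y$ (in the application $m$ is the midpoint of $[x,y]$; the other orderings are symmetric or easier). First I would dispose of a trivial case: if $\dist(v_x,v_m)>m-x$, $\dist(v_m,v_y)>y-m$, or $\dist(v_x,v_y)>y-x$, we already have the first outcome with no extra queries. So assume all three distances are at most the corresponding time gap; in particular all are finite, so $v_x,v_m,v_y$ lie in a common tree and $v_m$ hangs off $P:=\sp(v_x,v_y)$ at a branch point $w$ with $\dist(v_m,w)=r$, where $r:=\min_{v\in P}\dist(v_m,v)\ge 20\log n$. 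Since $w$ lies on both $\sp(v_x,v_m)$ and $\sp(v_m,v_y)$, we get $\dist(v_x,w)\le m-x-r$ and $\dist(w,v_y)\le y-m-r$, hence
\[\dist(v_x,v_y)=\dist(v_x,w)+\dist(w,v_y)\le (y-x)-2r\le (y-x)-40\log n.\]
Thus the revealed simple path between the endpoints of the walk is at least $40\log n$ shorter than the elapsed time: there is a concentrated burst of backtracking inside $[x,y]$, and the job of the remaining queries is to localize it.

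The localization is a binary search run on $[x,m]$ that maintains an interval $[a,b]$ with $\dist(v_a,P)\le 5\log n$ and $\dist(v_b,P)\ge 15\log n$ --- valid initially with $a=x$ (distance $0$) and $b=m$ (distance $r\ge 20\log n$). In a forest one has $\dist(v_a,v_b)\ge \dist(v_b,P)-\dist(v_a,P)\ge 10\log n$, so as soon as $b-a<10\log n$ we have $\dist(v_a,v_b)>b-a$ and output the first outcome. Otherwise query $m'=\lfloor (a+b)/2\rfloor$: if $v_{m'}$ is disconnected from $P$ or $\dist(v_{m'},P)\ge 15\log n$ set $b\la m'$; if $\dist(v_{m'},P)\le 5\log n$ set $a\la m'$. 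Since the interval has polynomially bounded length and halves each step, this uses $O(\log n)$ queries --- provided the midpoint never lands in the intermediate band $5\log n<\dist(v_{m'},P)<15\log n$.

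The step I expect to be the main obstacle is exactly that intermediate band, where $v_{m'}$ lies $\Theta(\log n)$ above $P$: it neither certifies a too-long distance nor lets us shrink the bracket. My plan is to handle it recursively --- treat $v_{m'}$ as a fresh spike of height $h':=\dist(v_{m'},P)$ on whichever of $[a,m']$, $[m',b]$ still brackets it (keeping the reference path $P=\sp(v_x,v_y)$ fixed, so the branch-point geometry still applies), rerunning the search with the two thresholds rescaled to $h'/4$ and $3h'/4$. Each time the intermediate case recurs the height roughly halves, so after $O(\log\log n)$ nested restarts the band closes up and the search terminates; since the time interval also halves throughout, the total query count stays $O(\log n)$. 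Two points need care. First, one must check that the bracketing invariant survives each rescaling and that the final gap $\dist(v_a,v_b)$ still exceeds $b-a$ when we halt. Second, in the terminal configuration where the search bottoms out on a short interval $[a,b]$ whose endpoints are close in the revealed metric ($\dist(v_a,v_b)\le b-a$), the height bounds force $\dist(v_a,v_b)<(b-a)/2$; querying every intermediate time of $[a,b]$ --- a segment of length $\Theta(\log n)$, exactly as in the last case of the proof of Lemma~\ref{rec:tooshort} --- collapses $\PL(v_a,\dots,v_b)$ down to $\dist(v_a,v_b)$ and drives $\F(V_{Q'})=1$ for all $G$, delivering the second outcome. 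Confirming that this terminal segment has length at least $40\log n$ (so it qualifies in the definition of $\F$) and that the grand total of all these queries fits the stated $O(\log n)$ budget is the remaining bookkeeping.
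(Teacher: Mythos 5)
There is a genuine gap, and it sits exactly where you flag it: the intermediate band. Your rescaling recursion is not just bookkeeping. First, the bracketing invariant does not survive the rescaling as described: if $v_{m'}$ has height $h'\in(5\log n,15\log n)$ above $P$, the old bracket endpoint has height only $\leq 5\log n$, which need not be $\leq h'/4$, so the rescaled search does not start in a valid configuration, and no argument is given that the height decreases geometrically along the nesting. Second, the terminal case does not deliver the second outcome as claimed: to get $\F(V_{Q'})=1$ by exhaustive querying you need a segment of length at least $40\log n$ with $\PL$ less than half its length, but your binary search halves the \emph{time} interval throughout, so by the time you are willing to query every intermediate time the window may be far shorter than $40\log n$, and the asserted bound $\dist(v_a,v_b)<(b-a)/2$ in the rescaled terminal configuration is unsupported (after rescaling, the height gap between the two endpoints is only $\Theta(h')$, which can be $o(\log n)$). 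Finally, the lemma allows only $2\log n$ queries; nested restarts plus exhaustively querying a $\Theta(\log n)$-length segment puts the constant in doubt, and the constants matter because the probe budget $\sqrt n/q_d\log n$ in Theorem~\ref{thm:lb-formal} is tied to the total query count.

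The paper avoids height thresholds entirely and uses a different trick: let $w$ be the branch vertex where $\sp(v_m,\cdot)$ meets $P=\sp(v_x,v_y)$. Run a binary search on $[a,b]$ maintaining the invariant $w\in\sp(v_a,v_b)$ (which persists in a tree, since $\sp(v_a,v_b)\subseteq\sp(v_a,v_t)\cup\sp(v_t,v_b)$ when you query the midpoint $t$), once on $(x,m)$ and once on $(m,y)$. When $b-a$ reaches $1$, the standing assumption $\dist(v_a,v_b)\leq 1$ together with $w\in\sp(v_a,v_b)$ forces the algorithm to have \emph{returned} $w$ itself, at times $t_1$ and $t_2$; the assumption $\dist(v_t,v_{t'})\leq|t-t'|$ combined with $\dist(w,v_m)\geq 20\log n$ forces $t_1\leq m-20\log n$ and $t_2\geq m+20\log n$. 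The segment from $t_1$ to $t_2$ then has equal endpoints, so $\PL=0$ trivially and $\F(V_{Q'})=1$ for all $G$ --- with no intermediate queries and exactly $2\log n$ queries total. I would suggest replacing your height-band search with this ``force the algorithm to output the branch vertex at two well-separated times'' argument, which sidesteps all three problems above.
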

\begin{proof}
As before, we assume that for all queries $t,t'$, the returned vertices $v_t,v_{t'}$ satisfy $\dist(v_{t},v_{t'})\leq |t-t'|$ since otherwise the transcript satisfies the first condition and we are done.

We have $x,m,y$ with a tree structure where the distance from $v_m$ to the simple path from $v_x$ to $v_y$ is at least $r_t\geq 20\log(n)$. Let
$$w = \argmin_{v \in V}\{d(v_m,v): v \in \sp(v_x,v_y)\}$$
be the vertex (which has not necessarily been returned) at the point where the simple path to $v_m$ branches from $\sp(v_x,v_y)$. With at most $\log(n)$ queries, we force $\A$ to output that the random walk visits $w$ at times $t_1\leq m-20\log(n)$ and $t_2\geq m+20\log(n)$.\\ 

To do so, we apply the following recursion. Let $a\leq b$ be times and $u$ a vertex where $u \in \sp(v_a,v_b)$.
\begin{itemize}
    \item Query $\A$ at time $t=\lfloor(a+b)/2\rfloor$. If $v_t=u$, halt.
    \item We have $\dist(v_a,v_t)\leq t-a$ and $\dist(v_t,v_b)\leq b-t$ by assumption.
    \item Either $u \in \sp(v_a,v_t)$ or $u \in \sp(v_t,v_b)$. If the first let $b\la t$ and otherwise $a\la t$.
\end{itemize}
After $\log(n)$ queries we drive $b-a$ to $1$. By assumption $d(v_a,v_b)\leq b-a=1$ and $u \in \sp(v_a,v_b)$, so $\A$ must have returned $u$ at some timestep.\\

We use this subrecursion twice, with $(a,b,u)=(x,m,w)$ for the first call and $(a,b,u)=(m,y,w)$ for the second. Let $t_1$, $t_2$ be the times obtained from these applications where $v_{t_1}=v_{t_2}=w$. We claim $t_1\leq m-20\log(n)$ and $t_2\geq m+20\log(n)$. If $t_1<m-20\log(n)$, we have $d(v_{t_1},v_m)=d(w,v_m)\geq  20\log(n)$ and thus $|t_1-m|<d(v_{t_1},v_m)$, violating our first assumption (and the other case is identical). But then if this does not occur, we have a segment $\{v_{t_1},\ldots,v_{t_2}\} \subset V_{Q'}$ of length at least $40\log(n)$ where $v_{t_1}=v_{t_2}=w$, so $\PL(v_{t_1},\ldots,v_{t_2})=0<40\log(n)/2$ which implies $\F(V_{Q'})=1$ for all $G$ as desired.
\end{proof}
This concludes the proof of our adaptive lower bound.

\subsection{Proof of Non-Adaptive Lower Bound}
\label{app:oblivious-lb}
The Theorem~\ref{thm:lb-formal} lower bound constructs valid adaptive query sequences, but relies on looking at the edges known to $\A$ to choose the next query and so does not rule out non-robust local access algorithms for $\Gd$. We now give a weaker $\Omega(n^{1/4})$ lower bound that uses a global query sequence (not even depending on the returned vertices) that still suffices to rule out efficient local access by an exponential margin.

\begin{restatable}{theorem}{ObliviousLBformal}\label{thm:oblivious-lb-formal}
There exist constants $k_d,n_0$ depending only on $d$, a family of distinguishing functions $\{F_G:G\in G(n,d)\}$,
and a fixed query sequence $Q$ of $n^{1/4}$ queries such that any (possibly randomized) algorithm $\A$ given $\rn$ and $\rv$ probe access
that makes fewer than $n^{1/4}/k_d$ probes per query satisfies for all $n\geq n_0$:
$$\E_{G\la \Gd}|\F(P_{Q}(\U)) -  \F(\D)|\geq .99 $$
where $\D$ is the distribution of $\A$'s responses given probe access to $G$ over sequence $Q$.
\end{restatable}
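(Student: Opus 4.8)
The plan is to reuse the distinguishing-function framework of Theorem~\ref{thm:lb-formal} with a query sequence fixed in advance. Take $k_d$ to be a sufficiently large multiple of the constant from Lemma~\ref{lem:randstruct} and set $\ell := n^{1/4}$. Let $Q$ query time $\ell$ first, and then the times $1,2,\ldots,\ell-1$, so $|Q| = \ell$; together with the fixed start vertex $v_0 = 1$, the responses of $\A$ determine a full candidate walk $(v_0,v_1,\ldots,v_\ell)$. Since $\A$ makes fewer than $n^{1/4}/k_d$ probes per query, it makes fewer than $\sqrt n/k_d$ graph probes over all of $Q$ (the $\ell = o(\sqrt n)$ markings of returned vertices being absorbed into $k_d$), so Lemma~\ref{lem:randstruct} applies: the event $\Xi$ --- the $\rn$ probes define a forest and no $\rn$ probe ever merges two marked trees, returned vertices treated as marked --- holds with probability at least $1 - 1/1000$. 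As in Remark~\ref{lem:gap}, since $\ell \leq \sqrt n/\log n$ we have $\Eg\F(P_Q(\U)) = o_n(1)$, so it suffices to prove $\Eg\F(\D) \geq .99$, the theorem then following from $\Eg|\F(P_Q(\U)) - \F(\D)| \geq \Eg\F(\D) - \Eg\F(P_Q(\U))$.

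Condition on $\Xi$ and let $S$ be the final set of revealed edges, so $|S| \leq \sqrt n$. If some consecutive pair $(v_i, v_{i+1})$ of the returned walk is not in $S$, then this pair is determined by the transcript and, by Lemma~\ref{lem:marginalunif}, $G$ is uniform over $\Gd \cap S$ given the transcript, so Lemma~\ref{lem:guessprob} gives $\Pr[(v_i,v_{i+1}) \in G] \leq c_d/n$; otherwise the first clause of $\F$ fires. Otherwise every consecutive pair of the walk lies in $S$, so the traversed edges form a connected subgraph of the revealed forest, hence a subtree containing $v_0$ and $v_\ell$, whence $\PL(v_0,\ldots,v_\ell) = \dist(v_0,v_\ell) =: L$. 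If $L < \ell/2$ then the second clause of $\F$ fires on the segment $(v_0,\ldots,v_\ell)$, valid since $\ell \geq 40\log n$ for $n$ large. Thus, conditioned on $\Xi$, failing to trigger $\F$ forces either a non-revealed consecutive pair that is nevertheless a genuine edge (probability $\leq c_d/n$ given the transcript) or a walk that stays inside the forest with $L \geq \ell/2$; it remains to bound the latter event.

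This is the crux, and it exploits that $\A$ commits to $v_\ell$ at the very first query, having revealed fewer than $n^{1/4}/k_d < \ell/2$ edges by then. Under $\Xi$ the revealed graph remains a forest and every merge merely appends a singleton vertex to an existing tree, so once two vertices lie in one tree the path between them never changes. Hence if $v_0$ and $v_\ell$ were already in the same tree after query $1$, that path --- of length below $\ell/2$ --- would be permanent, contradicting $L \geq \ell/2$; so $v_0$ and $v_\ell$ lie in different trees after query $1$. From then on $v_\ell$ is marked, so by $\Xi$ its tree never merges with another marked tree, and the only way $v_0$ can end up in $v_\ell$'s tree is to be absorbed, while still an unmarked singleton, by a single probed edge joining $v_0$ directly to the then-current tree of $v_\ell$. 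For each of the at most $\sqrt n/k_d$ probes, the probability (conditioned on the transcript so far, via Lemma~\ref{lem:guessprob}) that it effects this connection is $O(1/(k_d\sqrt n))$, since the tree of $v_\ell$ has at most $\sqrt n/k_d + 1$ vertices and each relevant edge incident to the singleton $v_0$ is present with probability $O(1/n)$. A union bound over probes gives that the walk stays inside the forest with $L \geq \ell/2$ with probability $O(c_d/k_d^2)$.

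Combining, and taking $k_d$ large enough that $\Pr[\neg\Xi] + O(c_d/k_d^2) \leq 1/500$, we get $\Eg\F(\D) \geq 1 - 1/500 - c_d/n - o_n(1)$, hence $\Eg|\F(P_Q(\U)) - \F(\D)| \geq \Eg\F(\D) - \Eg\F(P_Q(\U)) \geq 1 - 1/500 - o_n(1) \geq .99$ for all $n \geq n_0$. The main obstacle is exactly the case $L \geq \ell/2$: one must show that forcing $\A$ to name the far endpoint $v_\ell$ before it has explored $G$ prevents it from later stitching together a long revealed path from $v_0$ to that vertex, which is where the non-merging guarantee of Lemma~\ref{lem:randstruct} and the incidence bound of Lemma~\ref{lem:guessprob} combine. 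Note the order of $Q$ is essential: had we queried $1,2,\ldots,\ell$ in increasing order, $\A$ could simply extend the walk one step per query using $O(1)$ probes each, and no lower bound would hold.
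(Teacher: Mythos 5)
Your proposal is correct and follows essentially the same route as the paper: the same query order (far endpoint $n^{1/4}$ first, then the intermediate times), the same distinguishing functions, the same reliance on Lemma~\ref{lem:randstruct}, Lemma~\ref{lem:marginalunif}, Lemma~\ref{lem:guessprob} and Remark~\ref{lem:gap}, and the same terminal case split between an unrevealed consecutive pair (first clause of $\F$ fires with probability $1-o_n(1)$) and a fully revealed walk forcing $\PL(v_0,\ldots,v_\ell)=\dist(v_0,v_\ell)$. The only deviation is local: where the paper dispatches the case $\dist(v_0,v_\ell)\geq \ell/2$ deterministically under $\Xi$ (the start vertex is treated as returned, hence marked, at time $0$, so its tree can never merge with $v_\ell$'s after the first query), you instead add a short union bound over probes that could absorb an unmarked singleton $v_0$ into $v_\ell$'s tree --- a valid, slightly more cautious treatment of the same point, at the cost of a larger $k_d$ and of implicitly rerunning the computation inside Lemma~\ref{lem:randstruct} to claim failure probability $1/1000$ rather than the stated $.005$.
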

\begin{proof}
Take $k_d$ as in Lemma~\ref{lem:randstruct}, and define the query sequence as $Q=(n^{1/4},2,3,\ldots,n^{1/4}-1)$. For convenience, define $e=n^{1/4}$. The distinguishing function is identical to before, so by Remark~\ref{lem:gap} there is $n_1$ such that for $n\geq n_1$ we have $\Eg\F(P_Q(\U))<.004$.

As $|Q|=n^{1/4}$, the number of probes made by $\A$ is bounded by $\sqrt{n}/k_d$, and so by Lemma~\ref{lem:randstruct} there is $n_2$ such that for all $n>n_2$, with probability $.995$ the algorithm never finds cycles or merges trees. Note that we treat returned vertices as marked. Denote this event by $\Xi$.

Given $\Xi$ holds, we claim that at the completion of the query sequence either $d(v_0,v_{e})=\infty$ or $d(v_0,v_{e})<n^{1/4}/2$. If this was not the case, since $\A$ cannot alter $d(v_0,v_{e})$ after the first query without finding cycles, $\A$ made at least $n^{1/4}/2>n^{1/4}/k_d$ probes after the first query which violates our assumption on probe complexity. 

Then let the transcript at the end of the sequence be $T(Q)=(V_Q,S)$, recalling $S$ is the edges revealed via probes.
\begin{enumerate}
    \item If there exist $v_t,v_{t+1} \in V_Q$ such that $\dist(v_t,v_{t+1})>1$, we have $|S|\leq \sqrt{n}$  and so by Lemma~\ref{lem:guessprob} and the definition of $F_G,$
    $$\Pr_{G\la \Gd \cap S}(\F(V_Q)=1)\geq \Pr_{G\la \Gd \cap S}[(v_{t},v_{t+1}) \notin G]=1-o_n(1).$$
    \item If this never occurred, the segment $\{v_0,\ldots,v_e\} \subseteq V_Q$ traverses only edges in $S$, so it must contain all edges in $\sp(v_0,v_e)$. Therefore $\PL(v_0,\ldots,v_e)= \dist(v_0,v_e)< n^{1/4}/2$ and $\F(V_Q)=1$ for all $G$.
\end{enumerate}
Then taking $n_0=\max(n_1,n_2,n_3)$ where $n_3$ is chosen such that the $1-o_n(1)$ term is above $.999$, the result follows.
\end{proof}

\section{Efficient Local Access for Abelian Cayley Graphs}\label{sec:abelian}
We now turn to classes of graphs with algebraic structure. We achieve efficient (i.e. runtime polylogarithmic in $n$) local access to the hypercube, $n$-cycle and spectral expanders. In Appendix~\ref{app:algebraic}, we achieve efficient local access for arbitrarily dense graphs via the tensor and Cartesian product.
This is comparable to the work of~\cite{BRY,GGN}, which (among many other results) give a local access algorithms for random walks with fixed start and end vertices on specific classes of graphs.
\begin{definition}
    For a group $\Gamma$ of order $n$ and $S\subseteq \Gamma$, the Cayley graph $G=\Cay(\Gamma,S)$ is the degree $|S|$ graph on $n$ vertices where for all $g\in \Gamma, e \in S$ we add the edge $(g,ge)$ with label $e$. We call $S$ the \textbf{generators} of $G$. We say the Cayley graph is \textbf{abelian} if the subgroup generated by $S$ is. 
\end{definition}
More concretely, a Cayley graph is abelian if for all $e_i,e_j \in S$ we have $e_ie_j=e_je_i$. We do not require $S$ to be closed under inverses.

\begin{restatable}{theorem}{Abelian}\label{alg:abelian}
    Fix $\ep>0$ and $B \in \N$. Let $G=\Cay(\Gamma,S)$ be an abelian Cayley graph on $n$ elements with $d=|S|$, where for all $g\in G$, $g^2$ is computable in $\polylog(n)$ time.  There is a local access algorithm using $O(d\log(t))$ additional space and $d\cdot \polylog(n,t,B/\ep)$ time and working space per query.
\end{restatable}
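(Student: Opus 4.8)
The plan is to exploit the fact that in an abelian Cayley graph the endpoint of a walk depends only on the \emph{multiset} of generators traversed, not on their order. Fix an enumeration $S=\{e_1,\ldots,e_d\}$ and let $\phi:\Z^d\to\Gamma$ be the homomorphism $\mathbf{e}_j\mapsto e_j$, which is well defined precisely because the $e_j$ pairwise commute. Then a length-$\ell$ walk on $G$ from $s$ that uses generator $j$ exactly $c_j$ times ends at $s\cdot\phi(\mathbf{c})=s\prod_j e_j^{c_j}$, and a uniformly random walk on $G$ is exactly the $\phi$-image of a uniformly random walk on the (directed, $d$-out-regular) Cayley graph $\hat{G}=\Cay(\Z^d,\{\mathbf{e}_1,\ldots,\mathbf{e}_d\})$, i.e.\ of the coordinate-count process $(\mathbf{C}_t)_{t\geq 0}$ with $\mathbf{C}_0=\mathbf{0}$ and i.i.d.\ uniform unit-vector increments. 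So it suffices to give local access to walks on $\hat{G}$ and push each returned vertex through the map $x\mapsto s\cdot\phi(x)$; since $\ell_1$ distance cannot increase under a deterministic map, an $\ep$-accurate algorithm for $\hat{G}$ yields an $\ep$-accurate algorithm for $G$.

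Walks on $\hat{G}$ are easy to handle because $\mathbf{C}_t\sim\MNom(t,\mathbf{1}/d)$ and, crucially, the process has independent increments: for $s<t$ we have $\mathbf{C}_t-\mathbf{C}_s\sim\MNom(t-s,\mathbf{1}/d)$ independent of $\mathbf{C}_s$, and $\mathbf{C}_s$ conditioned on $\mathbf{C}_t=\mathbf{D}$ (and $\mathbf{C}_0=\mathbf{0}$) is multivariate hypergeometric with population $t$, sample size $s$, and color counts $\mathbf{D}$, since conditioning the sum of two independent multinomials on their total yields $\MHGeom$. The algorithm therefore maintains a sorted list of previously determined times together with the cumulative count vector at each (treating time $0$, with count $\mathbf{0}$, as always determined); this costs $O(d\log t)$ persistent storage per newly determined time. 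On a query $t$ with bracketing determined times $t_-<t<t_+$, if $t_+$ is undefined it samples $\mathbf{A}\approx\MNom(t-t_-,\mathbf{1}/d)$, and otherwise $\mathbf{A}\approx\MHGeom$ with population $t_+-t_-$, sample size $t-t_-$, color counts $\mathbf{C}_{t_+}-\mathbf{C}_{t_-}$; it records $\mathbf{C}_t=\mathbf{C}_{t_-}+\mathbf{A}$ and returns $s\prod_j e_j^{(\mathbf{C}_t)_j}$. Both $\MNom$ and $\MHGeom$ are drawn coordinate by coordinate via the univariate $\BNom$/$\HGeom$ samplers of Section~\ref{sec:preliminaries}, using $d$ calls each accurate to $\ep/(Bd)$ in $\polylog(t,B/\ep)$ time; the group element $\prod_j e_j^{(\mathbf{C}_t)_j}$ is formed by repeated squaring using $O(d\log t)$ group operations, each $\polylog(n)$ by the hypothesis that squaring (and hence group arithmetic) is efficient.

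Correctness follows the bookkeeping of Proposition~\ref{prop:errsadd}, applied to the auxiliary walk $(\mathbf{C}_t)$ on $\hat{G}$ rather than to $G$ directly: because the count process has independent increments, the distribution of $\mathbf{C}_t$ conditioned on all previously determined counts equals its distribution conditioned on the two bracketing counts, which is exactly the $\MNom$/$\MHGeom$ law sampled above; deciding $\mathbf{C}_t$ to within $\ep/B$ of that law — which holds since it is a deterministic function of $d$ univariate samples each $\ep/(Bd)$-accurate — adds at most $\ep/B$ to the $\ell_1$ error, so a union bound over the $\leq B$ queries gives total error $\leq\ep$ on $\hat{G}$, hence on $G$. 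The main conceptual point, and the step I expect to be the only real obstacle, is the reduction itself: one must resist sampling a constrained walk \emph{on $G$} between two returned vertices (many count vectors realize the same displacement $s^{-1}v_{t_-}^{-1}v_{t_+}$, and the correct conditional weighting seems hard to compute), and instead maintain the lifted count vectors, which sidesteps the issue at the harmless cost that $\hat{G}$ is directed and non-reversible — the algorithm never reverses a walk, using only the forward independent-increments structure. The remaining ingredients — that summing-then-conditioning two multinomials gives $\MHGeom$, and that the univariate samplers plus the $g^2$ assumption deliver the claimed per-query time and the $O(d\log t)$ incremental space — are routine.
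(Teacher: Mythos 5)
Your proposal is correct and follows essentially the same route as the paper: both reduce to tracking generator-label counts (your cumulative count vectors on the lift to $\Z^d$ are just a cosmetic variant of the paper's per-interval dictionaries $L_{t_i}$), sample $\MNom$ for an unbounded extension and $\MHGeom$ for a bracketed query, return the deterministic group-product image computed by repeated squaring, and account for error per query via the Proposition~\ref{prop:errsadd}-style argument with a union bound over the $B$ queries.
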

We defer the proof to Appendix~\ref{app:algebraic}. In the parallel model, \cite{Teng} gives an algorithm for efficient generation of random walks on all Cayley graphs. For a walk of length $t$, they sample $\sigma \in S^t$ and compute the $t$ prefixes $\{s_i\}_{i\in [t]}=\{\prod_{j=1}^i\sigma_j\}_{i\in [t]}$ in parallel. Unfortunately, even computing a single prefix of a product of generators in sequential sublinear time is not obviously possible without further restrictions. 

Although abelianness represents a strong algebraic assumption, Theorem~\ref{alg:abelian} immediately provides local access algorithms for several graph families of interest in computer science. 
\begin{restatable}{corollary}{Abelianaccess}\label{cor:abelianaccess}
~
\begin{enumerate}
    \item By considering $\Gamma=(\Z/2\Z)^d$ and taking $S=(e_1,\ldots,e_d)$ as the generating set, there is an efficient local access algorithm for random walks on the dimension $d$ hypercube for all $d$.
    \item By considering $\Gamma=\Z/n\Z$ and taking $S=(1,-1)$ as the generating set, there is an efficient local access algorithm for random walks on the $n$-cycle for all $n$.
    \item By Proposition 5 of~\cite{AR}, for all $m \in \N$ there is an explicitly constructible set $S_m$ where $|S_m|=O(m)$ such that $\Cay(\Z_2^m,S_m)$ has spectral gap $1/3$. Thus there is an efficient local access algorithm for random walks on a class of $\polylog$ degree expanders of size $2^m$ for all $m$.
\end{enumerate}
\end{restatable}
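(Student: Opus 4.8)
The plan is to derive all three parts directly from Theorem~\ref{alg:abelian}: for each family I will exhibit the underlying group $\Gamma$ and generating set $S$, check the two hypotheses of that theorem---abelianness and that squaring in $\Gamma$ is computable in $\polylog(n)$ time---and then observe that the per-query running time $d\cdot\polylog(n,t,B/\ep)$ collapses to $\polylog(n)$ once we note that in every case $d=|S|=O(\log n)$, so that the resulting algorithm meets the definition of \emph{efficient} with $\ep=1/\poly(n)$, $B=\poly(n)$, and $t_i\leq\poly(n)$.

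For the hypercube, take $\Gamma=(\Z/2\Z)^d$ with $S=(e_1,\ldots,e_d)$; this is an abelian group of order $n=2^d$, the Cayley graph $\Cay(\Gamma,S)$ is the $d$-dimensional hypercube by construction, and since every element satisfies $g^2=0$, squaring is computable in $O(d)=O(\log n)$ time. As $d=\log_2 n$, Theorem~\ref{alg:abelian} gives per-query time $\polylog(n)$. For the $n$-cycle, take $\Gamma=\Z/n\Z$ with $S=(1,-1)$: this is abelian of order $n$, the Cayley graph is the cycle on $n$ vertices, and $g^2=2g \bmod n$ is a single modular doubling, hence computable in $\polylog(n)$ time; here $d=2$ is constant, so again Theorem~\ref{alg:abelian} yields $\polylog(n)$ per query. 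For the expanders, invoke Proposition~5 of~\cite{AR} to obtain, for each $m$, an explicit $S_m$ with $|S_m|=O(m)$ such that $\Cay(\Z_2^m,S_m)$ has spectral gap $1/3$, i.e.\ $\lambda\leq 2/3<1$; the group $\Z_2^m$ is abelian of order $n=2^m$ and squaring is again trivial, while $d=|S_m|=O(m)=O(\log n)$, so Theorem~\ref{alg:abelian} produces an efficient local access algorithm, and the spectral gap bound certifies that these graphs form a family of $\polylog$-degree expanders of size $2^m$.

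I do not expect a genuinely hard step here: the content is entirely bookkeeping. The one point that needs care is verifying the efficiency condition in each case, which amounts to checking that $d\cdot\polylog(n,t,B/\ep)=\polylog(n)$---immediate since $d\leq O(\log n)$ everywhere and $t,B,1/\ep$ are polynomially bounded in $n$---together with the fast-squaring hypothesis of Theorem~\ref{alg:abelian}, which is trivial for the elementary abelian $2$-groups $(\Z/2\Z)^d$ and $\Z_2^m$ (every square is the identity) and a single modular multiplication for $\Z/n\Z$.
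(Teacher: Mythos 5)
Your proposal is correct and follows exactly the route the paper intends: the corollary is stated as an immediate instantiation of Theorem~\ref{alg:abelian}, and your verification of abelianness, $\polylog(n)$-time squaring, and $d=O(\log n)$ (so that $d\cdot\polylog(n,t,B/\ep)$ is $\polylog(n)$ under the paper's efficiency regime) is precisely the bookkeeping required, including citing Proposition~5 of~\cite{AR} for the expander family.
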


We remark that despite all constant-degree abelian Cayley graphs being poor expanders, efficient local access is easy to provide, while for well-expanding random-regular graphs we obtain a polynomial lower bound. This indicates sublogarithmic mixing time is not a determinative property for efficient local access.

\bibliography{references}

\appendix

\section{Proof of Lemma~\ref{lem:guessprob}}\label{app:configmdl}
We apply the configuration model of~\cite{Bol}, extended to sequences of degrees. In the configuration model, given a degree sequence $\mathbf{d}=(d_i)_{i\in [n]}$, we place $d_i$ half-edges at vertex $i$ and connect all half-edges with a random matching. In the case where $d_i=d$ for all $i$, if the graph induced by a random matching is simple, we produce a random draw from $\Gd$. In our case, we \say{remove} half edges that are already occupied by $S$, place a random matching on the remaining half edges, and show that if the induced graph is simple and does not duplicate edges in $S$, we obtain a random draw from $\Gd\cap S$. We can use then use this to analyze conditional edge probabilities.

We first recall a lower bound on the probability that such a random matching induces a simple graph. For a degree sequence $\mathbf{d}$, define $D=D(\mathbf{d})=\sum_{i=1}^n d_i$, $D_2 = \sum_{i=1}^n d_i(d_i-1)$ and $D_3 = \sum_{i=1}^n d_i(d_i-1)(d_i-2)$. Let $P(\mathbf{d})$ be the probability that a random matching on $\mathbf{d}$ has no loops or multiple edges. The forthcoming lemma assumes $\max_i d_i^3 = o(D)$ which clearly holds in our application.
\begin{restatable}[\cite{MW} Lemma 5.1]{lemma}{Simple}\label{app:simple}
\[P(\mathbf{d})\geq \exp(-\frac{D_2}{2D}-\frac{D_2^2}{4D^2}-\frac{D_2^2D_3}{2D^4}).\]
\end{restatable}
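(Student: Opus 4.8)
The plan is to prove this by the method of switchings, in the style of McKay and Wormald. Write $N=(D-1)!!$ for the number of perfect matchings on the $D$ half-edges, and for $i,j\ge 0$ let $N(i,j)$ be the number of matchings whose induced multigraph has exactly $i$ loops and exactly $j$ pairs of vertices joined by two or more edges. Then $P(\mathbf{d})=N(0,0)/N$, so the statement is equivalent to $\sum_{i,j}N(i,j)/N(0,0)\le \exp\!\big(\tfrac{D_2}{2D}+\tfrac{D_2^2}{4D^2}+\tfrac{D_2^2D_3}{2D^4}\big)$; here $\tfrac{D_2}{2D}$ and $\tfrac{D_2^2}{4D^2}$ are, up to lower-order factors, the expected numbers of loops and of doubled pairs respectively, and the target is to show the counts are approximately Poisson with these means, with the $D_3$ term absorbing all remaining deviation.

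First I would eliminate loops. Define a loop-reducing switching: given a matching with $i\ge 1$ loops, pick one loop $\{a,b\}$ at a vertex $v$ together with two further edges $\{c,d\},\{e,f\}$, and replace the three edges by $\{a,c\},\{b,e\},\{d,f\}$, obtaining a matching with $i-1$ loops and the same $j$, provided the choice is \emph{generic} (all involved vertices distinct and $\ne v$, no new loop or doubled pair created); non-generic choices are forbidden. Count the forward switchings out of an arbitrary $(i,j)$-matching and the inverse switchings into it, and --- this is the real work --- estimate both up to a multiplicative $1+o(1)$ by enumerating and bounding the forbidden choices using $\max_i d_i^3=o(D)$. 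Since the total forward count over $(i,j)$-matchings equals the total backward count over $(i-1,j)$-matchings, this gives $N(i,j)/N(i-1,j)\le \tfrac{D_2}{2Di}\,(1+o(1))$; telescoping and summing over $i$ contributes a factor $\exp(\tfrac{D_2}{2D}+(\text{error}))$. Then I would repeat the argument with a doubled-pair-reducing switching (split a double edge into two single edges using one extra generic edge, reducing $j$ by one and fixing $i$), obtaining $N(0,j)/N(0,0)\le (\tfrac{D_2^2}{4D^2})^j/j!\,(1+o(1))^j$ and hence a summed factor $\exp(\tfrac{D_2^2}{4D^2}+(\text{error}))$. Multiplying the two bounds, it remains to verify that the accumulated error terms are bounded by $\tfrac{D_2^2D_3}{2D^4}$ and are signed the right way.

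The main obstacle is precision of the switching counts. A crude count only yields $\sum_{i,j}N(i,j)/N(0,0)\le \exp(O(\tfrac{D_2}{D})+O(\tfrac{D_2^2}{D^2}))$, which has the wrong leading constants; to recover exactly $\tfrac{D_2}{2D}+\tfrac{D_2^2}{4D^2}$ the forward and backward counts must be matched to within $1+o(1)$, which forces a careful case analysis of every way a switching can be non-generic (an endpoint equal to $v$, landing on or creating another loop or doubled pair, and so on) and a bound on each such count by a lower-order fraction of the total --- this is exactly where $\max_i d_i^3=o(D)$ is invoked, repeatedly, and it is also what produces the $D_3$-shaped correction term. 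One must additionally check that every error term is signed so as to only increase the upper bound on $\sum N(i,j)/N(0,0)$, and define the two switchings so that neither disturbs the other's parameter (the genericity restriction handles this). An alternative, if the switching bookkeeping becomes unwieldy, is a direct Poisson approximation of the joint distribution of loop- and doubled-pair counts via the method of moments, but obtaining a clean \emph{lower} bound with explicit constants that way seems harder than the switching route, so I would pursue switchings.
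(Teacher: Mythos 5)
First, a point of reference: the paper does not prove this statement at all. It is imported verbatim as Lemma~5.1 of \cite{MW} and used as a black box in Appendix~\ref{app:configmdl}, so there is no internal proof to measure your argument against; the only meaningful comparison is with the switching machinery of the cited source. Your plan is indeed in that family --- configuration-model matchings, the identity $P(\mathbf{d})=N(0,0)/N$, loop-reducing and double-edge-reducing switchings, and a Poisson-style summation of the ratios $N(i,j)/N(0,0)$ --- so the approach is the right one in spirit.

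The genuine gap is that the proposal never delivers the inequality actually being claimed. The statement is an explicit bound whose entire content is the exact constants $\tfrac{1}{2}$ and $\tfrac{1}{4}$ in the first two terms and the explicit remainder $\tfrac{D_2^2D_3}{2D^4}$; an argument in which every forward/backward switching count is only matched up to a multiplicative $1+o(1)$ can at best give $P(\mathbf{d})\geq \exp\bigl(-(1+o(1))\bigl(\tfrac{D_2}{2D}+\tfrac{D_2^2}{4D^2}\bigr)\bigr)$, which is a different (and for the paper's purposes weaker, because unquantified) statement. The step you yourself flag as ``the real work'' --- enumerating every non-generic switching, bounding each bad count explicitly rather than asymptotically, checking that each error is signed so as only to inflate the upper bound on $\sum_{i,j}N(i,j)/N(0,0)$, and verifying that the total is absorbed by exactly $\tfrac{D_2^2D_3}{2D^4}$ --- is precisely the lemma, and it is deferred rather than carried out. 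Two further points would need attention even in outline: the ratio bounds $N(i,j)/N(i-1,j)$ and $N(0,j)/N(0,j-1)$ must hold uniformly over all $i,j$ (including matchings with many loops or doubled pairs, where the backward counts you rely on can degenerate), otherwise the unrestricted sums $\sum_i\frac{1}{i!}\bigl(\tfrac{D_2}{2D}\bigr)^i$ and $\sum_j\frac{1}{j!}\bigl(\tfrac{D_2^2}{4D^2}\bigr)^j$ are not justified; and the double-edge switching must be shown not to create triple edges or loops, else the two phases of the induction interfere. As it stands the proposal is a reasonable roadmap to the McKay--Wormald proof, not a proof.
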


We can then apply this lemma to prove the main claim. We remark that the bound $|S|\leq \sqrt{n}$ be be improved to $|S|=o(n)$, with $c_d$ depending on when $|S|/n$ falls below some constant threshold.
\Guessprob*
\begin{proof}
Let $\mathbf{d}=(d_i)_{i\in [n]}$ be the sequence where $d_i$ is the remaining degree of vertex $i$ given $S$. We place a random matching on this degree sequence. Given such a matching $M$, we contract it to a (multi) graph $G_M$ by treating each bucket as a single vertex.
\begin{restatable}{claim}{Unif}\label{clm:unif}
Given a randomly drawn matching $M$ where $G_M$ is simple and $G_M\cap S=\emptyset$, $G_M\cup S$ is a uniform draw from $\Gd\cap S$.
\end{restatable}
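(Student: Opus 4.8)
The plan is to reduce the claim to a short counting argument via an explicit bijection. First I would dispose of the degenerate case: if $\Gd \cap S = \emptyset$ (in particular if $S$ has a repeated edge, a self-loop, or a vertex of degree exceeding $d$), the statement is vacuous, since then no simple graph with degree sequence $\mathbf{d} := (d - \deg_S(i))_{i\in[n]}$ can be edge-disjoint from $S$, so the conditioning event has probability $0$. So assume $\Gd \cap S \neq \emptyset$.

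Next I would set up the bijection. Let $\mathcal{H}$ be the set of simple graphs on $[n]$ with degree sequence exactly $\mathbf{d}$ containing no edge of $S$, and consider $\Phi\colon H \mapsto H \setminus S$. If $H \in \Gd \cap S$, then $H \setminus S$ is simple, disjoint from $S$, and has $\deg(i) = d - \deg_S(i) = d_i$, so $\Phi(H) \in \mathcal{H}$; conversely for $H' \in \mathcal{H}$ the graph $H' \cup S$ is simple (as $H'$ is simple, $H' \cap S = \emptyset$, and $S$ consists of genuine non-loop edges) and $d$-regular, with $(H' \cup S)\setminus S = H'$. Hence $\Phi$ is a bijection from $\Gd \cap S$ onto $\mathcal{H}$ with inverse $H' \mapsto H' \cup S$.

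The main step is the configuration-model count: for every fixed $H' \in \mathcal{H}$, the number of matchings $M$ on the half-edges of $\mathbf{d}$ with $G_M = H'$ equals $\prod_{i\in[n]} d_i!$, independent of $H'$. The argument I would give: since $H'$ is simple and vertex $i$ carries both $d_i$ half-edges and exactly $d_i$ incident edges in $H'$, a matching contracting to $H'$ corresponds to a choice, at each vertex $i$, of a bijection between its half-edges and its incident $H'$-edges; every such tuple of bijections produces a distinct matching with $G_M = H'$ (the half-edge pair realizing $\{i,j\}$ is recovered from the bijections at $i$ and at $j$), and conversely every matching with $G_M = H'$ arises exactly once. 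So the count is $\prod_i d_i!$. From this, conditioned on $G_M$ being simple and disjoint from $S$ — i.e. on $G_M \in \mathcal{H}$ — the law of $G_M$ gives mass proportional to $\prod_i d_i!$ to each element of $\mathcal{H}$, hence is uniform on $\mathcal{H}$; pushing this forward through $\Phi^{-1}$, the graph $G_M \cup S$ is uniform on $\Gd \cap S$.

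I expect the only place needing genuine care is the configuration count — in particular, checking that simplicity of $H'$ is precisely what lets the per-vertex half-edge choices range freely and independently, and that the correspondence between tuples of bijections and matchings is exactly one-to-one. The bijection $\Phi$ and the degree arithmetic are routine.
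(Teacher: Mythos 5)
Your proposal is correct and follows essentially the same route as the paper: the key fact that every simple graph with degree sequence $\mathbf{d}$ arises from exactly $\prod_i d_i!$ matchings, so that conditioning on simplicity and on $G_M\cap S=\emptyset$ yields the uniform distribution, after which adding $S$ back (your bijection $\Phi$) gives a uniform element of $\Gd\cap S$. Your write-up merely makes explicit what the paper leaves implicit (the bijection $H\mapsto H\setminus S$, the per-vertex bijection count, and the degenerate case $\Gd\cap S=\emptyset$), so there is no substantive difference.
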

\begin{proof}
All possible simple graphs $G_M$ are induced by exactly $\prod_i(d_i!)$ matchings, so the conditional distribution over such graphs is uniform. Then multiplying by the indicator variable $\I[G_M\cap S=\emptyset]$, which corresponds to there being no duplicated edges between $G_M$ and $S$, produces the uniform distribution over the desired subset of graphs.
\end{proof}
We next show $G_M$ satisfies the conditions of Claim~\ref{clm:unif} with probability depending only on $d$. First, we show the matching is simple not considering the edges of $S$ with constant probability.
\begin{claim}
We have $\Pr(\I[G_M \text{ simple}]) = P(\mathbf{d}) \geq \exp(-d(d+2))$.
\end{claim}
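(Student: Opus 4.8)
The plan is to apply Lemma~\ref{app:simple} verbatim to the truncated degree sequence $\mathbf{d}=(d_i)_{i\in[n]}$ and to bound each of the three terms in its exponent using only that $0\le d_i\le d$ for every $i$ and that $D=\sum_i d_i=dn-2|S|$. Since $|S|\le\sqrt n$, deleting the half-edges incident to $S$ lowers the sum of degrees by at most $2\sqrt n$, so $D\ge dn-2\sqrt n=dn(1-o_n(1))$, which in particular exceeds $dn/2$ once $n$ is large; moreover $D$ is even (each edge of $S$ removes two half-edges and $dn$ is even by the paper's parity convention), so a perfect matching on the half-edges exists. Because $x(x-1)$ and $x(x-1)(x-2)$ are nonnegative and nondecreasing on the nonnegative integers, decreasing degrees only decreases $D_2$ and $D_3$, hence $D_2\le nd(d-1)$ and $D_3\le nd(d-1)(d-2)$.

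Plugging these in, the first exponent term is $\tfrac{D_2}{2D}\le\tfrac{nd(d-1)}{dn}=d-1$, the second is $\bigl(\tfrac{D_2}{2D}\bigr)^2\le(d-1)^2$, and the third is at most $\tfrac{(nd(d-1))^2\cdot nd(d-1)(d-2)}{2(dn/2)^4}=\tfrac{8(d-1)^3(d-2)}{dn}$, which tends to $0$ and is below $1$ for $n$ large. Summing, the negative exponent is at most $(d-1)+(d-1)^2+1=d^2-d+1\le d(d+2)$, so Lemma~\ref{app:simple} yields $P(\mathbf{d})=\Pr(\I[G_M\text{ simple}])\ge\exp(-d(d+2))$ as claimed. (The constant is wasteful: even the exact $d$-regular case gives exponent $\tfrac{d-1}{2}+\tfrac{(d-1)^2}{4}+o_n(1)=\tfrac{d^2-1}{4}+o_n(1)$, so there is ample slack, which is precisely why such crude bounding suffices.)

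I do not expect a genuine obstacle — the analytic content is entirely inside the cited Lemma~\ref{app:simple}. The only points to get right are that the perturbation caused by $S$ is negligible (immediate from $|S|\le\sqrt n$ together with the fact that each edge of $S$ alters $D,D_2,D_3$ only at its two endpoints, by $O(d),O(d^2),O(d^3)$ respectively) and that the target $d(d+2)$ is comfortably above $d^2-d+1$ for every $d\in\N$. If one wanted a bound valid for all $n$ rather than asymptotically, one could keep the explicit lower bound $D\ge dn-2\sqrt n$ in place of $dn/2$ throughout; nothing essential changes, since the whole appendix already operates in the regime where $\max_i d_i^3=o(D)$.
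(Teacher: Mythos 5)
Your proposal is correct and follows essentially the same route as the paper: apply Lemma~\ref{app:simple} to the truncated degree sequence with the crude bounds $D\geq dn/2$, $D_2\leq d(d-1)n\leq d^2n$, $D_3\leq d(d-1)(d-2)n\leq d^3n$, and observe that the third exponent term vanishes for large $n$, giving an exponent comfortably below $d(d+2)$. The only differences are cosmetic (slightly tighter monomial bounds and an explicit parity/evenness remark), so no further comment is needed.
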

\begin{proof}
We use the (crude) bounds $D\geq dn/2$, $D_2\leq d^2n$ and and $D_3\leq d^3n$.  Then applying Lemma~\ref{app:simple},
$$P(\mathbf{d}) \geq \exp(-d^2n/dn - d^4n^2/d^2n^2-d^4n^2d^3n/8d^4n^4) = \exp(-d(d+1+d^2/8n))$$
and choosing $n\geq d^2$ gives the claimed bound.
\end{proof}

We then show $G_M$ duplicates edges in $S$ with vanishing probability, which suffices to establish a constant lower bound on the probability of a \say{good} draw.
\begin{claim}
$\Pr(\{G_M \text{ simple}\} \cap \{G_M\cap S=\emptyset\})=\rho_d>0$
\end{claim}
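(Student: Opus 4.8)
The plan is to bound the probability that the random matching $M$ on the reduced degree sequence $\mathbf{d}$ both induces a simple graph $G_M$ and avoids re-creating any edge already present in $S$. We already have $\Pr(G_M \text{ simple}) = P(\mathbf{d}) \geq \exp(-d(d+2))$, a constant depending only on $d$, so it suffices to show that, conditioned on $G_M$ being simple, the event $G_M \cap S \neq \emptyset$ has vanishing probability; then a union bound gives $\rho_d = P(\mathbf{d}) - o_n(1) > 0$ for $n$ large. Equivalently, one can bound $\Pr(G_M \cap S \neq \emptyset)$ unconditionally and subtract.

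First I would estimate, for a fixed edge $(v,w) \in S$, the probability that $G_M$ also contains $(v,w)$. Since $(v,w) \in S$, vertex $v$ has at most $d-1$ remaining half-edges and $w$ has at most $d-1$ remaining half-edges in $\mathbf{d}$. The probability that a particular remaining half-edge at $v$ is matched to a particular remaining half-edge at $w$ in a uniform random matching on $D(\mathbf{d})$ half-edges is $1/(D(\mathbf{d})-1)$, and using $D(\mathbf{d}) \geq dn - 2|S| \geq dn/2$ (valid since $|S| \leq \sqrt{n}$), a union bound over the at most $(d-1)^2$ pairs of half-edges gives $\Pr[(v,w) \in G_M] \leq (d-1)^2/(D(\mathbf{d})-1) = O(1/n)$, where the constant depends only on $d$. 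Then a union bound over the at most $|S| \leq \sqrt{n}$ edges of $S$ yields $\Pr(G_M \cap S \neq \emptyset) \leq |S| \cdot O(1/n) = O(1/\sqrt{n}) = o_n(1)$.

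Combining, $\Pr(\{G_M \text{ simple}\} \cap \{G_M \cap S = \emptyset\}) \geq P(\mathbf{d}) - \Pr(G_M \cap S \neq \emptyset) \geq \exp(-d(d+2)) - o_n(1)$, which is bounded below by a positive constant $\rho_d$ (say $\tfrac12 \exp(-d(d+2))$) for all $n$ exceeding some threshold depending on $d$. This establishes the claim. I do not anticipate a genuine obstacle here; the only mild care needed is to keep track of the reduction in degrees caused by removing the half-edges occupied by $S$ (so that the ambient matching is on $D(\mathbf{d})$ rather than $dn$ half-edges) and to confirm $|S| \leq \sqrt{n}$ keeps $D(\mathbf{d}) = \Theta(n)$, which it does. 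With $\rho_d$ in hand, the proof of Lemma~\ref{lem:guessprob} finishes by conditioning on a good draw: $\Pr_{G \la \Gd \cap S}[(v,w) \in G]$ equals $\Pr[(v,w) \in G_M \mid G_M \text{ simple}, G_M \cap S = \emptyset] \leq \Pr[(v,w) \in G_M]/\rho_d = O(1/n)/\rho_d = c_d/n$ for a suitable constant $c_d$ depending only on $d$.
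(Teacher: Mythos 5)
Your proposal is correct and follows essentially the same route as the paper: lower-bound $\Pr(G_M \text{ simple})$ by $\exp(-d(d+2))$ via the previous claim, bound the per-edge probability $\Pr[(v,w)\in G_M]=O(1/n)$ by a union bound over half-edge pairs, union-bound over the at most $\sqrt{n}$ edges of $S$ to get $\Pr(G_M\cap S\neq\emptyset)=o_n(1)$, and combine via $\Pr(A\cap B)\geq \Pr(A)-\Pr(\bar{B})$ to obtain a constant $\rho_d$ for $n$ large. The only differences are cosmetic (you track $D(\mathbf{d})-1$ and $(d-1)^2$ half-edge pairs where the paper uses the cruder $2d/n$), so no gap.
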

\begin{proof}~
\begin{enumerate}
    \item Taking $n$ large enough $\Pr(G_M \text{ simple})\geq \exp(-d(d+2))$ by the previous claim.
    \item The probability of an edge between any two vertices in $G_M$ is at most $2d^2/dn$ by a union bound. There are at most $\sqrt{n}$ pairs of vertices with edges in $S$, so by a further union bound all such pairs are missing with probability at least $1-2d/\sqrt{n}$.
\end{enumerate} 
Then taking $n$ large enough that the second term is at least $1-\exp(-d(d+2))/2$, we have $\Pr(\I[G_M \text{ simple}] \cap \I[G_M\cap S=\emptyset])\geq \exp(-d(d+2)/2)/2=\rho_d$ as desired.
\end{proof}
Now we are almost done. We have $\Pr[(v,w)\in G_M]\leq 2d/n$ and thus
\begin{align*}
    \Pr_{G\la\Gd\cap S}[(v,w) \in G] &= \Pr[(v,w)\in G_M|\{G_M \text{ simple}\}\cap \{G_M\cap S=\emptyset\}]\\
    &\leq \Pr[(v,w)\in G_M]/\Pr[\{G_M \text{ simple}\} \cap \{G_M\cap S=\emptyset\}]\\
    &\leq \frac{2d/\rho_d}{n}
\end{align*}
So taking $c_d=2d/\rho_d$ (and increasing as needed to handle the small $n$ cases by making the bound greater than 1) we conclude.
\end{proof}

\section{Local Access With Algebraic Structure}\label{app:algebraic}
We now detail the approach for local access to abelian Cayley graphs and graph products. The methods we use are simple and similar to those of~\cite{Teng}, who construct algorithms for efficient \textit{parallel} generation of random walks on a variety of structured graphs. In each case, there is some element of algebraic structure that enables sampling the relevant feature of a walk (its position at a new timestep) via sampling lower-dimensional distributions.\\

We first recall the Multinomial ($\MNom$) and Multivariate Hypergeometric ($\MHGeom$) distributions, which we can sample from efficiently.
\begin{proposition}[\cite{BRY} Theorem 21] \label{prop:sampling}
Given $\ep>0$, we can sample from the following distributions within $\ep$ in $\ell_1$ distance:
\begin{enumerate}
    \item given $t \in \N$, $(p_1,\ldots,p_d) \in \mathbb{Q}^d$, we can generate $S\la \MNom(t,(p_1,\ldots,p_d))$ in time $O(d\cdot\polylog(t,1/\ep))$,
    \item given $m \in \N$, $(c_1,\ldots,c_d) \in \N^d$, we can generate $S\la \MHGeom(m,(c_1,\ldots,c_d))$ in time $O(d\cdot\polylog(m,\sum_i c_i,1/\ep))$.
\end{enumerate}
\end{proposition}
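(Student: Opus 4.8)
The plan is to reduce each of the two $d$-dimensional distributions to a chain of $d$ \emph{univariate} conditional samples --- binomials for $\MNom$, univariate hypergeometrics for $\MHGeom$ --- and to realize each univariate conditional by inverse-transform sampling against a high-precision estimate of its CDF. By the same hybrid argument underlying Proposition~\ref{prop:errsadd}, it is enough to sample each of the $d$ conditionals to within $\ep/d$ in $\ell_1$ distance of the \emph{true} conditional given the coordinates already realized: interpolating between the true joint $P$ and the sampled joint $\widetilde P$ through hybrids $P_j$ (first $j$ coordinates from the approximate conditionals, the rest from the true ones), $P_j$ and $P_{j+1}$ generate everything identically except coordinate $j+1$, so $\|P_j-P_{j+1}\|_1\le\ep/d$ and hence $\|P-\widetilde P\|_1\le\ep$. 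Thus the proposition comes down to a $\polylog$-time univariate sampler, and the real content is a fast high-precision CDF evaluator for binomials and univariate hypergeometrics.

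First I would write down the chain-rule decompositions. For $(X_1,\dots,X_d)\la\MNom(t,(p_1,\dots,p_d))$ the standard fact is $X_i \mid X_1=x_1,\dots,X_{i-1}=x_{i-1}\ \sim\ \BNom(t-\sum_{j<i}x_j,\ q_i)$ with $q_i = p_i/(1-\sum_{j<i}p_j)$; since the $p_j$ are rationals of $\polylog$ bit-length, so is each $q_i$, and the trial count is at most $t$. For $(X_1,\dots,X_d)\la\MHGeom(m,(c_1,\dots,c_d))$, likewise $X_i\mid X_1=x_1,\dots,X_{i-1}=x_{i-1}\ \sim\ \HGeom(\sum_{j\ge i}c_j,\ c_i,\ m-\sum_{j<i}x_j)$, whose parameters are integers bounded by $\sum_j c_j$ and $m$. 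In both cases every conditional is univariate with support of size $O(t)$, respectively $O(m)$.

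Next, the univariate sampler. Let $D$ be such a conditional, supported on $\{0,\dots,N\}$ with CDF $F$, and suppose we have an oracle that, given an index $k$ and $\eta>0$, returns $\widetilde F(k)$ with $|\widetilde F(k)-F(k)|\le\eta$ in time $\polylog(N,1/\eta)$ (plus the bit-length of $D$'s parameters). I would inverse-transform sample: draw $U$ uniformly from a grid of spacing $\gamma$ in $[0,1]$ --- so $U$ has $O(\log(N/\delta))$ bits --- and run a noisy binary search for the least index $k$ with $F(k)\ge U$, at each probe comparing $\widetilde F(k)$ against $U$ with slack $\eta$. The search is unambiguous for all but an $O(N\eta)$-measure set of $U$, on which it returns exactly $\min\{k:F(k)\ge U\}$, which has law $D$; the ambiguous set and the discretization of $U$ each cost a further $O(N\eta)$ and $O(N\gamma)$ in $\ell_1$. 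Taking $\eta=\gamma=\Theta(\delta/N)$ makes all of these $O(\delta)$, so after rescaling $\delta$ by a constant we get an $\ell_1$-$\delta$ sample in $O(\log N)$ oracle calls, i.e.\ $\polylog(N,1/\delta)$ time. With $\delta=\ep/d$, summing over the $d$ stages of the chain yields the claimed $O(d\cdot\polylog(t,1/\ep))$ and $O(d\cdot\polylog(m,\sum_i c_i,1/\ep))$ bounds.

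The hard part will be the CDF oracle itself: evaluating $\Pr[\BNom(n,q)\le k]$ and $\Pr[\HGeom(N,K,m)\le k]$ to additive error $\eta$ in $\polylog$ time. A single probability mass --- $\binom{n}{k}q^k(1-q)^{n-k}$ or $\binom{K}{k}\binom{N-K}{m-k}/\binom{N}{m}$ --- is easy: pass to logarithms and evaluate the $\log\Gamma$ terms to $\polylog$ precision via a rapidly converging expansion. The obstacle is that the CDF is a sum of up to $\Theta(n)$ (resp.\ $\Theta(m)$) such masses, and even discarding every term below $\eta$ still leaves the $\Theta(\sqrt{n}\,\polylog(1/\eta))$-wide bulk around the mode --- far too many for a naive sum. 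For the binomial this is bypassed by classical fast evaluators for the regularized incomplete beta function, using $\Pr[\BNom(n,q)\le k]=I_{1-q}(n-k,k+1)$, or by an Edgeworth expansion around the Gaussian, either of which reaches additive precision $\eta$ in $\polylog(n,1/\eta)$ time; moreover the ``expand $q$ in binary'' trick reduces a general $\BNom(n,q)$ to $\polylog$ independent $\BNom(\cdot,1/2)$ samples, removing any dependence on the bit-length of $q$. For the univariate hypergeometric one uses the analogous closed form of its CDF as a rapidly converging (in its complementary form) ${}_{3}F_{2}$ hypergeometric series, with the same consequence. This special-function evaluation is the one step requiring genuine numerical care --- it is essentially what is imported from \cite{BRY} --- and everything around it is $\ell_1$-error bookkeeping.
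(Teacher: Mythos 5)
The paper does not actually prove this proposition---it is imported wholesale from \cite{BRY} (Theorem 21)---so there is no in-paper argument to compare yours against. Your reconstruction (chain-rule decomposition of $\MNom$ into binomial conditionals and of $\MHGeom$ into univariate hypergeometric conditionals, the $\ep/d$ hybrid bookkeeping, and inverse-transform sampling via binary search against a $\polylog$-precision CDF oracle) is sound and is essentially the route taken in the cited source, with the one step you correctly flag as the real technical content---evaluating the binomial/hypergeometric CDFs to additive error $\eta$ in $\polylog$ time via incomplete-beta/hypergeometric-series representations with high-precision arithmetic---being exactly what the citation to \cite{BRY} is carrying.
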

Note that sample time is linear in the dimension of the distribution but (poly)logarithmic in the number of elements.
\subsection{Low-Degree Abelian Cayley Graphs}

For all Cayley graphs, sampling a walk of length $\ell$ is equivalent to sampling a random product of elements in $S$ of length $\ell$. But in the abelian case, the value of a random product (and thus endpoint of a random walk) only depends on the \emph{counts} of elements in the product. Thus we can sample the distribution of edge labels, and thus endpoints, in time linear in $d$ but logarithmic in $\ell$. 

To do this, we first recall the distribution of edge labels in a random product. 
\begin{proposition}\label{prop:abStruct}
Let $G=\Cay(\Gamma,(e_1,\ldots,e_d))$ be an abelian Cayley graph where $|\Gamma|=n$.
\begin{enumerate}
    \item The counts of edge labels in a random walk of length $\ell$ from any vertex are distributed $\MNom(\ell,(1/d,\ldots,1/d))$.
    \item Let $D_C(c_1,\ldots,c_d)$ be the set of random walks from any vertex of length $\ell=\sum_{i=1}^dc_i$ that traverse $c_i$ edges with label $i$. Then the counts of edge labels along the first $t\leq \ell$ steps of walks in $D_C(c_1,\ldots,c_d)$ are distributed $\MHGeom(t,(c_1,\ldots,c_d))$.
\end{enumerate}
\end{proposition}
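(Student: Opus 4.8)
The plan is to reduce both parts to elementary facts about i.i.d.\ uniform label sequences, using that a walk on a Cayley graph is determined by its sequence of generator labels. Fix a start vertex $g$. A length-$\ell$ random walk from $g$ on $G=\Cay(\Gamma,(e_1,\ldots,e_d))$ is, by definition, obtained by drawing a label sequence $(w_1,\ldots,w_\ell)\in[d]^\ell$ with the $w_j$ independent and uniform on $[d]$ and then following the edges labelled $e_{w_1},\ldots,e_{w_\ell}$; since the $e_k$ are distinct, each $w_j$ is recovered from the walk as the unique index $k$ with $e_k=g_{j-1}^{-1}g_j$, where $g_0,\ldots,g_\ell$ are the visited vertices, so walks from $g$ of length $\ell$ are in bijection with $[d]^\ell$. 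For part (1), the count vector $(c_1,\ldots,c_d)$ with $c_k=|\{j:w_j=k\}|$ of a uniform $(w_1,\ldots,w_\ell)\in[d]^\ell$ is, by the very definition of the multinomial law, distributed $\MNom(\ell,(1/d,\ldots,1/d))$, and this does not depend on $g$. (Abelianness plays no role in this distributional statement; it is only what makes the counts useful, since in the abelian case the walk from $g$ ends at $g\prod_{k=1}^d e_k^{c_k}$, a function of the counts alone.)

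For part (2), note that the law in part (1) is exactly the pushforward of the uniform distribution on $[d]^\ell$ under the count map $\phi:(w_1,\ldots,w_\ell)\mapsto(c_1,\ldots,c_d)$. Since every sequence in $[d]^\ell$ has probability $d^{-\ell}$, the conditional distribution given $\phi=(c_1,\ldots,c_d)$ is uniform over the fiber $\phi^{-1}(c_1,\ldots,c_d)$, and under the bijection above this fiber is precisely $D_C(c_1,\ldots,c_d)$ (equivalently, $D_C(c_1,\ldots,c_d)$ is by definition the uniform distribution over walks with these label counts, which is the same set). It therefore suffices to show that if $(w_1,\ldots,w_\ell)$ is uniform over sequences with label counts $(c_1,\ldots,c_d)$, then the label counts $(a_1,\ldots,a_d)$ of the prefix $(w_1,\ldots,w_t)$ are distributed $\MHGeom(t,(c_1,\ldots,c_d))$.

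I would finish with a direct counting argument. A uniform element of the fiber is the same as a uniformly random arrangement in $\ell$ slots of a multiset containing $c_k$ tokens of colour $k$, so the colours occupying the first $t$ slots form a size-$t$ sample drawn without replacement from that multiset, which is the definition of $\MHGeom(t,(c_1,\ldots,c_d))$. If one prefers an explicit pmf: for nonnegative integers $a_1,\ldots,a_d$ with $\sum_k a_k=t$ and $a_k\le c_k$, the number of fiber elements whose first $t$ entries have counts $(a_1,\ldots,a_d)$ (hence whose last $\ell-t$ entries have counts $(c_k-a_k)_k$) is $\binom{t}{a_1,\ldots,a_d}\binom{\ell-t}{c_1-a_1,\ldots,c_d-a_d}$; dividing by $|\phi^{-1}(c_1,\ldots,c_d)|=\binom{\ell}{c_1,\ldots,c_d}$ and simplifying yields $\prod_k\binom{c_k}{a_k}\big/\binom{\ell}{t}$, the multivariate hypergeometric pmf. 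I do not anticipate a genuine obstacle here; the only points requiring care are the identification of walks from a fixed vertex with their label sequences (so that ``uniform over $D_C$'' really is ``uniform over a fiber of $\phi$'') and keeping the multinomial coefficients straight in the counting step, both routine.
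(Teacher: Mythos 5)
Your proof is correct: the identification of a walk with its i.i.d.\ uniform label sequence gives part (1) immediately, and the fiber-counting argument with the ratio $\binom{t}{a_1,\ldots,a_d}\binom{\ell-t}{c_1-a_1,\ldots,c_d-a_d}\big/\binom{\ell}{c_1,\ldots,c_d}=\prod_k\binom{c_k}{a_k}\big/\binom{\ell}{t}$ is exactly the multivariate hypergeometric law needed for part (2). The paper states this proposition without proof (it merely ``recalls'' it as a standard fact about random products), and your argument is precisely the routine one it implicitly relies on, so there is nothing to compare beyond noting that your write-up correctly handles the one point worth care---that conditioning on the counts makes the walk uniform over the fiber $D_C(c_1,\ldots,c_d)$.
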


We can then provide local access to abelian Cayley graphs. In the random regular graph case, the difficulty came from sampling conditional \say{products}, but here we take advantage of that fact that permuting the order of elements in a product preserves endpoints in order to sample counts of edge labels unconditionally.

\Abelian*
\begin{proof}
The algorithm maintains a sorted list of previous times and positions $T=t_1<\dots <t_r$, $V_T=v_{t_1},\ldots,v_{t_r}$ where $t_1=0$ and $v_{0}$ is fixed at initialization. In addition, for all $i\in [r-1]$, the algorithm maintains a dictionary $L$, where $L_{t_i}=[l_1,\dots, l_d]$ with the invariant that the the walk has $l_j$ steps with label $j$ between $t_i$ and $t_{i+1}$. Given a dictionary entry $L_t=[l_1,\ldots,l_d]$ and $v \in G$, define  $G[v,L_t] = v\prod_{e_i\in S}e_i^{l_i}$.

\noindent
Given a new query $t$:
\begin{enumerate}
    \item If $t>t_r$, set $L_{t_r} \la \MNom(t-t_r,(1/d,\ldots,1/d),\ep/B)$, and set $v_{t} \la G[v_{t_r},L_{t_r}]$. 
    \item Otherwise let $t_-<t<t_+$ be the bracketing queries. Sample $D\la \MHGeom(t-t_-, L_{t_-}, \ep/B)$, set $v_{t} \la G[v_{t_-},D]$, set $L_{t} \la L_{t_-}-D$ and set $L_{t_-}\la D$.
\end{enumerate}
Storing the query time takes incremental space $O(\log(t))$, storing the determined vertex $O(\log(n))$, and storing the dictionary $O(d\log(t))$.

The runtime is immediate from Proposition~\ref{prop:sampling} and the assumption that group products are computable in time $\polylog(n)$, so we can use $d$ iterations of repeated squaring with each requiring time $\polylog(t,n)$. 

In both the unidirectionally and bidirectionally constrained case, the vertex reached by a random walk on an abelian Cayley graph is a deterministic function of the counts of the bracketing \emph{edge labels}. Therefore ensuring the counts in each new dictionary are sampled to within $\ep/B$ of the true distribution is sufficient to establish the approximation by Proposition~\ref{prop:errsadd}. Since in both cases we approximate the true distribution to within $\ep/B$ in $\ell_1$ distance by Proposition~\ref{prop:abStruct}, the result follows.
\end{proof}

\subsection{Graph Products}\label{subsec:product}
We can utilize the structure of common graph product operations to provide local access, given algorithms for their components. 
To do so, we give arguably the simplest possible local access algorithm, one that is only efficient when the time queries are far larger than the size of the graph, to use as the basis for product constructions.
\begin{restatable}{lemma}{CLSbase}\label{alg:CLS-base}
Fix $\ep>0$ and $B \in \N$. Given a graph $G=(V,E)$ on $n$ vertices with $\lambda(G)<c$, there is a local access algorithm, which uses $O(\log(tn))$ additional space and runtime $O(\poly(n,\log(t/\ep))$ time and working space per query.
\end{restatable}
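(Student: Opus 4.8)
The plan is to give essentially the crudest possible local access algorithm: one that makes no attempt to exploit expansion or mixing, and instead generates \emph{every} vertex in a walk segment directly by brute-force simulation, but where the simulation uses the fact that once the walk has run for more than the mixing time it has essentially forgotten its start. Concretely, because $\lambda(G)<c$ for an absolute constant $c<1$, there is a mixing time $\tau=O(\log(n/\ep'))$ after which the walk distribution is $\ep'$-close to uniform over $V$ in $\ell_1$ (and we can take $\ep'$ polynomially small at the cost of $O(\log(nB/\ep))$ extra steps). The key point is that the query times are allowed to be polynomially large but the graph has only $n$ vertices, so even a walk of length $\poly(n)$ that must connect two bracketing vertices $v_{t_-},v_{t_+}$ can be simulated directly in $\poly(n)$ time: we do not need to be sublinear in $t$ because we are \emph{allowed} $\poly(n)$ time, and $t\leq\poly(n)$.

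First I would set up the same bookkeeping as in Theorem~\ref{alg:regular}: maintain a sorted list of determined times $T=t_1<\dots<t_r$ with associated vertices $V_T$, with the invariant that consecutive determined times are either adjacent or separated by at least $2\tau$ for $\tau=O(\log(\poly(n)/\ep))=O(\log(n/\ep))$ (since $B,t\leq\poly(n)$). For a new query $t$ with bracketing queries $t_-<t<t_+$: if $t$ is more than $2\tau$ from both, return a uniform random vertex via $\rv$ (here we do not even need $\rv$ if we just want $O(\poly(n))$ time — we can enumerate $V$); if $t$ is within $2\tau$ of exactly one side, simulate the walk of length $\leq 2\tau$ from that side; and if $t_+-t_-\leq 4\tau$, we must sample the \emph{entire} interior path conditioned on both endpoints. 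For this last case, rather than the collision-stitching of Theorem~\ref{alg:regular}, I would compute the conditional distribution exactly: the number of possible interior walks of length $m=t_+-t_-\leq 4\tau$ is $d^{m-1}$, and we can compute $W^m_{v_{t_-},v_{t_+}}$ (and the path-prefix marginals) by matrix powering in $\poly(n)$ time, then sample the conditional walk step by step. Since $m=O(\log n)$ and the state space has size $n$, all of this is $\poly(n)$ per query, and the correctness is exact (no $\ep$ loss in this case), so the only error comes from the ``far'' case where returning a uniform vertex is $\ep'$-close to $P_t(D_C)$, exactly as analyzed in Case~I of Theorem~\ref{alg:regular}. A union bound over $B=\poly(n)$ queries, with $\ep'=\ep/B$ absorbed into $\tau=O(\log(n/\ep))$, gives total $\ell_1$ error $\leq\ep$ by Proposition~\ref{prop:errsadd}.

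The main obstacle — really the only thing to be careful about — is making sure the ``far'' case bound is genuinely $\poly(n)$: returning a truly uniform vertex requires either a $\rv$ oracle or the ability to list $V=[n]$, and the mixing-time argument needs $\lambda(G)<c$ for a \emph{constant} $c$, which is given. One subtlety is that when we return a uniform vertex in the far case and later a query lands between it and an existing determined time at distance in $(2\tau,4\tau)$, we are now in the exact-conditional case and must sample a walk conditioned on \emph{both} a previously-returned (uniform-ish) vertex and another determined vertex; Proposition~\ref{prop:errsadd} handles this by letting us treat the conditional distribution as a deterministic function of the two bracketing vertices and charging the $\ep'$ slack once. The space bound $O(\log(tn))$ follows since between queries we store only the $O(1)$ bracketing information needed and recompute; and the runtime $O(\poly(n,\log(t/\ep)))$ follows since matrix powering of an $n\times n$ matrix to the power $m=O(\log(t/\ep))$ costs $\poly(n)\cdot\polylog(t/\ep)$. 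I would then remark that this lemma, while useless on its own (it is only interesting when $t\gg n$), serves as the base case for the tensor and Cartesian product constructions, where $G$ is a small ``factor'' graph and the product has $n^k$ vertices but inherits walks coordinatewise.
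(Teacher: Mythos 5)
Your proposal works, but it takes a noticeably more elaborate route than the paper, and the extra machinery is exactly what creates its loose ends. The paper's proof of Lemma~\ref{alg:CLS-base} is just your ``case 3'' applied unconditionally: for any new query $t$ with bracketing times $t_-<t<t_+$, it computes the conditional marginal
$\Pr\bigl(P_{m}(D_C(G,v_{t_-},v_{t_+},l))=v\bigr)=W^{m}_{v_{t_-},v}W^{l-m}_{v,v_{t_+}}/\sum_{u}W^{m}_{v_{t_-},u}W^{l-m}_{u,v_{t_+}}$
by repeated squaring of the transition matrix ($O(\log t)$ squarings, each $\poly(n)$ time, for exponents as large as $t$), samples that single position to accuracy $\ep$, and stores only $(t,v_t)$. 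No mixing-time threshold, no three-case structure, no invariant on gaps, and in fact no use of $\lambda(G)<c$ beyond well-definedness; this is why the claimed additional space is just $O(\log(tn))$ per query. Since $\poly(n)$ time per query is allowed and repeated squaring makes the dependence on $t$ logarithmic, the mixing-based shortcuts of Theorem~\ref{alg:regular} buy you nothing here.

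The places where your version needs repair are consequences of importing that scaffolding. First, the space accounting: if you actually determine (and store) the simulated path in your one-sided case, or the ``entire interior path'' in your two-sided case, the incremental storage per query is $\Theta(\log(nB/\ep)\log(nt))$, exceeding the stated $O(\log(tn))$; and your proposed fix---``store only the $O(1)$ bracketing information and recompute''---is not available, since a randomized algorithm cannot recompute answers it did not store. The correct fix is to not determine interior vertices at all and only sample the marginal at the queried time (which also makes your gap invariant unnecessary, since the case distinction depends only on the gap sizes), i.e., to collapse to the paper's algorithm. Second, simulating ``backwards'' from $v_{t_+}$ when the near bracketing time is on the right implicitly uses reversibility/regularity, which you should either state or avoid by using the exact conditional formula there too. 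Your final remark about the lemma's role as the base case for the tensor and Cartesian product constructions matches the paper.
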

\begin{proof}
The algorithm solely remembers previously determined times and vertices $v_{t_1},\ldots,v_{t_k}$. Given query $t$, let $t_-<t<t_+$ be the bracketing times. For convenience, define $l=t_+-t_-$ and $m=t-t_-$. We then explicitly sample the desired distribution to within $\ep$ in $\ell_1$ distance. Let $W$ be the transition matrix of $G$. Then for $v\in V$, $$\Pr(P_{m}(D_C(G,v_{t_-},v_{t_+},l))=v)=W^{m}_{v_{t_-},v}W^{l-m}_{v,v_{t_+}}/\sum_{u \in V}W^{m}_{v_{t_-},u}W^{l-m}_{u,v_{t_+}}.$$
We can then use $n\log(t)$ repeated squares of the transition matrix to compute the PDF, and then sample to the desired accuracy and return.
\end{proof}
To make this algorithm concrete, for an undirected aperiodic graph $G$ on $n$ vertices with $\ep=n^{-c}$, we obtain a runtime of $\widetilde{O}(n^\omega)$, while we desire runtime polylogarithmic in $n$. 

We first examine the tensor product of graphs.
\begin{definition}
    Given graphs $G_1=(V_1,E_1), G_2=(V_2,E_2)$ the \textbf{tensor product} of $G_1$ and $G_2$, denoted $G_1\times G_2$, is the graph with vertex set $V_1\times V_2$ where $(v_1,v_2),(w_1,w_2)$ are adjacent if and only if $(v_1,w_1) \in E_1$ and $(v_2,w_2) \in E_2$.
\end{definition}
The projection of a random walk on the tensor product onto its component graphs is an independent random walk over each graph. Then we can easily decompose sampling conditional products to sampling on the components.
\begin{restatable}{lemma}{Tensor}\label{alg:tensor}
    Given local access algorithms $\A_1,\A_2$ for graphs $G_1,G_2$ running in time $T(\A_1,\ep,B,t)$, $T(\A_2,\ep,B,t)$, there is a local access algorithm $\A_T$ for $G_1\times G_2$ with runtime $T(\A_T,\ep,B,t)=T(\A_1,\ep/2,B,t)+T(\A_2,\ep/2,B,t)+O(\log(|G_1|\cdot|G_2|,t,B/\ep))$.
\end{restatable}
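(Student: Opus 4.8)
The plan is to exploit the key structural fact that a random walk on $G_1 \times G_2$, when projected onto the two coordinates, is exactly a pair of \emph{independent} random walks on $G_1$ and $G_2$ respectively. This holds because a step of the walk in the product picks, independently for each coordinate, a uniformly random neighbor; so the walk's position at time $t$ is $(w_t^{(1)}, w_t^{(2)})$ where $w^{(i)}$ is an ordinary walk on $G_i$, and the two are independent. Crucially, this independence is preserved under conditioning on endpoints: the distribution $D_C(G_1\times G_2, (u_1,u_2),(v_1,v_2), \ell)$ factors as the product of $D_C(G_1,u_1,v_1,\ell)$ and $D_C(G_2,u_2,v_2,\ell)$. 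Hence the true conditional distribution of the position at a new query time $t$, given bracketing positions, is a deterministic function (namely pairing) of two lower-dimensional distributions, one on each component.

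The algorithm $\A_T$ therefore simply maintains the determined times and product-positions, and on a new query $t$ with bracketing times $t_-<t<t_+$ (with $t_+=\infty$ allowed), it runs $\A_1$ on the first-coordinate sequence and $\A_2$ on the second-coordinate sequence with the \emph{same} query times, each with error parameter $\ep/2$ and budget $B$, then returns the pair of the two returned vertices. Correctness then follows from the "furthermore" clause of Proposition~\ref{prop:errsadd}: the true distribution of $v_t$ is the deterministic pairing function applied to $k=2$ distributions, so sampling each within $\ep/2$ and applying the pairing keeps the joint distribution within $\ep$ of uniform, and the bound composes correctly across the whole $B$-query sequence. One subtlety to spell out: to feed $\A_1$ and $\A_2$ a valid query history we must have maintained, for every previously determined product-time, the corresponding component positions — which we do by construction, since recording $(v_{t_i}^{(1)},v_{t_i}^{(2)})$ is the same as recording $v_{t_i}$ in the product. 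The runtime and space bounds are then immediate: per query we invoke $\A_1$ and $\A_2$ once each, contributing $T(\A_1,\ep/2,B,t)+T(\A_2,\ep/2,B,t)$, plus $O(\log(|G_1|\cdot|G_2|,t,B/\ep))$ overhead for dictionary lookups and forming the pair.

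The main thing to be careful about — rather than a genuine obstacle — is verifying that the factorization of $D_C$ into a product really holds, i.e. that conditioning the product walk on a pair of endpoints does not couple the two coordinates. This is true because the number of length-$\ell$ walks in $G_1\times G_2$ from $(u_1,u_2)$ to $(v_1,v_2)$ equals the product of the corresponding counts in $G_1$ and $G_2$ (each step's choice factors across coordinates), so the uniform distribution over product-walks is exactly the product of the two uniform distributions over component-walks; projecting to time $t$ preserves this. Everything else is bookkeeping: aligning query times across the recursive calls, splitting $\ep$ evenly, and invoking Proposition~\ref{prop:errsadd}.
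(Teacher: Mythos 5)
Your proposal is correct and follows essentially the same route as the paper: delegate each query at the same time to $\A_1$ and $\A_2$ with error parameter $\ep/2$, return the pair, and invoke Proposition~\ref{prop:errsadd} using the fact that the product-walk position is a deterministic function of two independent component-walk distributions. You simply spell out the factorization of the conditional distribution in more detail than the paper does, which is a welcome but not a substantively different argument.
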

\begin{proof}
The algorithm initializes both sub-algorithms $\A_1,\A_2$ with parameters $\ep/2,B$. Upon receiving query $t$, $\A_T$ itself queries $\A_1,\A_2$ with time $t$. Let the obtained vertices be $v',w'$ respectively, and $\A$ returns $(v',w')$. Since the vertex in a walk on a tensor product is a deterministic function of the two (independent) component distributions, by Proposition~\ref{prop:errsadd} we obtain the desired approximation. The runtime is composed of the required calls to the sub-algorithms, plus the time to write the inputs to each and output the returned vertex.
\end{proof}

We then obtain efficient local access to walks on arbitrarily dense graphs.
\begin{restatable}{corollary}{Denseaccess}\label{cor:denseaccess}
Fix $\ep>0$ and $B \in \N$. Let $G$ be an arbitrary graph. For all $k\geq 1$ there is a local access algorithm for $G^{\times k}$ with runtime $O(k\log^2(B/\ep))$, where we hide factors polynomial in $|G|$. 
\end{restatable}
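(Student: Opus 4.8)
The plan is to build a local access algorithm for $G^{\times k}$ out of $k$ independent copies of the generic base algorithm of Lemma~\ref{alg:CLS-base}, one per tensor coordinate. As noted before Lemma~\ref{alg:tensor}, the projection of a random walk on $G^{\times k}$ onto any single coordinate is a random walk on $G$, and the $k$ coordinate walks are mutually independent. Hence a $\pt(G^{\times k},s,t)$ query is answered by issuing the time query $t$ to each of the $k$ sub-algorithms (each initialized with budget $B$) and concatenating the $k$ returned vertices of $G$ into a vertex of $G^{\times k}$. Since $G$ has constant size, Lemma~\ref{alg:CLS-base} gives a per-coordinate sub-algorithm whose cost is $\poly(|G|)$ times that of forming $W^m$ and $W^{l-m}$ by $O(\log t)$ repeated squarings and sampling the resulting conditional PDF to within $\delta$, i.e.\ $\poly(|G|)\cdot O(\log(t)\log(1/\delta))$ when run with error parameter $\delta$.

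Next I would do the error bookkeeping. The returned vertex of $G^{\times k}$ is a deterministic function (the tuple-forming map) of the $k$ independent coordinate distributions, so by the last sentence of Proposition~\ref{prop:errsadd} it suffices to run each of the $k$ coordinate sub-algorithms with error parameter $\delta=\ep/k$; the responses are then $\ep$-close to $P_T(\U)$ for $G^{\times k}$ in $\ell_1$ distance, as the definition of a local access algorithm requires. (Equivalently, one can apply Lemma~\ref{alg:tensor} inductively along a \emph{balanced} binary decomposition of $G^{\times k}$, which is the point: a left-to-right decomposition would halve the error budget $k$ times and force a leaf error $\ep/2^k$, making each base call cost $\poly(k)$, whereas the balanced decomposition — equivalently, the uniform $\ep/k$ split — keeps $\log(1/\delta)=O(\log(k/\ep))$.) Summing the $k$ coordinate costs gives total per-query runtime $k\cdot\poly(|G|)\cdot O(\log(t)\log(k/\ep))$. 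Under the standing convention that query times are polynomially bounded in the number of vertices of the host graph, hiding the $\poly(|G|)$ factor, and folding $\log k$ into the logarithmic term, this is $O(k\log^2(B/\ep))$, the claimed bound; and since a vertex of $G^{\times k}$ is a $k$-tuple, merely writing the output already costs $\Omega(k)$, so this is optimal in $k$ up to the hidden factors.

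The step I expect to require the most care is the runtime accounting rather than correctness: one must ensure the error is split into $k$ equal pieces (not repeatedly halved), that the base-case cost of Lemma~\ref{alg:CLS-base} genuinely has only polylogarithmic dependence on $t$ and $1/\delta$ — in particular that $O(\log t)$ squarings suffice to form the needed matrix powers and that $O(\log(1/\delta))$ bits of numerical precision suffice to sample the conditional PDF to within $\delta$ in $\ell_1$ — and that the $k$ per-coordinate calls together with the tuple assembly contribute only additively. A minor subtlety is that Lemma~\ref{alg:CLS-base} is stated for graphs with $\lambda(G)<c$; but every conditional walk distribution $D_C$ that arises is taken over bracketing vertices that were themselves produced by genuine walks, so it is always supported and the base algorithm's PDF computation is well defined, so this hypothesis causes no trouble here.
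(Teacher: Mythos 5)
Your proposal is correct and matches the paper's approach in substance: the paper decomposes $G^{\times k}$ as a balanced binary tree of tensor products (Lemma~\ref{alg:tensor}) with Lemma~\ref{alg:CLS-base} at the $k$ leaves, so that the error budget shrinks only by a $\poly(k)$ factor at the leaves rather than $2^{k}$ --- exactly the point you make, with your direct $\ep/k$ split via Proposition~\ref{prop:errsadd} being an equivalent flattening of that tree. Your runtime and error bookkeeping (equal split, $\log(1/\delta)=O(\log(k B/\ep))$ at each leaf, additive cost over the $k$ coordinates) is the same accounting the paper performs.
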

\begin{proof}
Let $b(i)\in \{0,1\}^{\log(k)}$ be the representation of $i$ in binary. Then $G^{\times k}\cong \bigtimes_{j \in b(k)}G^{\times 2^j}$ and so the iterated tensor product can be written as a binary tree of products with depth bounded by $2\log(k)$. Choosing the uppermost local access algorithm algorithm to have error $\ep/B$ implies the leaves have error parameter $\ep/B2^{2\log(k)}=\Omega(\ep/B\log^2(n))$. Then the result follows by applying the algorithm from Lemma~\ref{alg:CLS-base} to the $k$ copies of $G$ at the leaves of the tree. 
\end{proof}
For $G$ a regular graph with degree $d$, since $|G^{\times k}|=|G|^k$ and  $\deg(G^{\times k})=d^k$, we obtain efficient local access to infinite families of degree $(|G|^k)^\delta$ graphs for any $\delta=\log_{|G|}(d) \in (0,1]$. This indicates that high degree does not prevent efficient local access.

Walks on the Cartesian product similarly have a component decomposition (although this time we restrict to regular graphs).
\begin{definition}
    Given regular graphs $G_1=(V_1,E_1), G_2=(V_2,E_2)$ the \textbf{Cartesian product} of $G_1$ and $G_2$, denoted $G_1\square G_2$, is the graph with vertex set $V_1\times V_2$ where $(v_1,v_2),(w_1,w_2)$ are adjacent if and only if 
    \begin{itemize}
        \item $v_2=w_2$ and $(v_1,w_1) \in E_1$ or,
        \item $v_1=w_1$ and $(v_2,w_2) \in E_2$.
    \end{itemize}
\end{definition}
In a similar manner to the abelian Cayley case, we use the ability to decompose walks of length $r$ as $t$ and $r-t$ steps on the first and second coordinate respectively to always sample counts with a unidirectional constraint. Conditioning on these counts, we can sample efficiently given the local access algorithms for each component. 

\begin{restatable}{lemma}{Cartesian}\label{alg:cartesian}
Given local access algorithms $\A_1,\A_2$ for regular graphs $G_1,G_2$ running in time $T(\A_1,\ep,t)$, $T(\A_2,\ep,t)$, there is a local access algorithm $\A_C$ for $G_1\square G_2$ running in time $T(\A_C,\ep,t)=T(\A_1,\ep/2,t)+T(\A_2,\ep/2,t)+\polylog(|G_1|\cdot|G_2|,t,B/\ep)$.
\end{restatable}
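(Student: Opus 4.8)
The plan is to reduce walks on the Cartesian product to a \emph{schedule} together with one independent walk on each factor, exactly paralleling the role edge-label counts play in Theorem~\ref{alg:abelian} and the role the two coordinate walks play in Lemma~\ref{alg:tensor}. Since $G_1,G_2$ are regular of degrees $d_1,d_2$, the product $G_1\square G_2$ is $(d_1+d_2)$-regular and each step of a walk on it independently moves along a $G_1$-edge (with probability $d_1/(d_1+d_2)$) or a $G_2$-edge. Hence a length-$\ell$ walk from $(u_1,u_2)$ has the same distribution as: draw $\sigma\in\{1,2\}^\ell$ with i.i.d.\ coordinates (equal to $1$ with probability $d_1/(d_1+d_2)$); draw a uniformly random length-$a$ walk $w^{(1)}$ on $G_1$ and a uniformly random length-$(\ell-a)$ walk $w^{(2)}$ on $G_2$ independently, where $a$ is the number of $1$'s in $\sigma$; and set the position at time $i$ to $\bigl(w^{(1)}_{a_i},w^{(2)}_{b_i}\bigr)$, where $a_i,b_i$ count the $1$'s and $2$'s among the first $i$ entries of $\sigma$. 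This is the Cartesian analogue of Proposition~\ref{prop:abStruct}, with the two coordinates acting as two ``super-generators''; regularity is what makes the schedule i.i.d.

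The algorithm $\A_C$ would initialize $\A_1,\A_2$ with error parameter $\ep/2$ and budget $B$, and maintain a sorted list of determined times; for each determined $t_i$ it stores the returned vertex and a pair of \emph{local times} $(a_i,b_i)$ with $a_i+b_i=t_i$, so the coordinate-$1$ step count on $[t_i,t_{i+1}]$ is $a_{i+1}-a_i$. On a new query $t$ with bracketing determined times $t_-<t<t_+$ (with $t_+=\infty$ if unconstrained): in the unidirectional case draw the coordinate-$1$ count $k\la\BNom(t-t_-,d_1/(d_1+d_2))$ of the new segment; in the bidirectional case, with $c=a_+-a_-$ and $L=t_+-t_-$, draw $k\la\HGeom(L,c,t-t_-)$; by Proposition~\ref{prop:sampling} each is sampled to within $\ep/\poly(B)$ in $\ell_1$ in time $\polylog(t,B/\ep)$. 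Then set the local times of $t$ to $\bigl(a_-+k,\,b_-+(t-t_-)-k\bigr)$, query $\A_1$ at local time $a_-+k$ and $\A_2$ at local time $b_-+(t-t_-)-k$, return the resulting pair, and update the stored local-time entries. The space and per-query time bounds are then immediate: $O(\log t)$ bookkeeping per determined time, one distribution sample, and one call to each sub-algorithm at a time parameter at most $t$.

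For correctness I would reason in the generative order in which the algorithm samples. Because the schedule counts are drawn \emph{before} the sub-algorithms are invoked, the joint distribution of (all stored counts, all returned vertices) coincides with that of the decomposition above, up to sampling error: a fresh segment's count is the correct binomial marginal of an i.i.d.\ schedule, and splitting a segment of known coordinate-$1$ count $c$ over $L$ steps produces a prefix count that is exactly $\HGeom(L,c,t-t_-)$, since the number of $1$'s in a prefix of an i.i.d.\ $\{1,2\}$-sequence conditioned on the total number of $1$'s is hypergeometric --- and, crucially, conditioning additionally on the two bracketing vertices does not change this, because in the decomposition the schedule restricted to a segment is independent of the walk realizations given that segment's count. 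Consequently $\A_C$'s transcript is a deterministic function of three mutually independent random objects --- the schedule-count randomness and the internal transcripts of $\A_1,\A_2$ --- applied exactly as on the true side, so by Proposition~\ref{prop:errsadd} the total $\ell_1$ error is at most $\ep/2$ from $\A_1$, plus $\ep/2$ from $\A_2$, plus the accumulated count-sampling error over the at most $B$ queries, which is lower-order $\ep/\poly(B)$ per sample and hence negligible after adjusting the sampling-accuracy constant by a factor depending only on the three error sources.

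The step I expect to be least routine is this generative-order argument for splits --- verifying that the hypergeometric split is the correct conditional distribution \emph{even in the presence of both endpoint constraints} --- together with one robustness subtlety: I need the multiset of local times at which $\A_1$ is queried to be independent of $\A_1$'s internal randomness, so its $\ep/2$-guarantee on any fixed sequence of at most $B$ queries passes to a randomly chosen such sequence by convexity of $\ell_1$. This holds because $\A_C$ chooses local times only from its own count samples and the global query sequence, never from the vertices $\A_1,\A_2$ return; and if the global sequence is itself adaptive, robustness of $\A_1,\A_2$ covers it. Everything else parallels the proofs of Lemma~\ref{alg:tensor} and Theorem~\ref{alg:abelian}.
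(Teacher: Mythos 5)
Your proposal is correct and follows essentially the same route as the paper: sample the number of $G_1$-steps per segment unconditionally via $\BNom$ (fresh segment) or $\HGeom$ (split between bracketing times), then delegate the vertex sampling to $\A_1,\A_2$ at the resulting local times and invoke Proposition~\ref{prop:errsadd}. The only differences are cosmetic bookkeeping (your local-time pairs $(a_i,b_i)$ versus the paper's cumulative counts $s_{t_i}$) and the error split (the paper initializes the sub-algorithms with $\ep/3$ to cover the three sampled distributions, whereas you use $\ep/2$ and absorb the count-sampling error by adjusting constants), plus your explicit justification of the hypergeometric split and the convexity/robustness point, which the paper leaves implicit by reference to Theorem~\ref{alg:abelian}.
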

\begin{proof}
The approach is similar to that of Lemma~\ref{alg:abelian}. Rather than sample a conditional walk at each step, for a unidirectionally constrained walk of length $\ell$ we sample the number of steps on each component, and use these \say{times} as inputs to $\A_1$ and $\A_2$.

Initialize $\A_1,\A_2$ with $\ep=\ep/3$, $B=B$. 
Let $d_1=\deg(G_1)$ and $d_2=\deg(G_2)$. The algorithm maintains a sorted list of previously queried times and positions $T=t_1,\ldots,t_r$, $V_T=v_{t_1},\ldots,v_{t_r}$ where $t_1=0$ and $v_{0}$ is fixed at initialization. In addition, the algorithm maintains $S=s_{t_1},\ldots,s_{t_{r}}$ where $s_i$ is the number of steps on $G_1$ in the interval $[0,t_{i}]$. Given a new query $t$:
\begin{enumerate}
    \item If $t>t_r$ set 
    $s_{t}\la s_{t_r}+\BNom(t-t_r,(d_1/(d_1+d_2),d_2/(d_1+d_2)),\ep/3B).$
    \item Otherwise let $t_-<t<t_{+}$ be the bracketing queries. Set $s_{t}\la s_{t_-}+\HGeom(t-t_-,(s_{t_+}-s_{t_-},t_+-t_-), \ep/3B)$.
\end{enumerate}
Finally set $v_{t}\la (\A_1(s_{t}),\A_2(t-s_{t}))$.

The runtime consists of sampling via Proposition~\ref{prop:sampling}, calling the algorithms for components and writing the output to the tape. The analysis of closeness in distance is nearly identical to that of Theorem~\ref{alg:abelian}, except that we take $\ep\la\ep/3$ as the final vertex is a function of three sampled distributions.
\end{proof}
In an identical manner to the construction of higher tensor powers, composing Lemma~\ref{alg:cartesian} with itself in a binary tree and using Lemma~\ref{alg:CLS-base} as a base case gives local access to the $k$th cartesian product $G^{\square k}$ of a $d$-regular graph $G$ with runtime $O(k\polylog(\ep/B))$, again hiding factors in $|G|$.

\end{document}